\def\doi{8(4:12)2012}
\newcommand{\frM}{\mathfrak{M}}
\newcommand{\frN}{\mathfrak{N}}
\newcommand{\xml}{\textsf{XML}\xspace}
\newcommand{\fo}{\textsf{FO}\xspace}
\newcommand{\mso}{\textsf{MSO}\xspace}
\newcommand{\fodtc}{\textsf{FO(DTC$^1$)}\xspace}
\newcommand{\fotc}{\textsf{FO(TC$^1$)}\xspace}
\newcommand{\folfp}{\textsf{FO(LFP$^1$)}\xspace}
\newcommand{\foatc}{\textsf{FO(ATC$^1$)}\xspace}
\newcommand{\lfp}{\textsf{FO(LFP)}\xspace}
\newcommand{\tc}{\textrm{\sf TC}\xspace}
\newcommand{\finremark}{\hfill$\dashv$}
\newenvironment{proofofclaim}{\begin{trivlist}\item\textit{Proof of claim.}}{\hfill$\dashv$\end{trivlist}}
\tikzstyle{nodo}=[very thick,anchor=north]
\tikzstyle{noeud}=[anchor=north]
\begin{document}

\title[Complete Axiomatizations of Fragments of MSO on Finite Trees]{Complete Axiomatizations of Fragments of Monadic Second-Order Logic on Finite Trees}

\thanks{{\lsuper{a,b}}We are grateful to Jouko V\"a\"an\"anen for helpful comments on an earlier draft. Most of the work was done when the first author was at the ILLC, University of Amsterdam, supported by a GLoRiClass fellowship of the European Commission (Research Training Fellowship MEST-CT-2005-020841) and when the second author was at the ISLA, University of Amsterdam, supported by the Netherlands Organization
for Scientific Research (NWO) grant 639.021.508. The first author also
acknowledges the EPSRC grant EP/G049165/1 and the FET-Open Project FoX, grant agreement
233599. The second author
acknowledges the NSF grant IIS-0905276.}

\author[A.~Gheerbrant]{Am\'{e}lie Gheerbrant\rsuper a}
\address{{\lsuper a}School of Informatics, University of Edinburgh}
\email{agheerbr@inf.ed.ac.uk}

\author[B.~ten Cate]{Balder ten Cate\rsuper b}
\address{{\lsuper b}Department of Computer Science, University of
  California, Santa Cruz}
\email{btencate@ucsc.edu}

\begin{abstract}
We consider a specific class of tree structures that can 
represent basic structures in linguistics and computer science such as \xml
documents, parse trees, and treebanks, namely, finite node-labeled sibling-ordered
trees. We present axiomatizations of the monadic second-order logic (\mso),
monadic transitive closure logic (\fotc) and monadic least fixed-point logic (\folfp) theories of this class of structures. 
These logics can express important properties such as reachability. 
Using model-theoretic techniques, we show by a uniform argument that these axiomatizations are
complete, i.e., each formula that is valid on all finite trees is provable using our axioms.
As a backdrop to our positive results, on arbitrary structures, the logics that we study are known to be
non-recursively axiomatizable.
\end{abstract}

\subjclass{E.1, F.4.1, F.4.3}
\keywords{Trees, Axiomatizations, Completeness Theorems, Fragments of MSO, Henkin semantics,  Ehrenfeucht-Fra\"{\i}ss\'e games, Feferman-Vaught theorems}

\maketitle

\section{Introduction}

In this paper, we develop a uniform method for obtaining
complete axiomatizations of fragments of \mso on finite trees. In
particular, we obtain a complete axiomatization for \mso, \fotc,
and \folfp on finite node-labeled sibling-ordered trees. We take
inspiration from Kees Doets, who proposed in \cite{1987}
complete axiomatizations of \fo-theories in particular on the class of
node-labeled finite trees without sibling-order  
(see Section \ref{sec3}, where we discuss his work in more details). A similar result
for \fo on node-labeled finite trees with sibling order was
shown in \cite{1995} in the context of model-theoretic syntax
and in \cite{07} in the context of \xml query languages. We use the signature of \cite{07}
and extend the set of axioms proposed there to match the richer syntax of the logics we consider.

Finite trees are basic and ubiquitous structures that are of
interest at least to mathematicians, computer scientists
(e.g. tree-structured documents) and linguists (e.g. parse trees). The
logics we study are known to be very well-behaved on this
particular class of structures and to have an interestingly high
expressive power. In particular, they all allow to express
reachability, but at the same time, they have the advantage of
being decidable on trees.

As \xml documents are tree-structured data, our results are
relevant to \xml query languages. Declarative query languages fro both relational and \xml data are
based on logical languages. In \cite{1376952} and \cite{GottlobKoch},
\mso and \fotc have been proposed as a yardstick of expressivity
of navigational query languages for \xml. It is known that \folfp has the same
expressive power as \mso on trees, but the translations between
the two are non-trivial, and hence it is not clear whether an
axiomatization for one language can be obtained from an
axiomatization for the other language in any straightforward way.
One important and well-studied problem for XML query languages, as well as for
database query languages in general,  is
\emph{query optimization}. Typically, a query can be expressed in many
equivalent ways, and the execution time of a query depends strongly on
the way it is expressed. A common approach to database query optimization is
by means of a set of rewrite rules, allowing one to
transform a query expression into another equivalent one, together with
a cost model that predicts the execution time of a query expression on
a given database \cite{DBLP:books/aw/AbiteboulHV95}.
 In~\cite{CateMarxXPath20}, a sound and complete set of
rewrite rules for the XML path language Core XPath 2.0 was obtained
from a complete axiomatization of the first-order theory of finite trees,
exploiting the fact that Core XPath 2.0 is expressively complete for
first-order logic. We expect that the results we present here can be
used in order to obtain sound and complete sets of rewrite rules for
dialects of Core XPath that are expressively complete for \fotc and for
\mso, such as the ones presented in \cite{TenCate2006,1376952}.

In applications to computational linguistics, finite trees are
used to represent the grammatical structure of natural language
sentences. In the context of \emph{model theoretic syntax}, Rogers
advocates in \cite{521965} the use of \mso in order to
characterize derivation trees of context free grammars. Kepser
also argues in \cite{1047003} that \mso should be used
in order to query treebanks. A treebank is a text corpus in which
each sentence has been annotated with its syntactic structure
(represented as a tree structure). In \cite{2006} and \cite{1219706} Kepser and
Tiede propose to consider various transitive closure logics,
among which \fotc, arguing that they constitute very natural
formalisms from the logical point of view, allowing concise and intuitive phrasing of parse tree properties.\\

The remainder of the paper is organized as follows: in Section \ref{sec1} we present the concept of finite tree and the logics we are interested in together with their standard interpretation. 
Section \ref{sec2} merely state our three
axiomatizations. In Section \ref{sec3}, we introduce non standard semantics called \textit{Henkin
semantics} for which our axiomatizations are easily seen to be
complete. We prove in detail the \folfp Henkin completeness proof. Section \ref{sec4} introduces operations on Henkin structures: substructure formation and a general operation of Henkin structures combination. We obtain Feferman-Vaught theorems for this operation by means of Ehrenfeucht-Fra\"{\i}ss\'e games. This section contains in particular the definitions and adequacy proofs of the Ehrenfeucht-Fra\"{\i}ss\'e games that we also use there to prove our Feferman-Vaught theorems.
In Section \ref{sec5}, we prove \textit{real} completeness (that is, on
the more restricted class of finite trees).
For that purpose, we consider substructures of trees
that we call forests and use the general operation discussed in Section \ref{sec4}
to combine a set of forests into one new forest. Our Feferman-Vaught theorems apply to such
constructions and we use them in our main proof of completeness, showing
that no formula of our language can distinguish Henkin models of
our axioms from real finite trees.
We also point out that every standard model of our axioms actually is a finite tree. 
Finally, we notice in Section \ref{secisftseclo} that a simplified version of our method can be used to show similar results for the class of node-labeled finite linear orders.

\section{Preliminaries}
\label{sec1}

\subsection{Finite Trees}

A tree is a partially ordered set with a unique element called \emph{the root} and such that apart from the root, each element (or \emph{node}) has one unique immediate predecessor. We are interested in
\emph{finite node-labeled sibling-ordered trees}: finite trees in which the
children of each node are linearly ordered. Also, the nodes can be
labeled by unary predicates. We will call these structures
\emph{finite trees} for short.
\begin{defi}[Finite tree]
  Let $\{P_1, \ldots,
  P_n\}$ be a fixed finite set of unary predicate symbols.  By a finite tree, we mean a finite structure
  $\frM=(M,<,\prec,P_1, \ldots, P_n)$, where $(M,<)$ is a tree (with
  $<$ the descendant relation) and $\prec$ linearly orders the
  children of each node.
\end{defi}

\subsection{Three Extensions of First-Order Logic}
\label{seclog}
In this section, we introduce three extensions of \fo : \mso, \fotc and \folfp. In the remainder of the paper (unless explicitly
stated otherwise), we will always be working with a fixed purely relational vocabulary $\sigma$ (i.e. with no individual constant or function symbols) and hence, with $\sigma$-structures. We assume as usual that we have a countably infinite set of first-order variables. In the case of
\mso and \folfp, we also assume that we have a countably infinite set of set variables. The semantics defined in this section we will refer to as
\textit{standard semantics} and the associated structures, as \textit{standard structures}.

We first introduce monadic second order logic, \mso, which is the
extension of first-order logic in which we can quantify over the subsets of the
domain.
\begin{defi}
[Syntax and semantics of \mso]
    Let $At$ stand for a first-order atomic formula (of the form
    $R(\vec{x}))$, $x=y$, or $\top$), $x$ a first-order variable and $X$ a set
    variable. The set of \mso formulas is given by the following
    recursive definition:
$$\varphi:= At ~|~ Xx ~|~ \varphi \wedge \psi ~|~ \varphi \vee \psi ~|~ \varphi
\rightarrow \psi ~|~ \neg \varphi ~|~ \exists x~ \varphi ~|~ \exists X~
\varphi ~$$

We use $\forall X \varphi$ (resp. $\forall x \varphi$) as shorthand for
$\neg \exists X \neg \varphi$ (resp. $\neg \exists x \neg \varphi$). We
define the \textit{quantifier depth} of a \mso formula as the maximal
number of first-order and second-order nested quantifiers.
We interpret \mso formulas in first-order structures.
Like for \fo formulas, the truth of \mso
formulas in $\frM$ is defined modulo a valuation $g$ of variables
as objects. But here, we also have set variables,
to which $g$ assigns
subsets of the domain. We let $g[a/x]$ be the
assignment which differs from $g$ only in assigning $a$ to $x$ (similarly for $g[A/X]$).
The truth of atomic formulas is defined by the usual
\fo clauses plus the following:
$$\frM, g \models Xx\text{ iff }g(x) \in g(X)\text{ for }X\text{ a set variable}$$

The truth of compound formulas is defined by induction, with the
same clauses as in \fo and an additional one:
\begin{center}
$\frM, g \models \exists X \varphi$ iff there is $A \subseteq M$ such
that $\frM, g[A/X] \models \varphi$
\end{center}
\end{defi}

The second logic we are interested in is monadic transitive
closure logic, \fotc, which extends \fo by closing it under the
transitive closure of binary definable relations.

\begin{defi}
[Syntax and semantics of \fotc]
    Let $At$ stand for a first-order atomic formula  (of the form
    $R(\vec{x}))$, $x=y$, or $\top$), $u,v,x,y$ first-order variables and $\varphi(x,y)$ a \fotc formula (which, besides $x$ and
    $y$, possibly contains other free variables). The set of \fotc
    formulas is given by the following recursive definition:
$$\varphi:= At ~|~ Xx ~|~ \varphi \wedge \psi ~|~ \varphi \vee \psi ~|~ \varphi
\rightarrow \psi ~|~ \neg \varphi ~|~ \exists x~ \varphi ~|~ [TC_{xy}
\varphi(x,y)](u,v) ~$$
We use $\forall x \varphi$ as shorthand for $\neg \exists x \neg \varphi$.
We define the \textit{quantifier depth} of a \fotc formula as the maximal
number of nested first-order quantifiers and $TC$ operators.
We interpret \fotc formulas in first-order structures. The notion of assignment and the truth of atomic formulas is defined as in \fo. The truth of compound formulas is defined by induction, with the same clauses as in \fo and an additional one:
\begin{center}
$\frM, g \models [TC_{xy} \varphi](u,v)$\\
iff\\
for all $A
\subseteq M$, if $g(u) \in A$\\
and for all $a, b \in M$, $a\in A$ and $\frM, g[a/x,b/y]
\models \varphi(x,y)$ implies $b\in A$,\\
then $g(v)\in A$.
\end{center}
\label{deftc}
\end{defi}

\begin{prop}
On standard structures, the following semantical clause for the $TC$ operator is equivalent to the one given above:
\begin{center}
$\frM, g \models [TC_{xy}\varphi(x,y)](u,v)$\\
iff\\
there exist $a_1 \ldots a_n \in M$ with $g(u)=a_1$ and $g(v)=a_n$\\
and $\frM, g \models \varphi(a_i,a_{i+1})$ for all $0<i<n$
\end{center}
\label{tcaltsem}
\end{prop}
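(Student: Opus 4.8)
The plan is to prove the two implications separately; notably, the argument uses nothing about finiteness, only that the set quantifier in Definition~\ref{deftc} ranges over \emph{all} subsets of the domain, which is exactly what ``standard semantics'' guarantees (this is the one place where standard, as opposed to Henkin, semantics matters).

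For the implication from the path clause to the original clause (Definition~\ref{deftc}), I would assume there exist $a_1, \ldots, a_n \in M$ with $a_1 = g(u)$, $a_n = g(v)$ and $\frM, g[a_i/x, a_{i+1}/y] \models \varphi(x,y)$ for all $0 < i < n$. Fix an arbitrary $A \subseteq M$ with $g(u) \in A$ and such that $a \in A$ together with $\frM, g[a/x, b/y] \models \varphi(x,y)$ implies $b \in A$. A routine induction on $i$ then gives $a_i \in A$ for all $1 \le i \le n$: the base case is $a_1 = g(u) \in A$, and each inductive step is a single application of the closure assumption to $a_i$ and $a_{i+1}$. Hence $g(v) = a_n \in A$, and since $A$ was arbitrary the clause of Definition~\ref{deftc} holds. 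The degenerate case $n = 1$, i.e.\ $g(u) = g(v)$, is included, since every $A$ containing $g(u)$ contains $g(v)$.

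For the converse, I would assume $\frM, g \models [TC_{xy}\varphi](u,v)$ in the sense of Definition~\ref{deftc} and take $A_0$ to be the set of points $\varphi$-reachable from $g(u)$, that is, $A_0 = \{\, b \in M : \exists a_1, \ldots, a_n \in M \text{ with } a_1 = g(u),\, a_n = b \text{ and } \frM, g[a_i/x, a_{i+1}/y] \models \varphi(x,y) \text{ for all } 0 < i < n \,\}$. Then $g(u) \in A_0$ via the one-element path, and $A_0$ is closed under $\varphi$-steps, since a witnessing path for $a \in A_0$ extended by $b$ witnesses $b \in A_0$ whenever $\frM, g[a/x,b/y] \models \varphi(x,y)$. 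Thus $A_0$ is one of the sets over which Definition~\ref{deftc} quantifies, so $g(v) \in A_0$, which is precisely the path clause. Both directions being elementary, there is no real obstacle; the only points needing a little care are the handling of one-element paths (so that the two clauses agree on the reflexive case $g(u) = g(v)$) and the observation in the converse direction that $A_0$, the least closed superset of $\{g(u)\}$, is genuinely available as a witness.
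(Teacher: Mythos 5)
Your proof is correct and follows essentially the same route as the paper's: the forward direction is the same induction along the witnessing path, and the converse uses the same key witness (the set of $\varphi$-reachable points from $g(u)$), merely phrased directly rather than contrapositively as in the paper.
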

\begin{proof}
Indeed, suppose there is a finite sequence of points $a_1\ldots
a_n$ such that $g(u)=a_1$, $g(v)=a_n$, and for each $i<n$,
$\frM,g[x/a_i;y/a_{i+1}]\models \varphi(x,y)$. Then for any subset
$A$ closed under $\varphi$ and containing $a_1$, we can show by
induction on the length of the sequence $a_1 \ldots a_n$ that
$a_n$ belongs to $A$. Now, on the other hand, suppose that there
is no finite sequence like described above. To show that there is
a subset $A$ of the required form, we simply take $A$ to be the
set of all points that can be reached from $u$ via $\varphi$ by a
finite sequence. By assumption, $v$ does not belong to this set
and the set is closed under $\varphi$.
\end{proof}

Intuitively this means that for a formula of the form $[\tc_{xy}
\varphi](u,v)$ to hold on a standard structure, there must be a
\emph{finite} ``$\varphi$ path'' between the points that are named
by the variables $u$ and $v$.

Finally we will also be interested in monadic least fixed-point logic
\folfp, which extends \fo with set variables and an explicit
monadic least fixed point operator. Consider a \folfp-formula
$\varphi(X,x)$ and a structure $\frM$ together with a valuation $g$.
This formula induces an operator $F_\varphi$ taking a set $A
\subseteq dom(\frM)$ to the set $\{a:\frM, g[a/x,A/X] \models
\varphi\}$. \folfp is concerned with \emph{least fixed points} of such
operators. If $\varphi$ is positive in $X$ (a formula is positive in
$X$ whenever $X$ only occurs in the scope of an even number of
negations), the operator $F_\varphi$ is monotone (i.e. $X \subseteq
Y$ implies $F_\varphi(X) \subseteq F_\varphi(Y)$). Monotone operators
always have a least fixed point $LFP(F)=\bigcap \{X|F(X)\subseteq X\}$
(defined as the intersection of all
their prefixed points).

\begin{defi}
[Syntax and semantics of \folfp]
    Let $At$ stand for a first-order atomic formula  (of the form
    $R(\vec{x}))$, $x=y$, or $\top$), $X$ a set variable, $x,y$ \fo-variables, $\psi$, $\xi$ \folfp-formulas and $\varphi(x,X)$
    a \folfp-formula positive in $X$ (besides $x$ and $X$, $\varphi(x,X)$ possibly contains other free variables).
    The set of \folfp formulas is given by the following recursive definition:
$$\psi:= At ~|~ Xy ~|~ \psi \wedge \xi ~|~ \psi \vee \xi ~|~ \psi
\rightarrow \xi ~|~ \neg \psi ~|~ \exists x~ \psi ~|~
[LFP_{Xx}\varphi(x,X)]y ~$$
We use $\forall x \psi$ as shorthand for $\neg \exists x \neg
\psi$. We define the \textit{quantifier depth} of a \folfp-formula as the
maximal number of nested first-order quantifiers and $LFP$ operators.
Again, we can interpret \folfp-formulas in first-order structures. The
notion of assignment and the truth of atomic formulas are defined similarly
as in the \mso case. The truth of compound formulas is defined by induction,
with the same clauses as in \fo and an additional one:
\begin{center}
$\frM, g \models [LFP_{Xx}\varphi]y$\\
iff\\
for all $A \subseteq dom(\frM)$, if for all $a \in dom(\frM)$, $\frM, g[a/x,A/X]
\models \varphi(x,X)$ implies $a \in A$,\\
then $g(y)\in A$.
\end{center}
\end{defi}

\begin{rem}
In practice we will use an equivalent (less intuitive but often
more convenient) rephrasing:
\begin{center}
$\frM, g \models [LFP_{Xx}\varphi]y$\\
iff\\
for all $A \subseteq dom(\frM)$, if $g(y) \notin A$,\\
then there exists $a \in dom(\frM)$ such that $a \notin A$ and $\frM, g[a/x,A/X] \models \varphi(x,X)$.
\end{center}
\end{rem}

\subsection{Expressive Power}
\label{exppow}

There is a recursive procedure, transforming any \folfp-formula
$\varphi$ into a \mso-formula $\varphi'$ such that $\frM, g \models
\varphi$ iff $\frM, g \models \varphi'$. The interesting clause is
$([LFP_{Xx}\varphi(x,X)]y)'=\forall X (\forall x (\varphi(x,X)'
\rightarrow Xx) \rightarrow Xy)$. (The other ones are all of the same type, e.g. $(\varphi \wedge \psi)'=(\varphi' \wedge \psi')$.) This procedure can easily be seen adequate by considering the semantical clause for the $LFP$ operator.

Now there is also a recursive procedure transforming any \fotc formula
$\varphi$ into a \folfp formula $\varphi''$ such that $\frM, g \models
\varphi$ iff $\frM, g \models \varphi''$. The interesting clause is
$([TC_{xy} \varphi](u,v))''=[LFP_{Xy} y=u \vee \exists x((Xx \wedge \varphi(x,y)''))]v$.
Let us give an argument for this claim. By Proposition \ref{tcaltsem} it is
enough to show that $[LFP_{Xy} y=u \vee \exists x(Xx \wedge
\varphi(x,y)'')]v$ holds if and only if there is a finite
$\varphi''$ path from $u$ to $v$. For the right to left direction,
suppose there is such a path $a_1 \ldots a_n$ with $g(u)=a_1$ and
$g(v)=a_n$. Then, for any subset $A$ of the domain, we can show by
induction on $i$ that if for all $a_i$ ($1 \leq i \leq n)$, $a_i=u
\vee \exists x((Ax \wedge \varphi(x,a_i)'')$ implies $a_i \in A$,
then $v \in A$, i.e., $[LFP_{Xy} y=u \vee \exists x((Xx \wedge
\varphi(x,y)''))]v$ holds. Now for the left to right direction,
suppose there is no such $\varphi''$ path. Consider the set $A$ of
all points that can be reached from $u$ by a finite $\varphi''$
path. By assumption, $\neg Av$ and it holds that $\forall y ((y=u
\vee \exists x(Ax \wedge \varphi(x,y)''))\rightarrow Ay)$, i.e.,
$\neg[LFP_{Xy} y=u \vee \exists x(Xx \wedge \varphi(x,y)'')]v$.

It is known that on arbitrary structures $\fotc<\folfp<\mso$ (see \cite{ebfl95} or \cite{leo}) and on trees $\fotc<_{trees}\folfp=_{trees}\mso$ (see \cite{1376952} and \cite{DBLP:conf/icalp/Schweikardt04}). It is also known that the (not \fo definable) class of finite trees is already definable in \fotc (see for instance \cite{2006}), which is the weakest of the logics studied here. We provide additional detail in Section \ref{sec53}.

\section{The Axiomatizations}
\label{sec2}

\begin{figure}[!h]
\hrule\smallskip
\begin{tabular}{@{}ll@{\hspace{14mm}}l@{}}
FO1. & $\vdash\phi$, whenever $\phi$ is a propositional tautology \\
FO2. & $\vdash \forall x \varphi \rightarrow \varphi^x_t$, where $t$ is substitutable for $x$ in $\varphi$\\
FO3. & $\vdash \forall x (\varphi \rightarrow \psi) \rightarrow (\forall x \varphi \rightarrow \forall x \psi)$\\
FO4. & $\vdash \varphi \rightarrow \forall x \varphi$, where $x$ does not occur free in $\varphi$\\
FO5. & $\vdash x=x$\\
FO6. & $\vdash x=y \rightarrow (\varphi \rightarrow \psi)$, where $\varphi$ is atomic and $\psi$ is obtained\\
 & from $\varphi$ by replacing $x$ in zero or more (but not necessarily\\
 & all) places by $y$.\\
Modus Ponens & if $\vdash\varphi$ and $\vdash \varphi \rightarrow \psi$, then $\vdash\psi$\\
FO Generalization & if $\vdash \varphi$, then $\vdash \forall x \varphi$\\
\end{tabular}
\smallskip\hrule
\caption{Axioms and rules of \fo} \label{fig:FO-axioms}
\end{figure}

\begin{figure}[!h]
\hrule\smallskip
\begin{tabular}{@{}ll@{\hspace{14mm}}l@{}}
COMP. & $\vdash \exists X \forall x (Xx \leftrightarrow \varphi)$, where $X$ does not occur free in $\varphi$\\
MSO1. & $\vdash \forall X \varphi \rightarrow \varphi[X/T]$, where $T$ (which is either a set variable\\
 & or a set predicate) is substitutable in $\varphi$ for $X$.\\
MSO2. & $\vdash \forall X (\varphi \rightarrow \psi) \rightarrow (\forall X \varphi \rightarrow \forall X \psi)$\\
MSO3. & $\vdash \varphi \rightarrow \forall X \varphi$, where $X$ does not occur free in $\varphi$\\
MSO Generalization & if $\vdash \varphi$, then $\vdash \forall X \varphi$\\
\end{tabular}
\smallskip\hrule
\caption{Axioms and inference rule of \mso} \label{fig:MSO-axioms}
\end{figure}

\begin{figure}[!h]
\hrule\smallskip
\begin{tabular}{@{}ll@{\hspace{14mm}}l@{}}
\fotc & $\vdash [TC_{xy} \varphi](u,v) \rightarrow ((\psi(u) \land \forall x \forall y (\psi(x)\land\varphi(x,y)\to \psi(y))) \to \psi(v))$\\
axiom & where $\psi$ is any \fotc formula\\[2mm]
\fotc & if $\vdash \xi \rightarrow ((P(u) \land \forall x \forall y (P(x)\land\varphi(x,y)\to P(y))) \to P(v))$,\\
 Genera- & and $P$ does not occur in $\xi$,\\
lization & then $\vdash \xi \rightarrow [TC_{xy}\varphi](u,v)$\\
\end{tabular}
\smallskip\hrule
\caption{Axiom and inference rule of \fotc} \label{fig:fotc-axioms}
\end{figure}

\begin{figure}[!h]
\hrule\smallskip
\begin{tabular}{@{}ll@{\hspace{14mm}}l@{}}
\folfp & $\vdash [LFP_{Xx}\varphi]y \rightarrow (\forall x (\varphi(x,\psi)\rightarrow \psi(x)) \rightarrow \psi(y))$\\
axiom & where $\psi$ is any \folfp formula and $\varphi(x,\psi)$ is the result\\
 & of the replacement in $\varphi(x,X)$ of each occurrence of $X$ by $\psi$\\
 & (renaming variables when needed)\\[2mm]
\folfp & if $\vdash \xi \rightarrow (\forall x (\varphi(x,P) \rightarrow P(x)) \rightarrow P(y))$,\\
 Generalization & and $P$ positive in $\varphi$ does not occur in $\xi$,\\
  & then $\vdash \xi \rightarrow [LFP_{Xx} \varphi](y)$\\
\end{tabular}
\smallskip\hrule
\caption{Axiom and inference rule of \folfp} \label{fig:folfp-axioms}
\end{figure}

\begin{figure}[!h]
\hrule\smallskip
\begin{tabular}{@{}lll@{}}
T1. & $\forall x \forall y \forall z(x<y \land y<z \to x<z)$ & $<$ is transitive \\[2mm]
T2. & $\neg \exists x (x< x)$ & $<$ is irreflexive \\[2mm]
T3. &$\forall x \forall y(x<y \to \exists z (x<_{ch}z \land z\leq y))$ & immediate child\\[2mm]
T4. &$\exists x \forall y (x \leq y)$ & there is a unique root \\[2mm]
T5. &$\forall x \forall y \forall z (x < z \land y < z \to x \leq y \lor y \leq x)$ &
  linearly ordered  branches\\[2mm]
T6. &$\forall x \forall y \forall z(x\prec y \land y\prec z \to x\prec z)$ & $\prec$ is transitive\\[2mm]
T7. &$\neg \exists x (x\prec  x)$ & $\prec $ is irreflexive \\[2mm]
T8. &$\forall x \forall y(x\prec y \to \exists z(x\prec_{ns}z \land z\preceq y))$ & immediate next sibling\\[2mm]
T9. &$\forall x\exists y(y\preceq x \land \neg \exists z(z\prec y))$ &
  there is a least sibling\\[2mm]
T10. & $\forall x \forall y((x\prec y \lor y \prec x)\leftrightarrow
            (\exists z(z<_{ch}x \land z<_{ch}y) \land x\neq y))$ &
   linearly ordered siblings\\[2mm]
Ind. &  $\forall x(\forall y((x <y \lor x\prec y) \to \varphi(y)) \to
\varphi(x)) \to \forall x\varphi(x)$& induction scheme\\
\\
where\!\!\!\!\!\!\!\\
& $\varphi(x)$ ranges over $\Lambda$-formulas in one free variable $x$, \\
and \\
 & \multicolumn{2}{l}{$x<_{ch}y$ is shorthand for $x<y \land \neg
  \exists z(z<y\wedge x<z)$,} \\
  & \multicolumn{2}{l}{$x\prec_{ns}y$ is shorthand for $x \prec y \land \neg \exists z(x\prec z\wedge z\prec y)$} \\
\end{tabular}
\smallskip\hrule
\caption{Specific axioms on finite trees} \label{fig:tree-axioms}
\end{figure}

As many arguments in this paper equally hold for \mso, \fotc and
\folfp, we let $\Lambda \in \{\mso, \fotc, \folfp\}$ and use
$\Lambda$ as a symbol for any one of them. The axiomatization of
$\Lambda$ on finite trees consists of three parts: the axioms of
first-order logic, the specific axioms of $\Lambda$, and the
specific axioms on finite trees.

To axiomatize \fo, we adopt the infinite set of
logical axioms and the two rules of inference given in Figure
\ref{fig:FO-axioms} (like in \cite{Enderton}, except from the fact
that we use a generalization rule). Here, as in \cite{Enderton}, by a 
propositional tautology, we mean a formula can be obtained
from a valid propositional formula (also known as the 
sentential calculus) by uniformly substituting formulas for the 
proposition letters). Alternatively, FO1 may be replaced by a 
complete set of axioms for propositional logic.
To axiomatize \mso, the
axioms and rule of Figure \ref{fig:MSO-axioms} are added to the
axiomatization of \fo and we call the resulting system $\vdash_\mso$.
COMP stands for ``comprehension'' by analogy with the
comprehension axiom of set theory. MSO1 plays a similar role as
FO2, MSO2 as FO3 and MSO3 as FO4. To axiomatize \fotc,
the axiom and rule of Figure \ref{fig:fotc-axioms} are added to
the axiomatization of \fo and we call the resulting system
$\vdash_\fotc$. To axiomatize \folfp, the axiom and rule of Figure
\ref{fig:folfp-axioms} are added to the axiomatization of \fo and we
call the resulting system $\vdash_\folfp$. We are interested in
axiomatizing $\Lambda$ on the class of finite trees. For that
purpose, we restrict the class of considered structures by adding
to $\vdash_\Lambda$ the axioms given in Figure
\ref{fig:tree-axioms} and we call the resulting system
$\vdash_\Lambda^{tree}$. Note that the induction scheme in Figure
\ref{fig:tree-axioms} allows to reason by induction on \emph{properties definable in $\Lambda$} only.

\begin{prop}\label{prop:define-finite}
A finite structure $\mathfrak{M}=(M,<,\prec,P_1, \ldots, P_n)$ satisfies the axioms T1--T10 if and only
if $\mathfrak{M}$ is a finite tree.
\end{prop}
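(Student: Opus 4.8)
The plan is to prove the two implications separately, the implication ``axioms $\Rightarrow$ finite tree'' being the substantive one. For the converse, given an arbitrary finite tree $\frM=(M,<,\prec,P_1,\ldots,P_n)$, I would verify T1--T10 one at a time. Transitivity and irreflexivity of $<$ (T1, T2) and of $\prec$ (T6, T7) are immediate, since $<$ is the strict descendant order of the tree and $\prec$ is a disjoint union of strict linear orders on sibling sets. Axioms T4 and T5 restate the defining features of a tree, namely the unique root and the linear ordering of ancestor sets. Axiom T3 holds because the descendant relation is the transitive closure of the immediate-child relation $<_{ch}$, so the first step of any child-path from $x$ down to $y$ witnesses the existential. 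Axioms T8 and T9 hold because each sibling set, ordered by $\prec$, is a \emph{finite} linear order and thus has immediate successors and a $\prec$-least element; for T9 one also uses that the root, being a child of nothing, is $\prec$-isolated, so when $x$ is the root $x$ itself is the required witness. Finally T10 is the observation that in a tree two nodes are $\prec$-comparable exactly when they are distinct children of a common node. The only point needing a little care here is the treatment of edge cases around the root.

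For the main direction, assume $\frM$ is finite and satisfies T1--T10. By T1 and T2, $<$ is a strict partial order, so $\leq$ (``$<$ or $=$'') is a partial order; by T4 there is an element $r$ with $r\leq y$ for all $y$, and any two such elements coincide by antisymmetry, so $r$ is a unique root. Next, fix any $x\neq r$ and let $D_x=\{w\in M:w<x\}$; this set is nonempty (it contains $r$), finite, and linearly ordered by $<$ thanks to T5, hence it has a $<$-greatest element $p$. Then $p$ is an immediate predecessor of $x$ (any $w$ with $p<w<x$ would lie in $D_x$ and exceed $p$), and it is the only one, since any immediate predecessor $p'$ lies in $D_x$, whence $p'\leq p$, and $p'<p$ would witness that $p'$ is not immediate. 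So every node other than $r$ has a unique immediate predecessor. It remains to see that $<$ really is the descendant relation, i.e.\ the transitive closure of $<_{ch}$: one inclusion is T1, and for the converse, given $x<y$, axiom T3 supplies a child $z$ of $x$ with $z\leq y$; iterating on $z<y$ terminates because $\frM$ is finite, producing a $<_{ch}$-chain from $x$ to $y$. Thus $(M,<)$ is a tree. Finally, T6 and T7 make $\prec$ a strict partial order, and by T10 the relation $x\prec y\lor y\prec x$ holds precisely for distinct $x,y$ sharing a parent, so $\prec$ restricts to a strict total order on the children of each node and relates nothing else; hence $\frM$ is a finite tree.

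I expect the whole argument to be essentially routine bookkeeping. The one place that calls for a moment's thought is checking that the $<$-greatest element of $D_x$ is genuinely the \emph{unique immediate} predecessor of $x$, and that $<$ coincides with the transitive closure of $<_{ch}$; this is exactly where finiteness is used, together with T5 and T3. It is worth emphasizing that finiteness is essential to the statement: T1--T10 do not characterize trees among arbitrary (infinite) structures.
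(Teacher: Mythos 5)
Your proof is correct and follows essentially the same route as the paper's: T1, T2, T4, T5 yield the tree structure of $<$ (with finiteness used to extract the unique immediate predecessor and to identify $<$ with the transitive closure of $<_{ch}$), while T6, T7, T10 handle the sibling order. The paper's own proof is only a three-line sketch asserting these facts (and remarking in passing that T3, T4, T8, T9 are redundant over the others on finite structures), so your fuller verification --- in particular the explicit appeal to T4 to obtain the root --- is simply a more careful rendering of the same argument.
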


\begin{proof}
  It follows from the truth of T1, T2 and T5 that $(M,<)$ is a tree.
  Note that T3 and T4 are valid consequences of T1, T2, T5 on finite
  structures.  Furthermore, T6, T7 and T10 imply that $\prec$
  linearly orders the children of each node (and that $\prec$ only
  relates to each other nodes that are siblings). Note again that
  T8 and T9 follow from T6, T7 and T10 on finite structures.
\end{proof}

In fact, as we will see later, cf. Theorem~\ref{thm:define}, the
axioms T1--T10, together with the induction scheme Ind for
$\Lambda$-formulas (where $\Lambda\in\{\mso, \fotc, \folfp\}$) define
the class of finite trees.

We refer for basic definitions (e.g., \emph{proof} by which we mean \emph{formal deduction}, or \emph{axiomatization} by which we mean \emph{deductive calculus}) to \cite{Enderton} and sometimes only sketch or even omit classical arguments. E.g., we assume the notion of \emph{being substitutable} in a formula
to be clear for both objects and set variables. For details on such
basic notions and technics, we refer to the material extensively
developed in \cite{Enderton} and in particular, to the proof of the \fo
completeness theorem presented there. The Henkin completeness proofs
provided in Section \ref{sec3} are built on this classical material. 

We end this Section by spelling out some definitions that are specific to our paper.

\begin{defi}
We say that a $\Lambda$-formula $\varphi$ is \emph{$\Lambda$-provable} if $\vdash_\Lambda \varphi$
occurs (as the last line) in some $\Lambda$-proof and we say that it is \emph{$\Lambda$-consistent} if its negation is not $\Lambda$-provable. 

Let $\Gamma$ be a set of $\Lambda$-formulas and $\varphi$ a
$\Lambda$-formula. By $\Gamma \vdash_\Lambda \varphi$ we will
always mean that there are $\psi_1,\ldots, \psi_n \in \Gamma$ such
that $\vdash_\Lambda (\psi_1 \wedge \ldots \wedge \psi_n)
\rightarrow \varphi$. Whenever $\Gamma \vdash_\Lambda \neg \varphi$
does not hold, we say that $\varphi$ is $\Gamma$-consistent. We say that $\Gamma$ is $\Lambda$-consistent if $\top$ is $\Gamma$-consistent.
Finally, we say that $\Gamma$ is a \emph{maximal consistent set} of $\Lambda$-formulas if 
$\Gamma$ is consistent, and for each formula $\phi\in\Lambda$, either
$\phi\in\Gamma$ or $\neg\phi\in\Gamma$.

\label{defax}
\end{defi}

Now the main result of this paper is that on standard structures,
the $\Lambda$ theory of finite trees is completely axiomatized by
$\vdash_\Lambda^{tree}$. In the remaining sections we will
progressively build a proof of it.

\section{Henkin Completeness}
\label{sec3}

As it is well known, \mso, \fotc and \folfp are highly undecidable on arbitrary standard
structures and hence not recursively enumerable (by arbitrary, we mean 
when there is no restriction on the interpretation of the relation
symbols  from the signature, unlike in the case of, e.g., trees). 
So in order to show that our axiomatizations $\vdash_\Lambda^{tree}$ are complete on finite trees, we refine a trick used by
Kees Doets in his PhD thesis \cite{1987}. We proceed in two steps (the second step being the one inspired by Kees Doets). First, we
show completeness theorems, based on a non-standard (so
called Henkin) semantics for \mso, \fotc and \folfp (on the general topic of Henkin semantics, see \cite{1950}, the original paper by Henkin and also \cite{230876}). Each
semantics respectively extends the class of standard structures with
non standard (Henkin) \mso, \fotc and \folfp-structures. By the
Henkin completeness theorems, our axiomatic systems
$\vdash_\Lambda^{tree}$ naturally turn out to be complete on the
wider class of their Henkin-models. But we will see that compactness also follows from these completeness results and some of these Henkin
models are infinite. As a second step, we show in Section \ref{sec5} that \emph{no
$\Lambda$-sentence can distinguish between standard and non-standard $\Lambda$-Henkin-models among models of our axioms}. Every finite Henkin model being also a standard model, this entails that our axioms are complete on the class of (standard) finite trees, i.e., each $\Lambda$-sentence valid on this class is provable using the system
$\vdash_\Lambda^{tree}$. 

Now let us point out that Kees Doets was interested in complete axiomatizations of monadic ``$\Pi_1^1$-theories'' of various classes of
linear orders and trees. Considering such theories in fact amounts to considering \emph{first-order} theories of such structures extended with finitely many unary predicates. Thus, he was relying on the \fo completeness theorem and if he was working with non-standard models of particular \fo-theories, he was not concerned with non standard Henkin-structures in our sense. In particular, he used Ehrenfeucht-Fra\"{i}ss\'{e} games in order to show that ``definably well-founded'' node-labeled trees have well-founded $n$-equivalents for all $n$. 
In Section \ref{sectproof}, Lemma \ref{lemma3}, which is the key lemma to our main completeness result, establishes a similar result for definably well-founded Henkin-models of the $\Lambda$-theory of finite node-labeled sibling-ordered finite trees. Hence, what makes the originality of the method developed in this paper is its use of Henkin semantics: we first create a Henkin model and then ``massage'' it in order to obtain a model that is among our intended ones. Similar methods are commonly used to show completeness results in modal logic, where ``canonical models'' are often transformed in order to obtain intended models (see \cite{modal}). Remarkably, the completeness proof for the $\mu$-calculus on finite trees given in \cite{DBLP:conf/fossacs/CateF10}, which is directly inspired by the methods used here, proceeds in that way. There are numerous examples of that sort in modal logic (and especially, in temporal logic), but there is also one notable example in classical model theory. In 
1970, Keisler provided a complete axiomatization of \fo extended with the quantifier ``there exist uncountably many'' (see \cite{keisler}). His completeness proof, which is established for standard models, is surprisingly simple, it relies on the construction of an elementary chain of Henkin structures and then uses the omitting types theorem. Hence all in all, these structures seem to provide a particularly convenient tool, not only for simple Henkin completeness proofs, but also for more refined completeness proofs with respect to interesting subclasses of Henkin models like standard models. 

Let us now introduce Henkin structures formally. Such structures are
particular cases among structures called \emph{frames} (note that such frames are unrelated to ``Kripke frames'') and it is convenient to define frames before defining Henkin-structures. In our case, a frame is simply a relational structure together with some subset of the powerset of its domain called its \emph{set of admissible subsets}. A Henkin structure is a frame whose set of admissible subsets satisfies some natural closure conditions.

\begin{defi}
[Frames]
    Let $\sigma$ be a purely relational vocabulary. A \textbf{$\sigma$-frame} $\frM$ consists of a non-empty domain
    $dom(\frM)$, an interpretation in $dom(\frM)$ of the predicates in $\sigma$ and a set of admissible subsets $\mathbb{A}_\frM \subseteq \wp(dom(\frM))$.
\end{defi}
Whenever $\mathbb{A}_\frM = \wp(dom(\frM))$, $\frM$ can be identified to a standard structure. Assignments $g$ into $\frM$ are defined as in standard semantics, except that if $X$ is a set variable, then we require that $g(X) \in
    \mathbb{A}_\frM$.
\begin{defi}
[Interpretation of $\Lambda$-formulas in frames]
$\Lambda$-formulas are interpreted in frames as in standard structures, except for the three following clauses. The set quantifier clause of \mso becomes:
\begin{center}
$\frM, g \models \exists X \varphi$ iff there is $A \in
\mathbb{A}_{\frM}$ such that $\frM, g[A/X] \models \varphi$
\end{center}
The $TC$ clause of \fotc becomes:
\begin{center}
$\frM, g \models [TC_{xy} \varphi](u,v)$\\
iff\\
for all $A \in \mathbb{A}_{\frM}$, if $g(u) \in A$\\
and for all $a, b \in dom(\frM)$, $a\in A$ and $\frM, g[x/a,b/y]
\models \varphi$ imply
$b\in A$,\\
then $g(v)\in A$.
\end{center}

And finally the $LFP$ clause of \folfp becomes:
\begin{center}
$\frM, g \models [LFP_{Xx}\varphi]y$\\
iff\\
for all $A \in \mathbb{A}_\frM$, if for all $a \in dom(\frM)$, $\frM, g[a/x,A/X]
\models \varphi(x,X)$ implies $a \in A$,\\
then $g(y)\in A$.
\end{center}
\label{hensem}
\end{defi}

\begin{defi}
[$\Lambda$-Henkin-Structures]
    A $\Lambda$-\textbf{Henkin-structure} is a frame $\frM$ that is closed
    under parametric $\Lambda$-definability, i.e., for each $\Lambda$-formula
    $\varphi$ and assignment $g$ into $\frM$:
\begin{center}
$\{a \in M ~|~ \frM, g[a/x] \models \varphi\} \in \mathbb{A}_{\frM}$
\end{center}
We call a $\Lambda$-Henkin-structure $\frM$ \emph{standard} whenever
every subset in $dom(\frM)$ belongs to $\mathbb{A}_{\frM}$. 
\end{defi}

\begin{rem}
Note that any finite $\Lambda$-Henkin-structure is a standard
structure, as every subset of the domain is parametrically
definable in a finite structure. Hence, non standard Henkin structures are
always infinite.
\label{rem}
\finremark
\end{rem}

\begin{thm}
$\Lambda$ is completely axiomatized on $\Lambda$-Henkin-structures
by $\vdash_\Lambda$, in fact for every set of $\Lambda$-formulas
$\Gamma$ and $\Lambda$-formula $\varphi$, $\varphi$ is true in all
$\Lambda$-Henkin-models of $\Gamma$ if and only if $\Gamma
\vdash_\Lambda \varphi$. \label{hencomp}
\end{thm}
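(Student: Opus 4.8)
The plan is to run the standard Henkin-style completeness argument, adapted to frame semantics. The soundness direction ($\Gamma\vdash_\Lambda\varphi$ implies $\varphi$ true in all $\Lambda$-Henkin-models of $\Gamma$) is routine: I would check that each logical axiom (FO1--FO6, COMP, MSO1--MSO3, the \fotc axiom, the \folfp axiom) is valid on all $\Lambda$-Henkin-structures and that each rule preserves validity. The only non-completely-trivial checks are COMP, which holds precisely because a $\Lambda$-Henkin-structure is closed under parametric $\Lambda$-definability; the \fotc and \folfp induction-style axioms, which hold because the semantic clauses for $TC$ and $LFP$ in frames (Definition~\ref{hensem}) quantify only over admissible sets; and the generalization rules for $P$, where one uses that $P$ does not occur in $\xi$ together with the fact that every parametrically $\Lambda$-definable set is admissible, so that the ``real'' transitive closure / least fixed point is available as a witness for $P$.

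For completeness I would prove the contrapositive in the usual Lindenbaum/Henkin manner. Suppose $\Gamma\not\vdash_\Lambda\varphi$, so $\Gamma\cup\{\neg\varphi\}$ is $\Lambda$-consistent. First extend the language with countably many fresh first-order constants (and, for \mso\ and \folfp, keep the set variables but also add nothing extra there — Skolem witnesses for set quantifiers will come from comprehension, not new symbols). Then build a maximal $\Lambda$-consistent set $\Gamma^{*}$ extending $\Gamma\cup\{\neg\varphi\}$ that moreover has the \emph{witness property} for first-order existentials: whenever $\exists x\,\psi\in\Gamma^{*}$ there is a constant $c$ with $\psi^{x}_{c}\in\Gamma^{*}$. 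This is the standard Henkin construction using FO1--FO4 and FO Generalization, exactly as in \cite{Enderton}. Now define the term model $\frM$: its domain is the set of constants modulo the equivalence $c\sim d\iff (c=d)\in\Gamma^{*}$ (well-defined and a congruence by FO5, FO6), relations interpreted by membership in $\Gamma^{*}$, and — crucially — the admissible sets $\mathbb{A}_{\frM}$ defined to be exactly the parametrically $\Lambda$-definable subsets, i.e.\ all sets of the form $\{[c]\mid \psi^{x}_{c}\in\Gamma^{*}\}$ for $\Lambda$-formulas $\psi(x)$ (possibly with parameters among the constants). By construction $\frM$ is a $\Lambda$-Henkin-structure.

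The heart of the argument is the Truth Lemma: for every $\Lambda$-formula $\psi$ and the canonical assignment $g$ sending each constant to its class (and each set variable, when it appears under a quantifier, to the appropriate definable set supplied below), $\frM,g\models\psi\iff\psi\in\Gamma^{*}$. This goes by induction on $\psi$. The propositional and first-order quantifier cases are classical. For the set-quantifier case of \mso, the $\exists X$ clause: left-to-right uses that any $A\in\mathbb{A}_{\frM}$ is $\psi_0(x)$-definable, and then MSO1 (instantiation) lets one pull a witnessing instance into $\Gamma^{*}$; right-to-left uses COMP together with the first-order witness property applied to the existential $\exists X(\ldots)$ reformulated via comprehension, so that a definable — hence admissible — witness exists. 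The $TC$ case and the $LFP$ case are handled symmetrically using the respective axiom (for one direction) and generalization rule (for the other), together with the observation that the relevant ``closure'' sets are $\Lambda$-definable and therefore admissible, so the universal-over-admissible-sets semantic clause is matched by provability. I expect \emph{this} step — verifying the Truth Lemma for the $\exists X$, $TC$, and $LFP$ clauses, and in particular getting the interaction between the admissibility convention, the comprehension axiom, and the two generalization rules exactly right — to be the main obstacle; the rest is the textbook Henkin machinery. Finally, since $\neg\varphi\in\Gamma^{*}$, the Truth Lemma gives $\frM\not\models\varphi$ while $\frM\models\Gamma$, so $\varphi$ is not true in all $\Lambda$-Henkin-models of $\Gamma$, completing the contrapositive. (Compactness, mentioned in the surrounding text, then falls out of this construction in the usual way.)
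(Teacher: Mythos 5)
Your overall strategy (soundness check, Lindenbaum extension, term model whose admissible sets are the parametrically definable ones, Truth Lemma) is the same as the paper's, but there is a genuine gap in how you handle the second-order content, and it is exactly the point where the paper has to work hardest. You extend the language only by fresh individual constants and explicitly decline to add fresh monadic predicates, hoping that ``Skolem witnesses for set quantifiers will come from comprehension.'' They do not. A maximal consistent set with only first-order witnesses may contain $\exists X\,\psi(X)$ (or $\neg[LFP_{Xx}\varphi]y$, or $\neg[TC_{xy}\varphi](u,v)$) while containing $\neg\psi(\chi)$ for \emph{every} defining formula $\chi$ of the language; COMP only guarantees a set extensionally equal to each $\{x:\chi\}$, it does not force a definable witness for a consistent set-existential. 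For such a $\Gamma^*$ your term model, whose admissible sets are exactly the definable ones, simply fails to satisfy $\Gamma^*$, so the Truth Lemma is false. The witness property for the second-order existentials must be built into the Lindenbaum construction, and proving that adding such a witness preserves consistency is where the \folfp/\fotc generalization rules are actually used --- not in the Truth Lemma, as you suggest.

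The paper's remedy is to add countably many fresh monadic predicates alongside the constants and to interleave three kinds of formulas in the enumeration (Lemma~\ref{folfplind}): first-order witnesses, witnesses $\neg[LFP_{Xx}\varphi]y\rightarrow(\neg Py\wedge\neg\exists x(\neg Px\wedge\varphi(P,x)))$ with $P$ fresh, and \emph{naming} formulas $\forall x(\varphi\leftrightarrow Px)$ for every formula $\varphi$ (Definition~\ref{folfpwit}, with Lemma~\ref{lemma6} securing consistency of the naming step). The naming predicates are not a convenience but a necessity for your induction as well: MSO1 and the LFP axiom instantiate only with a set \emph{variable or predicate}, and if you instead substitute a defining formula $\chi$ for $X$ inside $\psi$ or $\varphi$, the resulting formula can have strictly larger quantifier depth than the formula whose membership in $\Gamma^*$ you are analysing, so your induction on $\psi$ is no longer well-founded. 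With every admissible set named by an atomic symbol $T$, the substitution $\psi[X/T]$ preserves complexity and the induction closes. So your step ``left-to-right uses that $A$ is $\psi_0(x)$-definable, and then MSO1 lets one pull a witnessing instance into $\Gamma^*$'' needs the named-set machinery you omitted; as written, the proposal cannot be completed.
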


We do not detail here the \mso proof, as it is a special case of the proof of completeness for the theory of types given in \cite{230876}. We focus only on the \folfp case, as the \fotc case is very similar, except that there is no need to consider set variables. Up to now we have been working with purely relational vocabularies. Here we will be using individual constants in the standard way, but only for the sake of readability (we could dispense with them and use \fo variables instead). Also, whenever this is clear from the context, we will use $\vdash$ as shorthand for $\vdash_\folfp$. Let us now begin the Henkin completeness proof for \folfp. This will achieve the proof of Theorem \ref{hencomp}.

\begin{lem}[Generalization Lemma for \fo Quantifiers]
If $\Gamma \vdash_\folfp \varphi$ and $x$ does not occur free in
$\Gamma$, then $\Gamma \vdash_\folfp \forall x \varphi$. \label{fogen}
\end{lem}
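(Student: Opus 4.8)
The plan is to mimic the classical proof of the Generalization-on-Constants lemma from \cite{Enderton}, adapting it to the presence of the $LFP$ operator and set variables. The standard argument is as follows: suppose $\Gamma\vdash_\folfp\varphi$ where $x$ does not occur free in any formula of $\Gamma$. Pick a first-order variable $y$ that does not occur anywhere in the given deduction (and does not occur in $\Gamma$ or $\varphi$). First I would show that replacing $x$ everywhere by $y$ throughout the deduction yields again a legal $\vdash_\folfp$-deduction of $\varphi^x_y$ from $\Gamma$ — this uses that $\Gamma$ is untouched (since $x$ is not free in $\Gamma$, and $y$ is fresh) and that each axiom scheme and each inference rule is closed under such a renaming. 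Then, working inside the deduction of $\varphi^x_y$ from $\Gamma$, I would prove by induction on the length of the deduction that $\Gamma\vdash_\folfp\forall y\,\varphi^x_y$: at each step where a line $\theta$ is obtained, one shows $\Gamma\vdash_\folfp\forall y\,\theta$, handling axioms via FO4/FO-Generalization-style reasoning and the two inference-rule cases (Modus Ponens using FO3, and the generalization rules of FO/MSO/\folfp using the freshness of $y$). Finally, since $y$ is substitutable for $x$ in $\varphi$ and $y$ is not free in $\varphi$, FO2 gives $\forall y\,\varphi^x_y\to\varphi$ after renaming back, so $\Gamma\vdash_\folfp\forall x\,\varphi$ follows by FO-Generalization and FO4 manipulations; alternatively one re-runs the renaming argument in reverse.

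The key steps, in order, are: (1) fix a fresh first-order variable $y$; (2) establish a \emph{substitution lemma} for deductions — uniformly substituting $y$ for $x$ transforms a $\vdash_\folfp$-deduction into a $\vdash_\folfp$-deduction, checking this scheme by scheme and rule by rule; (3) prove the core claim by induction on deduction length that $\Gamma\vdash_\folfp\theta$ implies $\Gamma\vdash_\folfp\forall y\,\theta$ whenever $y$ is not free in $\Gamma$, with the Modus Ponens and generalization-rule cases being where the freshness hypothesis is used; (4) conclude via FO2 and propositional reasoning. Only step (2) and step (3) require any care; steps (1) and (4) are routine.

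The main obstacle I expect is step (2), specifically verifying that the new axioms and rules — the \folfp axiom, the \folfp Generalization rule, and (for the \fotc analogue mentioned in the text) the \fotc axiom and rule, as well as the COMP/MSO schemes — remain instances of the same schemes after the renaming $x\mapsto y$. For the \folfp axiom $[LFP_{Xx'}\varphi]z\to(\forall x'(\varphi(x',\psi)\to\psi(x'))\to\psi(z))$ one must check that $y$ can be chosen so as not to clash with the bound variable $x'$ of the $LFP$ operator nor with any variable captured in forming $\varphi(x',\psi)$; and for the \folfp Generalization rule one must check that the side condition ``$P$ positive in $\varphi$ does not occur in $\xi$'' is preserved, which it trivially is since renaming a first-order variable does not affect set predicates. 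Since $y$ is chosen fresh with respect to the \emph{entire} finite deduction, these clashes can always be avoided (renaming bound variables within instances if necessary, exactly as in the \fo case), so the obstacle is one of bookkeeping rather than of genuine difficulty. This is why, in keeping with the paper's stated convention of sketching or omitting classical arguments, the proof will largely be a reference to the analogous \fo argument in \cite{Enderton} together with the remark that the extra schemes cause no new difficulties.
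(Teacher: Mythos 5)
Your proposal is sound in substance, but it takes a noticeably different and heavier route than the paper, which simply observes that Enderton's proof of the Generalization Theorem goes through verbatim. Two points of comparison. First, because the paper defines $\Gamma\vdash_\folfp\varphi$ as $\vdash_\folfp(\psi_1\wedge\cdots\wedge\psi_n)\to\varphi$ for some $\psi_1,\ldots,\psi_n\in\Gamma$, the lemma reduces to a three-step derivation: apply FO Generalization to the theorem $(\psi_1\wedge\cdots\wedge\psi_n)\to\varphi$, distribute $\forall x$ over the implication with FO3, and discharge $\forall x$ on the antecedent with FO4 (using that $x$ is not free in the $\psi_i$). In particular the internal structure of the original proof is never inspected, so the \folfp axiom and \folfp Generalization rule cause no difficulty whatsoever --- the scheme-by-scheme verification you anticipate as the main obstacle simply does not arise. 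Second, your steps (1), (2) and (4) --- choosing a fresh $y$, proving a substitution lemma for deductions, and renaming back --- belong to the proof of the Generalization-on-Constants theorem, not of this lemma; they are not needed when the hypothesis is that the \emph{variable} $x$ is not free in $\Gamma$. They are also slightly misstated: uniformly replacing $x$ by $y$ throughout the deduction does not leave $\Gamma$ ``untouched'' if $x$ occurs \emph{bound} in some formula of $\Gamma$; one would then obtain a deduction from alphabetic variants of members of $\Gamma$ and would need an extra interdeducibility step. None of this is fatal --- your core induction in step (3) is exactly Enderton's argument and is where the real content lies --- but the detour buys you nothing here and imports avoidable bookkeeping.
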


\begin{proof}
We refer the reader to the proof for \fo given by Enderton in
\cite[page 117]{Enderton}. The same proof applies for \folfp (as well
as \mso and \fotc).
\end{proof}

\begin{defi}
\label{folfpwit} We say that a set of \folfp formulas $\Delta$
contains \folfp Henkin witnesses if and only if the two following
conditions hold. First, for every formula $\varphi$, if
$\neg\forall x \varphi \in \Delta$, then $\neg \varphi[x/t] \in
\Delta$ for some term $t$ and if $\neg[LFP_{Xx}\varphi]y \in
\Delta$, then $\neg Py \wedge \neg \exists x (\neg Px \wedge
\varphi(P,x)) \in \Delta$ for some monadic predicate $P$.
Second, if $\varphi \in \Delta$ and $x$ is a free variable of
$\varphi$, then $\forall x(Px\leftrightarrow \varphi(x)) \in
\Delta$ for some monadic predicate $P$.
\end{defi}

The originality of the \folfp case essentially lies in the notion of \folfp-Henkin witness of Definition \ref{folfpwit}. 
In order to use this notion in the proof of Lemma \ref{folfplind}, we also need the following lemma:

\begin{lem}
Let $\Gamma$ be a consistent set of \folfp-formulas and $\theta$ a
\folfp-formula of the form $\forall x (\varphi \leftrightarrow
Px)$ with $P$ a fresh monadic predicate (i.e. not appearing in
$\Gamma$). Then $\Gamma \cup \{\theta\}$ is also consistent.
\label{lemma6}
\end{lem}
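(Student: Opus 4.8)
The plan is to prove the contrapositive: assuming $\Gamma \cup \{\theta\}$ is inconsistent, I derive that $\Gamma$ itself is inconsistent. So suppose $\Gamma \cup \{\forall x(\varphi \leftrightarrow Px)\} \vdash_\folfp \bot$. By the definition of $\vdash_\folfp$ from a set of hypotheses (Definition \ref{defax}) and propositional reasoning (FO1, Modus Ponens), this gives $\Gamma \vdash_\folfp \neg \forall x(\varphi \leftrightarrow Px)$, i.e.\ there are $\psi_1,\dots,\psi_n \in \Gamma$ with $\vdash_\folfp (\psi_1 \wedge \dots \wedge \psi_n) \to \neg\forall x(\varphi \leftrightarrow Px)$. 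Abbreviate $\chi := \psi_1 \wedge \dots \wedge \psi_n$, so $\vdash_\folfp \chi \to \neg\forall x(\varphi \leftrightarrow Px)$, equivalently $\vdash_\folfp \neg\chi \vee \neg\forall x(\varphi \leftrightarrow Px)$.

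The key point is that $P$ is fresh: it occurs neither in $\Gamma$ (hence not in $\chi$) nor, crucially, in $\varphi$. The idea is that a proof of $\vdash_\folfp \chi \to \neg\forall x(\varphi \leftrightarrow Px)$ with $P$ not occurring in $\chi$ or $\varphi$ should still go through when $P$ is replaced everywhere by some formula that is \emph{true of exactly the $\varphi$-extension}. Concretely, I will uniformly substitute for the atomic formula $Pt$ (for every term $t$) the formula $\varphi(x/t)$ throughout the proof, renaming bound variables as needed to keep substitutions legitimate; call the result of this substitution applied to a formula $\alpha$ its transform $\alpha^*$. One checks that each axiom instance transforms into an axiom instance (this is the routine but slightly delicate part — one must verify it for FO1--FO6, for the \folfp axiom and, most importantly, for \folfp Generalization, whose side-condition ``$P$ positive in $\varphi$ does not occur in $\xi$'' interacts with the substitution; since we are substituting away the fresh $P$ and not introducing it, the side conditions of the rules that \emph{introduce} $LFP$ are unaffected, and instances of the $LFP$ axiom remain instances) and each application of a rule transforms into an application of the same rule. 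Since $P$ does not occur in $\chi$ or $\varphi$, we have $\chi^* = \chi$ and $\varphi^* = \varphi$, while $(\forall x(\varphi \leftrightarrow Px))^* = \forall x(\varphi \leftrightarrow \varphi)$. Hence the transformed proof witnesses $\vdash_\folfp \chi \to \neg\forall x(\varphi \leftrightarrow \varphi)$. But $\vdash_\folfp \forall x(\varphi \leftrightarrow \varphi)$ is derivable (from FO1 via FO Generalization and FO5-style reasoning on the biconditional), so by Modus Ponens $\vdash_\folfp \neg\chi$, i.e.\ $\vdash_\folfp \neg(\psi_1 \wedge \dots \wedge \psi_n)$, meaning $\Gamma \vdash_\folfp \bot$. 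Thus $\Gamma$ is inconsistent, contradicting the hypothesis; therefore $\Gamma \cup \{\theta\}$ is consistent.

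The main obstacle — and the only place that needs genuine care — is verifying that the uniform substitution of $\varphi$ for the predicate $P$ sends proofs to proofs in the system $\vdash_\folfp$. For pure \fo this is a standard fact (substitution of a formula for a predicate symbol, with due attention to variable capture, preserves derivability), and it is exactly the kind of ``classical argument'' the authors say they will sketch or omit; the new content is the \folfp axiom and rule. For those, the essential observations are that (i) $P$ is being \emph{eliminated}, so no new occurrences of any predicate positive-in-something are created in a way that could violate the positivity side condition of \folfp Generalization, and (ii) an instance of the \folfp axiom $[LFP_{Xx}\phi]y \to (\forall x(\phi(x,\psi)\to\psi(x)) \to \psi(y))$ is carried to another instance of the same axiom scheme with $\phi$, $\psi$ replaced by their transforms. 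I would state this substitution-invariance as a small sublemma (or simply cite the analogous fact for \fo from \cite{Enderton} and remark that the \folfp clauses are immediate by inspection), and then the argument above closes the proof in a couple of lines.
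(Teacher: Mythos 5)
Your proposal follows essentially the same route as the paper: pass to a concrete derivation witnessing the inconsistency, rename bound variables, uniformly substitute $\varphi$ for $Px$ throughout the derivation, observe that $\theta$ becomes the provable $\forall x(\varphi\leftrightarrow\varphi)$, and conclude that $\Gamma$ was already inconsistent. The overall structure is correct and matches the paper's proof.

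The one place where your verification is too quick is precisely the point you flag as delicate, namely the interaction of the substitution with \folfp Generalization. Your claim that ``since we are substituting away the fresh $P$ and not introducing it, the side conditions of the rules that introduce $LFP$ are unaffected'' does not cover two cases. First, nothing prevents the given derivation from using the very predicate $P$ of $\theta$ as the eigenpredicate of a \folfp Generalization step; after substituting $\varphi$ for $P$, the premise $\xi\to(\forall x(\phi(x,P)\to P(x))\to P(y))$ no longer has a predicate symbol in the positions the rule requires, so the transformed step is not an application of the rule at all --- this is a failure of the rule's shape, not of its side condition. Second, if the eigenpredicate $Q$ of some Generalization step happens to occur in $\varphi$, then replacing $P$ by $\varphi$ inside the corresponding $\xi$ can introduce $Q$ into $\xi$, violating the side condition ``$Q$ does not occur in $\xi$.'' The paper closes both gaps with a one-sentence preprocessing step: before substituting, it renames the eigenpredicates of all \folfp Generalization applications to fresh monadic predicates distinct from $P$ and not occurring elsewhere in the proof (hence not in $\varphi$). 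With that addition your argument goes through; without it, the assertion that ``each application of a rule transforms into an application of the same rule'' is not justified.
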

\begin{proof}
Suppose $\Gamma \cup \{\forall x (\varphi \leftrightarrow Px)\}$
is inconsistent, so there is some proof of $\bot$ from formulas in
$\Gamma \cup \{\forall x(\varphi \leftrightarrow Px)\}$. We first
rename all bound variables in the proof with variables which had
no occurrence in the proof or in $\forall x(\varphi\leftrightarrow
Px)$ (this is possible since proofs are finite objects and we have
a countable stock of variables). Also, whenever in the proof the
\folfp generalization rule is applied on some unary predicate $P$,
we make sure that this $P$ is different from the unary predicate
that we want to substitute by $\varphi$ and which does not appear
in the proof; this is always possible because we have a countable
set of unary predicates. Now, we replace in the proof all
occurrences of $Px$ by $\varphi$ (as we renamed bound variables,
there is no accidental binding of variables by wrong quantifiers).
Then, every occurrence of $\forall x(\varphi \leftrightarrow Px)$
in the proof becomes an occurrence of $\forall x (\varphi
\leftrightarrow \varphi)$, i.e., we have obtained a proof of
$\bot$ from $\Gamma \cup \{\forall x(\varphi\leftrightarrow
\varphi)\}$, i.e., from $\Gamma$ ($\forall x (\varphi
\leftrightarrow \varphi)$ is provable, as it can be obtained by
\fo generalization from a propositional tautology). It
entails that $\Gamma$ is already inconsistent, which contradicts
the consistency of $\Gamma$. Now it remains to show that the
replacement procedure of all occurrences of $Px$ by $\varphi$ is
correct, so that we still have a proof of $\bot$ after it. Every
time the replacement occurs in an axiom (or its generalization,
which is still an axiom as we defined it), then the result is
still an instance of the given axiom schema (even for \folfp
generalizations, because we took care that $P$ is never used in
the proof for a \folfp generalization). Also, as replacement is
applied uniformly in the proof, every application of modus ponens
stays correct: consider $\psi \rightarrow \xi$ and $\psi$.
Obviously the result $\psi^*$ of the substitution will allow to
derive the result $\xi^*$ of the substitution from $\psi^*
\rightarrow \xi^*$ and $\psi^*$. Also $\bot^*$ is simply $\bot$,
so the procedure gives us a proof of $\bot$.
\end{proof}

\begin{lem}(\folfp Lindenbaum Lemma)
Let $\sigma$ be a countable vocabulary and let $\sigma^*=\sigma \cup \{c_n ~|~ \in \mathbb{N}\} \cup \{P_n
~|~ n \in \mathbb{N}\}$ with $c_i, P_i \notin \sigma$. If a set $\Gamma$ of \folfp-formulas in vocabulary $\sigma$ is consistent, then there exists a maximal
consistent set $\Gamma^*$ of $\sigma^*$ formulas such that $\Gamma
\subseteq \Gamma^*$ and $\Gamma^*$ contains \folfp-Henkin
witnesses.
\label{folfplind}
\end{lem}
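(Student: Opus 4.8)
The plan is to follow the familiar Lindenbaum-style construction used for the \fo completeness theorem (as in \cite{Enderton}), but interleaved with the task of progressively adding Henkin witnesses of both kinds required by Definition \ref{folfpwit}. First I would fix an enumeration $\varphi_0, \varphi_1, \ldots$ of all $\sigma^*$-formulas, and also organize the fresh constants $c_n$ and fresh predicates $P_n$ into countably many disjoint countably-infinite reservoirs, so that at each stage of the construction we can always grab a constant or predicate that has not yet been used anywhere in the construction so far. We then build an increasing chain $\Gamma = \Gamma_0 \subseteq \Gamma_1 \subseteq \cdots$ of consistent sets of $\sigma^*$-formulas, and finally take $\Gamma^* = \bigcup_n \Gamma_n$, together with a routine maximalization step (either bake the maximal-consistency choices into the enumeration step, or apply an ordinary Lindenbaum maximalization to the end result — the latter is cleaner, but one must check it does not destroy the witness properties, which it does not, since it only adds formulas).

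The heart of the argument is the inductive step. At stage $n+1$ we consider $\varphi_n$ and decide whether to put $\varphi_n$ or $\neg\varphi_n$ into $\Gamma_{n+1}$; consistency of $\Gamma_n$ guarantees at least one choice keeps the set consistent, so pick such a one, call the formula $\chi_n$. Now, depending on the syntactic shape of $\chi_n$, we append witnessing formulas. If $\chi_n$ is $\neg\forall x\,\varphi$, we add $\neg\varphi[x/t]$ for a fresh constant $t$; this preserves consistency by the usual argument (if $\Gamma_n \cup \{\neg\forall x\,\varphi, \neg\varphi[x/t]\}$ were inconsistent, then since $t$ is fresh we could use Lemma \ref{fogen} to generalize on $t$ and derive a contradiction already from $\Gamma_n \cup \{\neg\forall x\,\varphi\}$). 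If $\chi_n$ is $\neg[LFP_{Xx}\varphi]y$, we add $\neg Py \wedge \neg\exists x(\neg Px \wedge \varphi(P,x))$ for a fresh monadic predicate $P$: here the consistency preservation is exactly the content of the \folfp Generalization rule in Figure \ref{fig:folfp-axioms} read contrapositively — if adding this witness made the set inconsistent, then $\Gamma_n \cup \{\neg[LFP_{Xx}\varphi]y\}$ would prove $[LFP_{Xx}\varphi]y$ via the rule, contradicting consistency. Finally, for the second clause of Definition \ref{folfpwit}, whenever a newly added formula has a free variable $x$, we also throw in $\forall x(Px \leftrightarrow \varphi(x))$ for a fresh predicate $P$; that this preserves consistency is precisely Lemma \ref{lemma6}.

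One bookkeeping subtlety deserves care, and I expect it to be the main obstacle: adding a witness formula can itself introduce a new ``defining'' obligation, and one must make sure the construction closes off. The cleanest fix is to handle the comprehension-style clause (the second clause of Definition \ref{folfpwit}) eagerly — each time we add any formula to the chain, we simultaneously add a $\forall x(Px \leftrightarrow \varphi(x))$ axiom for each of its free variables, using fresh predicates. Since each stage adds only finitely many formulas, and since the added $\forall x(Px \leftrightarrow \varphi(x))$ is a sentence (no free variables) once we bind $x$, this terminates at each stage and the union $\Gamma^*$ satisfies the clause for every formula it contains. Similarly, the witness obtained for a $\neg\forall x$ or $\neg[LFP]$ formula must itself eventually get enumerated and, if it is itself of witness-requiring shape, get its own witness — but because $\Gamma^*$ is maximal consistent, if the witness formula is in $\Gamma^*$ it appears as some $\varphi_m$ and is treated at stage $m$, so no obligation is left dangling. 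Consistency of the union follows from finiteness of proofs: any proof of $\bot$ from $\Gamma^*$ uses only finitely many formulas, hence lies inside some $\Gamma_n$, contradicting the consistency maintained at every finite stage. Maximal consistency and $\Gamma \subseteq \Gamma^*$ are then immediate from the construction.
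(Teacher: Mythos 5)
Your proposal follows essentially the same route as the paper: the same three kinds of witnessing formulas (fresh constant for $\neg\forall x\,\varphi$, fresh predicate for $\neg[LFP_{Xx}\varphi]y$, and a comprehension axiom $\forall x(Px\leftrightarrow\varphi(x))$), justified by exactly the same three consistency-preservation arguments (Lemma \ref{fogen}, the \folfp Generalization rule read contrapositively, and Lemma \ref{lemma6}). The only organizational difference is that the paper adds the witnesses \emph{conditionally} (as implications $\theta_{3n-2},\theta_{3n-1}$) and \emph{unconditionally} for the comprehension clause ($\theta_{3n}$), for \emph{every} formula in the enumeration, proves $\Gamma\cup\Theta$ consistent via a least-$m$ argument, and only then maximalizes; you interleave the maximalization with witness-adding. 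Your interleaved route (option (a)) is sound. One claim you make about option (b) is, however, wrong: an ordinary Lindenbaum maximalization applied at the end does \emph{not} automatically preserve the witness properties ``since it only adds formulas'' --- the clauses of Definition \ref{folfpwit} have the form ``if $\varphi\in\Delta$ then a witness is in $\Delta$,'' so enlarging $\Delta$ creates new obligations (e.g.\ a formula $\neg\forall x\,\varphi$ added during maximalization would need an instance $\neg\varphi[x/t]$, and any newly added formula with a free variable would need its comprehension predicate). The paper's formulation sidesteps this precisely because the $\theta_i$ are already present for all formulas of the enumeration before maximalization, so maximal consistency plus modus ponens discharges the obligations for whatever gets added later.
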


\begin{proof}
Let $\Gamma$ be a consistent set of well formed \folfp-formulas in
a countable vocabulary $\sigma$.
We expand $\sigma$ to $\sigma^*$ by adding countably many new constants and countably many new monadic predicates.
Then $\Gamma$ remains consistent as a set of well formed formulas
in the new language. We fix an enumeration of all tuples consisting of two \fo variables, one set variable and one formula
of $\sigma^*$:
\begin{center}
$<\varphi_1, x_1, x_1', X_1>,<\varphi_2, x_2, x_2', X_2>,<\varphi_3, x_3, x_3', X_3>,
\ldots$
\end{center}
(this is possible since the language is countable), where the $\varphi_i$
are formulas, the $x_i$, $x_i'$ are \fo variables and the $X_i$ are set variables.
\begin{iteMize}{$\bullet$}
\item   Let $\theta_{3n-2}$ be $\neg \forall x_n \varphi_n \rightarrow \neg \varphi_n[x_n/c_l]$,
where $c_l$ is the first of the new constants neither occurring in
$\varphi_n$ nor in $\theta_k$ with $k < 3n-2$.
\item   Let $\theta_{3n-1}$ be $\neg [LFP_{x_n'X_n}\varphi_n]x_n \rightarrow (\neg P_lx_n \wedge \neg
\exists x (\neg P_lx \wedge \varphi_n(P_l,x)))$, where $P_l$ is the
first of the new monadic predicates neither occurring in
$\varphi_n$ nor in $\theta_k$ with $k < 3n-1$.
\item   Let $\theta_{3n}$ be $\forall x_n (\varphi_n \leftrightarrow P_lx_n)$, where $P_l$ is the first of the new monadic predicates neither occurring in $\varphi_n$ nor in
$\theta_k$ with $k < 3n$.
\end{iteMize}

\noindent Call $\Theta$ the set of all the $\theta_i$.

\begin{clm}
$\Gamma \cup \Theta$ is consistent
\end{clm}
If not, then because deductions are finite, for some $m \geq 0$,
$\Gamma \cup \{\theta_1, \ldots, \theta_m,\theta_{m+1}\}$ is
inconsistent. Take the least such $m$. Then, by the
definition of consistency and the axioms of propositional logic,
$\Gamma \cup \{\theta_1, \ldots, \theta_m\} \vdash
\neg\theta_{m+1}$. Now there are three cases:
\begin{enumerate}[(1)]
\item    $\theta_{m+1}$ is of the form $\neg \forall x \varphi \rightarrow
\neg \varphi[x/c]$, so both $\Gamma \cup
\{\theta_1,\ldots,\theta_m\} \vdash \neg \forall x \varphi$ and
$\Gamma \cup \{\theta_1,\ldots,\theta_m\} \vdash \varphi[x/c]$.
Since $c$ does not appear in any formula on the left, by Lemma
\ref{fogen}, $\Gamma \cup \{\theta_1,\ldots,\theta_m\} \vdash
\forall x \varphi$, which contradicts the minimality of $m$ (or
the consistency of $\Gamma$ if $m=0$).
\item   $\theta_{m+1}$ is of the form $\neg [LFP_{Xx}\varphi]y \rightarrow (\neg Py
\wedge \neg \exists x (\neg Px \wedge \varphi(P,x)))$. In such a
case both $\Gamma \cup \{\theta_1 \ldots \theta_m\} \vdash \neg
[LFP_{Xx}\varphi]y$ and $\Gamma \cup \{\theta_1 \ldots \theta_m\}
\vdash \neg(\neg Py \wedge \neg \exists x (\neg Px \wedge
\varphi(P,x)))$ hold. It follows that  $\Gamma \cup \{\theta_1 \ldots \theta_m\}
\vdash \forall x (\varphi(P,x) \rightarrow P(x)) \rightarrow Py$. Since $P$ does not appear in any formula on
the left, by \folfp generalization, $\Gamma \cup \{\theta_1 \ldots
\theta_m\} \vdash [LFP_{Xx}\varphi]y$, which contradicts the
minimality of $m$ (or the consistency of $\Gamma$ whenever $m=0$).
\item   $\theta_{m+1}$ is of the form $\forall x (\varphi \leftrightarrow Px)$.
By Lemma \ref{lemma6}, this is not possible.
\end{enumerate}
We extend $\Gamma \cup \Theta$ to a maximal consistent set
$\Gamma^*$ in the standard way (see for instance \cite[page 137]{Enderton}).
\end{proof}

We will now show that if $\Gamma^*$ is a maximal consistent set that
contains \folfp-Henkin witnesses, then $\Gamma^*$ has a \folfp-Henkin
model $\frM_{\Gamma^*}$.

\begin{defi}
Let $\Gamma^* \subseteq FORM(\sigma)$ be maximal consistent and
contain \folfp-Henkin witnesses. We define an equivalence relation on the
set of \fo terms, by letting $t_1\equiv_{\Gamma^*}t_2$ iff
$t_1=t_2 \in \Gamma^*$. We denote the equivalence class of a term
$t$ by $|t|$.
\end{defi}
\begin{prop}
 $\equiv_{\Gamma^*}$ is an equivalence relation.
\end{prop}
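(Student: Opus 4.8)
The plan is to verify the three defining properties of an equivalence relation for $\equiv_{\Gamma^*}$, namely reflexivity, symmetry, and transitivity, using only the maximal consistency of $\Gamma^*$ together with the first-order equality axioms FO5 and FO6. The key observation is that each property corresponds to a provable statement about equality, and since $\Gamma^*$ is a maximal consistent set containing all logical axioms (and closed under consequence), every $\Lambda$-provable formula belongs to $\Gamma^*$; hence all such equality-statements are in $\Gamma^*$, which by the definition of $\equiv_{\Gamma^*}$ gives the desired closure conditions.

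First I would handle reflexivity: by axiom FO5 we have $\vdash x = x$, and more generally for any term $t$, $\vdash t = t$ (instantiating via FO2, noting $t$ is substitutable for $x$ in $x=x$). Since $\Gamma^*$ is maximal consistent it contains every provable formula, so $t = t \in \Gamma^*$, i.e.\ $t \equiv_{\Gamma^*} t$. Next, symmetry: from FO6 with $\varphi$ the atomic formula $x = x$ and $\psi$ obtained by replacing the first occurrence of $x$ by $y$, one derives $\vdash x = y \rightarrow (x = x \rightarrow y = x)$, and combining with FO5 and propositional reasoning (FO1, Modus Ponens) yields $\vdash x = y \rightarrow y = x$; instantiating to arbitrary terms $t_1, t_2$ and using that $\Gamma^*$ is closed under $\vdash_\Lambda$, if $t_1 = t_2 \in \Gamma^*$ then $t_2 = t_1 \in \Gamma^*$. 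For transitivity, apply FO6 with $\varphi$ the atomic formula $t_1 = t_3$ (strictly, the schema instance $y = z \rightarrow (x = y \rightarrow x = z)$, obtained by replacing $y$ by $z$ in $x = y$), so that $\{t_1 = t_2, t_2 = t_3\} \vdash_\Lambda t_1 = t_3$; again maximal consistency closes $\Gamma^*$ under this, giving $t_1 = t_3 \in \Gamma^*$.

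I do not expect any serious obstacle here — this is a completely routine verification, essentially identical to the corresponding step in the Henkin-style proof of the first-order completeness theorem (e.g.\ in \cite{Enderton}). The only mild care needed is in spelling out the exact instances of FO6 used for symmetry and transitivity, since FO6 as stated only allows replacing occurrences in one direction in an atomic formula; one must pick the right atomic $\varphi$ and the right pattern of replacement so that the resulting conditional is the one required, and then observe that symmetry of $=$ (already established) lets one move freely between $t_1 = t_2$ and $t_2 = t_1$ as needed. After that, everything follows from the fact — itself a standard consequence of the Lindenbaum construction — that a maximal consistent set contains every theorem and is closed under $\vdash_\Lambda$.
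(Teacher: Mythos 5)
Your proof is correct and follows exactly the route the paper intends: its entire proof of this proposition is the one-line remark ``By FO5 and FO6,'' and you have simply filled in the routine details of deriving reflexivity, symmetry, and transitivity from those axioms together with the closure of the maximal consistent set $\Gamma^*$ under $\vdash_\Lambda$.
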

\begin{proof}\
By  FO5 and FO6.
\end{proof}

\begin{defi}
We define $\frM_{\Gamma^*}$ (together with a valuation $g_{\Gamma^*}$) out of $\Gamma^*$.
\begin{iteMize}{$\bullet$}
\item   $M=\{|t|:t$ is a \fo term $\}$
\item   $\mathbb{A}_{\frM_{\Gamma^*}}=\{A_T:T$ is a set variable or a monadic predicate$\}$ where $A_T=\{|t|:Tt\in\Gamma^*\}$
\item   $(|t_1|,\ldots,|t_n|) \in P^\frM_{\Gamma^*}$ iff $Pt_1 \ldots t_n \in \Gamma^*$
\item   $c^{\frM_{\Gamma^*}}=|c|$
\item   $g_{\Gamma^*}(x)=|x|$
\item   $g_{\Gamma^*}(X)=A_X$
\end{iteMize}
\end{defi}

\begin{prop}
 $\frM_{\Gamma^*}$ is a \folfp-Henkin structure.
\end{prop}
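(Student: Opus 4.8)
The plan is to verify the two defining properties of a \folfp-Henkin structure for $\frM_{\Gamma^*}$: that it is a frame whose set quantifier, $TC$ and $LFP$ clauses behave correctly (this is subsumed once we have the truth lemma), and more specifically that $\mathbb{A}_{\frM_{\Gamma^*}}$ is closed under parametric \folfp-definability. The natural route is first to establish a \emph{truth lemma}: for every \folfp-formula $\varphi$ (in the expanded vocabulary $\sigma^*$) and every assignment of the form $g_{\Gamma^*}$ (composed with substitution of terms for variables), $\frM_{\Gamma^*},g_{\Gamma^*}\models\varphi$ iff $\varphi\in\Gamma^*$. I would prove this by induction on the structure of $\varphi$, the Boolean and atomic cases being entirely standard (using maximal consistency and the term-model construction, FO5, FO6). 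The interesting cases are the quantifier $\exists x$ and the operator $[LFP_{Xx}\varphi]y$.

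For the $\exists x$ case, one direction is immediate from the substitution axiom FO2 together with maximal consistency; the other direction uses the \folfp-Henkin witnesses of the first kind ($\neg\forall x\varphi\in\Gamma^*$ implies $\neg\varphi[x/t]\in\Gamma^*$ for some term $t$), exactly as in the classical Henkin argument. For the $LFP$ case, I would argue as follows. If $[LFP_{Xx}\varphi]y\in\Gamma^*$, then by the \folfp axiom (Figure~\ref{fig:folfp-axioms}) together with the Henkin witness condition of the second kind (every formula with a free variable has a ``naming'' predicate $P$ with $\forall x(Px\leftrightarrow\varphi(x))\in\Gamma^*$), we can show that $g_{\Gamma^*}(y)$ lies in every admissible set $A_P$ that is closed under $F_\varphi$ in the sense required by the Henkin semantics clause; the key point is that admissible sets are exactly the sets of the form $A_T$ for $T$ a set variable or monadic predicate, and the \folfp axiom instantiated with $\psi:=P$ gives precisely the needed implication, with the inductive hypothesis translating $\varphi(x,P)\in\Gamma^*$ into membership in $A_P$. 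Conversely, if $[LFP_{Xx}\varphi]y\notin\Gamma^*$, then by maximal consistency $\neg[LFP_{Xx}\varphi]y\in\Gamma^*$, and the Henkin witness gives a predicate $P$ with $\neg Py\wedge\neg\exists x(\neg Px\wedge\varphi(P,x))\in\Gamma^*$; this exhibits an admissible set $A_P$ witnessing the failure of the $LFP$ clause (using the rephrasing in the Remark after Definition of \folfp), so $\frM_{\Gamma^*},g_{\Gamma^*}\not\models[LFP_{Xx}\varphi]y$.

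Once the truth lemma is in hand, closure under parametric \folfp-definability is almost immediate: given a formula $\varphi$ and an assignment $g$, by the Henkin witness condition of the second kind (after absorbing parameters of $g$ into the formula via the naming predicates, or directly since $g$ ranges over term-denotations) there is a monadic predicate $P$ with $\forall x(Px\leftrightarrow\varphi(x))\in\Gamma^*$, and then by the truth lemma $A_P=\{a\in M : \frM_{\Gamma^*},g[a/x]\models\varphi\}$, which is by construction a member of $\mathbb{A}_{\frM_{\Gamma^*}}$. Finally, validity of the \fo axioms, the \folfp axiom, and the inference rules in $\frM_{\Gamma^*}$ follows because they are all in $\Gamma^*$ (being provable or, in the case of rule instances, provable from $\Gamma^*$), hence true by the truth lemma; so $\frM_{\Gamma^*}$ is a genuine \folfp-Henkin structure.

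I expect the main obstacle to be the $LFP$ case of the truth lemma, and in particular getting the quantifier bookkeeping right: the Henkin witness for $\neg[LFP_{Xx}\varphi]y$ must supply a predicate $P$ whose associated admissible set $A_P$ genuinely refutes the semantic clause, which requires that the formula $\varphi(P,x)$ obtained by substituting $P$ for $X$ interacts correctly with the inductive hypothesis — one must check that no variable capture occurs (the Lindenbaum construction and Lemma~\ref{lemma6} were set up precisely to handle the analogous issue on the proof-theoretic side, and a parallel care is needed here) and that the positivity requirement on $\varphi$ in $X$ is preserved so that the substitution instance is well-formed. The subtlety is that $\varphi$ may itself contain nested $LFP$ operators, so the induction hypothesis must be applied at the right subformulas and with the right assignments; organizing the induction so that it goes through uniformly for all such $\varphi$ is the delicate part, everything else being routine adaptation of the classical Henkin completeness proof.
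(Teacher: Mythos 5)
Your proposal is correct, and its core is the same as the paper's: the admissible sets of $\frM_{\Gamma^*}$ are exactly the sets $A_T$ for $T$ a set variable or monadic predicate, and the second clause of the Henkin-witness condition (Definition \ref{folfpwit}) supplies, for every formula $\varphi$ with free variable $x$, a predicate $P$ with $\forall x(Px\leftrightarrow\varphi(x))\in\Gamma^*$, so that the set defined by $\varphi$ is named by $P$ and hence admissible. The difference is one of decomposition: the paper disposes of this proposition in one line ("immediate by construction") and only afterwards states and proves the Truth lemma as a separate result, whereas you front-load the Truth lemma and derive closure under parametric definability from it. Your ordering is arguably the more honest one, since identifying $\{a\mid\frM_{\Gamma^*},g[a/x]\models\varphi\}$ with $A_P$ really does require the semantic--syntactic bridge $\frM_{\Gamma^*},g[|t|/x]\models\varphi$ iff $\varphi[x/t]\in\Gamma^*$ (together with a substitution step absorbing the parameters of an arbitrary assignment $g$ into the formula, which you correctly flag); and there is no circularity, because the Truth lemma is proved for $\frM_{\Gamma^*}$ qua frame, with the $LFP$ clause already quantifying only over the $A_T$, so it does not presuppose the Henkin-structure property. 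Your sketch of the $\exists$ and $LFP$ cases of the Truth lemma matches the paper's own argument for that lemma. The only mild overreach is your final point about "validity of the axioms and rules" in $\frM_{\Gamma^*}$: soundness of the calculus over Henkin structures is not part of what this proposition asserts and is not needed here, but including it does no harm.
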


\begin{proof}
By construction of $\Gamma^*$ which contains \folfp-Henkin witnesses, this is immediate (we introduced a
monadic predicate for each parametrically definable subset).
\end{proof}

\begin{lem}(Truth lemma) For every \folfp formula $\varphi$, $\frM_{\Gamma^*}, g_{\Gamma^*} \models \varphi$ iff $\varphi \in \Gamma^*$.
\end{lem}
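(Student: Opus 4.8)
The plan is to prove the Truth Lemma by induction on the structure of the \folfp-formula $\varphi$, simultaneously for all assignments obtained from $g_{\Gamma^*}$ by reassigning finitely many variables to named elements (equivalently, since every element of $M$ is $|t|$ for some term $t$, and every admissible set is $A_P$ for some predicate $P$, we only ever need to deal with assignments of this canonical form). First I would handle the atomic cases: for $t_1=t_2$ this is the definition of $\equiv_{\Gamma^*}$; for $Rt_1\ldots t_n$ and for $Xx$ (i.e.\ $Pt$ with $P$ a set predicate, or $A_X$) this is immediate from the definition of $\frM_{\Gamma^*}$ and the fact that $\Gamma^*$ is maximal consistent (so $FO6$ guarantees well-definedness modulo $\equiv_{\Gamma^*}$). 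The Boolean cases ($\wedge,\vee,\rightarrow,\neg$) follow routinely from maximal consistency, using $FO1$ and modus ponens.

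Next I would treat the first-order quantifier case $\varphi=\exists x\,\psi$ (dually $\forall x$, via the defined abbreviation). The direction ``$\varphi\in\Gamma^*$ implies $\frM_{\Gamma^*},g_{\Gamma^*}\models\varphi$'' uses the first Henkin-witness condition: if $\neg\forall x\,\neg\psi\in\Gamma^*$ then $\neg\neg\psi[x/t]\in\Gamma^*$, hence $\psi[x/t]\in\Gamma^*$, for some witness term $t$, and the induction hypothesis (applied to the assignment sending $x$ to $|t|$, together with the substitution lemma relating $\frM_{\Gamma^*},g_{\Gamma^*}\models\psi[x/t]$ to $\frM_{\Gamma^*},g_{\Gamma^*}[|t|/x]\models\psi$) gives the desired satisfaction. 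The converse direction uses $FO2$ (instantiation): if some $|t|$ witnesses $\psi$ in the model, then by induction $\psi[x/t]\in\Gamma^*$, and $FO2$ plus maximal consistency yields $\exists x\,\psi\in\Gamma^*$.

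The main case, and the one I expect to be the genuine obstacle, is the fixed-point case $\varphi=[LFP_{Xx}\psi(x,X)]y$. For the direction $\varphi\in\Gamma^*\Rightarrow\frM_{\Gamma^*},g_{\Gamma^*}\models\varphi$: unfolding the Henkin semantics clause (Definition \ref{hensem}), I must show that every admissible $A=A_P\in\mathbb{A}_{\frM_{\Gamma^*}}$ closed under $\psi$ contains $g_{\Gamma^*}(y)=|y|$. Closure of $A_P$ under $\psi$ means that for every term $t$, $\frM_{\Gamma^*},g_{\Gamma^*}[|t|/x,A_P/X]\models\psi(x,X)$ implies $|t|\in A_P$, i.e.\ (by induction hypothesis and a substitution lemma that also handles the set-variable substitution $X\mapsto P$) $\psi(x,P)[x/t]\in\Gamma^*$ implies $Pt\in\Gamma^*$; this should give $\forall x(\psi(x,P)\to Px)\in\Gamma^*$, and then the \folfp axiom (Figure \ref{fig:folfp-axioms}, with $\psi$ there taken to be $P$) together with $\varphi\in\Gamma^*$ forces $Py\in\Gamma^*$, i.e.\ $|y|\in A_P$. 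Conversely, if $\varphi\notin\Gamma^*$ then $\neg\varphi\in\Gamma^*$, so by the second disjunct of the Henkin-witness condition there is a fresh predicate $P$ with $\neg Py\wedge\neg\exists x(\neg Px\wedge\psi(P,x))\in\Gamma^*$; this says precisely that $A_P$ is closed under $\psi$ (by the induction hypothesis, and noting $\neg\exists x(\neg Px\wedge\psi(P,x))$ unpacks to closure) while $|y|\notin A_P$, witnessing $\frM_{\Gamma^*},g_{\Gamma^*}\not\models\varphi$ via the Henkin clause.

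The delicate points I would be careful about: first, the substitution lemma that relates syntactic substitution $\psi(x,\psi')$ (replacing the set variable $X$ by a predicate, with the variable renaming flagged in Figure \ref{fig:folfp-axioms}) to the corresponding semantic reassignment $X\mapsto A_P$ in $\frM_{\Gamma^*}$, since this is where set-variable substitution interacts with capture-avoidance and with the positivity requirement on $\psi$; second, making sure throughout that ``admissible set'' really does range exactly over the $A_P$ (which holds because $\frM_{\Gamma^*}$ is a \folfp-Henkin structure, as already proved), so that quantification over $\mathbb{A}_{\frM_{\Gamma^*}}$ in the semantic clause matches quantification over predicates $P$ in the proof-theoretic side; and third, the bookkeeping that the Henkin witnesses were introduced for \emph{all} formulas with their free variables, so that the induction hypothesis is always applicable to the relevant instances $\psi(x,P)[x/t]$. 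Once the Truth Lemma is in hand, Theorem \ref{hencomp} follows in the usual way: given a consistent $\Gamma$, extend it by Lemma \ref{folfplind} to a maximal consistent $\Gamma^*$ with Henkin witnesses, and $\frM_{\Gamma^*}$ is the desired \folfp-Henkin model.
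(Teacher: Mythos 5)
Your proposal is correct and follows essentially the same route as the paper's proof: induction on formula structure, with the atomic, Boolean and first-order quantifier cases handled classically via maximal consistency, FO2 and the $\forall$-Henkin witnesses, and the $LFP$ case resolved by combining the induction hypothesis (together with the fact that the admissible sets are exactly the $A_T$), the \folfp axiom instantiated with the predicate $P$, and the Henkin-witness condition for negated $LFP$-formulas to produce the counterexample set. The only difference is organizational --- you state both implications in the contrapositive-free direction where the paper argues one of them by contraposition --- and the ``delicate points'' you flag (the substitution lemma for set variables, the identification of $\mathbb{A}_{\frM_{\Gamma^*}}$ with the named sets, and the availability of witnesses for formulas in the expanded signature) are exactly the ones the paper's construction of $\Gamma^*$ is designed to discharge.
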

\begin{proof}
By induction on $\varphi$.

The base case follows from the definition of $\frM_{\Gamma^*}$ together with the maximality of $\Gamma^*$.
Now consider the inductive step:
\begin{iteMize}{$\bullet$}
\item   Boolean connectives and FO quantifier: exactly as in \fo (see
  \cite[page 138]{Enderton}), basically, for the \fo quantifier step we rely on the fact that $\Gamma^*$ contains \folfp~ Henkin witnesses and we use the $\theta_{3n+2}$ formulas introduced in the proof of Lemma \ref{folfplind}.
\item   $LFP$ operator: we want to show that
\begin{center}

$\frM_{\Gamma^*}, g_{\Gamma^*} \models [LFP_{Xx}\varphi]y$ iff
$[LFP_{Xx}\varphi]y\in\Gamma^*$
\end{center}
\begin{iteMize}{$-$}
\item   We first show that
\begin{center}
$\frM_{\Gamma^*}, g_{\Gamma^*} \models
    [LFP_{Xx}\varphi]y$ implies
    $[LFP_{Xx}\varphi]y\in\Gamma^*$.
    \end{center}
    Assume $\frM_{\Gamma^*}, g_{\Gamma^*}
    \models [LFP_{Xx}\varphi]y$, i.e., for all monadic predicate in $\sigma^*$ or set variable $T$,
    if $g_{\Gamma^*}(y) \notin A_{T}$ then there exists $|t| \in M$, such that $|t| \notin A_{T}$ and
    $\frM_{\Gamma^*}, g_{\Gamma^*}[x/|t|,X/A_{T}] \models \varphi$. It follows by induction hypothesis that
    for every such $T$, if $Ty \not \in \Gamma^*$, then there exists a term $t$ such that $Tt \not \in \Gamma^*$ and
    $\varphi(t,T) \in \Gamma^*$. By maximal consistency of $\Gamma^*$ and using the contraposition of the FO2 axiom, it follows that  
    for all monadic predicate in $\sigma^*$ or set variable $T$ such that $Ty \not \in \Gamma^*$, it holds that
    $\exists x (\neg Tx \wedge \varphi(x,T)) \in \Gamma^*$.
    Now suppose $[LFP_{Xx}\varphi]y \notin \Gamma^*$. By maximal consistency of $\Gamma^*$, we get $\neg[LFP_{Xx}\varphi]y \in \Gamma^*$.
    Then as $\Gamma^*$ contains \folfp~ Henkin witnesses, there is a predicate $T$ such that for some $n$, $\theta_{3n-1} \in \Gamma^*$ is of the form $\neg[LFP_{Xx}\varphi]y \rightarrow (\neg Ty \wedge \neg \exists x(\neg Tx \wedge \varphi(T,x)))$. By maximal consistency of $\Gamma^*$, it follows that $\neg Ty \wedge \neg \exists x(\neg Tx \wedge \varphi(T,x)) \in \Gamma^*$. Hence there is a predicate $T$ such that $Ty \not\in \Gamma^*$ and $\neg \exists x (\neg Tx \wedge \varphi(x,T)) \in \Gamma^*$. But that contradicts the consistency of
    $\Gamma^*$, as we previously showed that whenever $Ty \not\in \Gamma^*$, then also  $\exists x (\neg Tx \wedge \varphi(x,T)) \in \Gamma^*$. Then $\neg[LFP_{Xx}\varphi]y \not\in \Gamma^*$ and by maximal consistency of $\Gamma^*$, $[LFP_{Xx}\varphi]y \in \Gamma^*$.
\item   We now show that $[LFP_{Xx}\varphi]y \in\Gamma^*$ implies
    $\frM_{\Gamma^*}, g_{\Gamma^*} \models
    [LFP_{Xx}\varphi]y$. We consider the contraposition
      \begin{center}
      $\frM_{\Gamma^*}, g_{\Gamma^*} \not\models [LFP_{Xx}\varphi]y$ implies
    $[LFP_{Xx}\varphi]y\not\in\Gamma^*$.
    \end{center}
    Assume $\frM_{\Gamma^*}, g_{\Gamma^*} \not\models [LFP_{Xx}\varphi]y$. So $\frM_{\Gamma^*},
    g_{\Gamma^*} \models \neg [LFP_{Xx}\varphi]y$ and there exists a monadic predicate in $\sigma^*$ or a set variable $T$ such that $A_T \in \mathbb{A}_{\frM_{\Gamma^*}}$, $g(y) \notin A_T$ and
    for all $|t| \in M$, $|t|\in A_T$ or $\frM_{\Gamma^*}, g_{\Gamma^*}[x/|t|,X/T]
    \models \neg \varphi$. By induction hypothesis $Ty \not\in \Gamma^*$ and for all term $t$, either $Tt \in \Gamma^*$, or $\neg \varphi(t,T) \in \Gamma^*$. By maximal consistency of $\Gamma^*$, for all term $t$,
    $Tt \vee \neg \varphi(t,T) \in \Gamma^*$. Now assume $\neg \forall x(Tx \vee \neg \varphi(x,T)) \in \Gamma^*$. As $\Gamma^*$ contains Henkin witnesses, there is some $n$ and some term $t$ such that $\theta_{3n-2} \in \Gamma^*$ is of the  form $\neg \forall x (Tx \vee \neg \varphi(x,T))\rightarrow (Tt \vee \neg \varphi(t,T))$ and hence $Tt \vee \neg \varphi(t,T) \in \Gamma^*$, which contradicts the maximal consistency of $\Gamma^*$. Hence $\forall x(Tx \vee \neg \varphi(x,T)) \in \Gamma^*$.
 By maximal consistency of $\Gamma^*$, $\neg Ty \wedge \forall x (Tx
    \vee \neg \varphi(T,x)) \in \Gamma^*$ and so also $\neg Ty \wedge \neg \exists x(\neg Tx \wedge \varphi(T,x)) \in \Gamma^*$. Now suppose $[LFP_{xX}\varphi]y \in \Gamma^*$. Then by the LFP axiom,
    for every monadic predicate  in $\sigma^*$ or set variable $T$, we get that $\neg Ty \rightarrow \exists x(\neg T(x) \wedge \varphi(x,T)) \in \Gamma^*$ and so $\neg(\neg Ty \wedge \neg \exists x(\neg Tx \wedge \varphi(T,x))) \in \Gamma^*$. But that
    contradicts the maximal consistency of $\Gamma^*$. 
   \end{iteMize}
\end{iteMize}
\end{proof}

\begin{thm}
Every consistent set $\Gamma$ of \folfp-formulas is satisfiable in a \folfp-Henkin model.\label{folfpcomp}
\end{thm}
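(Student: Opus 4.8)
The plan is to assemble the machinery developed above into the standard Henkin-style term-model argument; at this point the theorem is essentially a corollary of the preceding lemmas. Suppose $\Gamma$ is a consistent set of \folfp-formulas, which we may take to be in a countable vocabulary $\sigma$. First I would apply the \folfp Lindenbaum Lemma (Lemma~\ref{folfplind}) to extend $\Gamma$ to a maximal consistent set $\Gamma^*$ of $\sigma^*$-formulas with $\Gamma \subseteq \Gamma^*$, where $\sigma^* = \sigma \cup \{c_n \mid n \in \mathbb{N}\} \cup \{P_n \mid n \in \mathbb{N}\}$, and such that $\Gamma^*$ contains \folfp-Henkin witnesses in the sense of Definition~\ref{folfpwit}.

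Next I would form the term structure $\frM_{\Gamma^*}$ together with the valuation $g_{\Gamma^*}$ exactly as in the definition above: the domain consists of the equivalence classes $|t|$ of \fo-terms under $\equiv_{\Gamma^*}$, the set of admissible subsets is $\{A_T : T \text{ a set variable or a monadic predicate}\}$ with $A_T = \{|t| : Tt \in \Gamma^*\}$, and the interpretation of the predicates and constants is read off from membership in $\Gamma^*$. By the proposition above, $\frM_{\Gamma^*}$ is indeed a \folfp-Henkin structure: the second Henkin-witness condition provides, for every \folfp-formula $\varphi$ and assignment, a monadic predicate $P$ with $\forall x (Px \leftrightarrow \varphi) \in \Gamma^*$, so the admissible sets are closed under parametric \folfp-definability. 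Finally I would invoke the Truth Lemma, which gives $\frM_{\Gamma^*}, g_{\Gamma^*} \models \psi$ iff $\psi \in \Gamma^*$ for every \folfp-formula $\psi$; since $\Gamma \subseteq \Gamma^*$, this yields $\frM_{\Gamma^*}, g_{\Gamma^*} \models \psi$ for all $\psi \in \Gamma$, so $\Gamma$ is satisfiable in the \folfp-Henkin model $\frM_{\Gamma^*}$. Together with soundness of $\vdash_\folfp$ on \folfp-Henkin-structures, this completes the proof of Theorem~\ref{hencomp} for the \folfp case.

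The genuine obstacles have all been dealt with in the lemmas preceding this theorem, rather than in the theorem itself. The two delicate points were: (i) showing that adjoining a definitional axiom $\forall x(\varphi \leftrightarrow Px)$ for a \emph{fresh} monadic predicate $P$ preserves consistency (Lemma~\ref{lemma6}), via a uniform substitution of $\varphi$ for $Px$ throughout a hypothetical proof of $\bot$ — this is exactly what makes the second Henkin-witness clause harmless and is needed to rule out case~(3) in the consistency claim of Lemma~\ref{folfplind}; and (ii) the $LFP$-clause of the Truth Lemma, where one must carefully combine the \folfp axiom, the \folfp generalization rule, and the witnessing formulas $\theta_{3n-1}$ to pass between ``$[LFP_{Xx}\varphi]y \in \Gamma^*$'' and the semantic condition quantifying over all admissible sets $A_T$. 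Given those, no further argument is required here beyond chaining Lemma~\ref{folfplind}, the construction of $\frM_{\Gamma^*}$, and the Truth Lemma.
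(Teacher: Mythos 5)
Your proposal is correct and follows exactly the paper's own argument: extend $\Gamma$ to a maximal consistent set $\Gamma^*$ with \folfp-Henkin witnesses via the Lindenbaum Lemma, build the term model $\frM_{\Gamma^*}$ with valuation $g_{\Gamma^*}$, and conclude by the Truth Lemma together with $\Gamma\subseteq\Gamma^*$. Your added remarks correctly locate the genuine work in Lemma~\ref{lemma6} and the $LFP$-clause of the Truth Lemma, which the paper likewise treats as the substantive steps preceding this theorem.
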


\begin{proof}
First turn $\Gamma$ into a \folfp maximal consistent set $\Gamma^*$ with \folfp-Henkin witnesses in a possibly richer signature (with extra
individual constants and monadic predicates) $\sigma^*$. Then build a structure $\frM_{\Gamma^*}$ out of this $\Gamma^*$. Then the structure
$\frM_{\Gamma^*}$ satisfies $\Gamma^*$ under the valuation $g_{\Gamma^*}$ and hence it satisfies also $\Gamma$ ($\Gamma$ being a subset of $\Gamma^*$).
\end{proof}

Compactness follows directly from Definition \ref{defax} and Theorem
\ref{hencomp}, i.e., a possibly infinite set of $\Lambda$-sentences
has a $\Lambda$-Henkin model if and only if every finite subset of it
has a $\Lambda$-Henkin model. It also follows directly from Theorem
\ref{hencomp} that $\vdash_\Lambda^{tree}$ is complete on the class of
its $\Lambda$-Henkin-models. Nevertheless, by compactness the axioms
of $\vdash_\Lambda^{tree}$ also have infinite models. We overcome this problem by defining a slightly larger class of Henkin structures, which we will call \emph{definably well-founded $\Lambda$-quasi-trees}.\footnote{For a nice picture of a quasi-tree that is \emph{not} definably well-founded, see \cite{1995}.}
\begin{defi}
  A $\Lambda$-\emph{quasi-tree} is any $\Lambda$-Henkin structure $$(T,<,\prec,P_1,\ldots,P_n,\mathbb{A}_T)$$
  (where $\mathbb{A}_T$ is the set of admissible subsets of $T$)
  satisfying the axioms T1--T10 of
  Figure~\ref{fig:tree-axioms}. A $\Lambda$-quasi-tree is \emph{definably
    well founded} if, in addition, it satisfies all $\Lambda$-instances of the
  induction scheme Ind of Figure~\ref{fig:tree-axioms}.
\end{defi}

With this definition, we obtain from Theorem~\ref{hencomp} the following:

\begin{cor}\label{henktreecomp}
A set of $\Lambda$-formulas is $\vdash_\Lambda^{tree}$-consistent if
and only if it is satisfiable in a definably well-founded $\Lambda$-quasi-tree.
\end{cor}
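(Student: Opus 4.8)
The plan is to derive Corollary~\ref{henktreecomp} directly from Theorem~\ref{hencomp} instantiated with the set of axioms defining $\vdash_\Lambda^{tree}$. Write $\Sigma_{tree}$ for the set of $\Lambda$-formulas consisting of T1--T10 together with all $\Lambda$-instances of the induction scheme Ind. By construction, $\vdash_\Lambda^{tree}$ is exactly $\vdash_\Lambda$ augmented with $\Sigma_{tree}$ as non-logical axioms; hence for any set $\Gamma$ of $\Lambda$-formulas, $\Gamma$ is $\vdash_\Lambda^{tree}$-consistent if and only if $\Gamma \cup \Sigma_{tree}$ is $\vdash_\Lambda$-consistent. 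So the statement to prove becomes: $\Gamma \cup \Sigma_{tree}$ is $\vdash_\Lambda$-consistent iff it is satisfiable in a definably well-founded $\Lambda$-quasi-tree.

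For the left-to-right direction, I would argue as follows. Suppose $\Gamma$ is $\vdash_\Lambda^{tree}$-consistent, so $\Gamma \cup \Sigma_{tree}$ is $\vdash_\Lambda$-consistent. By Theorem~\ref{hencomp} (contrapositive of the completeness statement, i.e. the ``model existence'' half — for \folfp this is exactly Theorem~\ref{folfpcomp}, and the \mso and \fotc cases are noted to be analogous), $\Gamma \cup \Sigma_{tree}$ has a $\Lambda$-Henkin-model $\frM$. Since $\frM \models \Sigma_{tree}$, in particular $\frM$ satisfies T1--T10, so $\frM$ is a $\Lambda$-quasi-tree by definition; and since $\frM$ also satisfies every $\Lambda$-instance of Ind, $\frM$ is definably well founded. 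As $\frM \models \Gamma$, this gives the desired model. For the right-to-left direction, suppose $\Gamma$ is satisfiable in a definably well-founded $\Lambda$-quasi-tree $\frM$. Then $\frM$ is in particular a $\Lambda$-Henkin-structure satisfying $\Gamma \cup \Sigma_{tree}$, so by the soundness half of Theorem~\ref{hencomp} (if $\Gamma \cup \Sigma_{tree} \vdash_\Lambda \bot$ then $\bot$ holds in every $\Lambda$-Henkin-model of $\Gamma \cup \Sigma_{tree}$, which is absurd), $\Gamma \cup \Sigma_{tree}$ is $\vdash_\Lambda$-consistent, i.e. $\Gamma$ is $\vdash_\Lambda^{tree}$-consistent.

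Since both directions are immediate unwindings of Theorem~\ref{hencomp}, there is essentially no obstacle here; the only point requiring a word of care is the bookkeeping that the axioms T1--T10 and the induction scheme are themselves $\Lambda$-formulas (so that ``adding them to $\vdash_\Lambda$'' and ``requiring a Henkin-model to satisfy them'' are the two sides of the same coin), and that Theorem~\ref{hencomp} was proved for $\vdash_\Lambda$ with an arbitrary set $\Gamma$ of $\Lambda$-formulas on the left, which is exactly what lets us absorb $\Sigma_{tree}$ into $\Gamma$. I would state the proof in two or three sentences along these lines, referencing Theorem~\ref{hencomp} and the definition of $\vdash_\Lambda^{tree}$ and of definably well-founded $\Lambda$-quasi-tree.
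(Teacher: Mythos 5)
Your proof is correct and is exactly the argument the paper intends: the corollary is stated as an immediate consequence of Theorem~\ref{hencomp}, obtained by absorbing the tree axioms T1--T10 and the $\Lambda$-instances of the induction scheme into the set $\Gamma$ and observing that a $\Lambda$-Henkin-model of these axioms is, by definition, a definably well-founded $\Lambda$-quasi-tree. The paper leaves this unwinding implicit; your write-up just makes the bookkeeping explicit.
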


\section{Operations on Henkin-Structures}
\label{sec4}

Let $\Lambda \in \{\mso,\fotc,\folfp\}$. As noted in Remark \ref{rem}, every finite $\Lambda$-Henkin structure is also a standard structure.
Hence, when working in finite model theory, it is enough to rely on the usual \fo constructions to define operations on structures.
On the other hand, even though our main completeness result concerns finite trees, inside the proof we need to consider infinite ($\Lambda$-Henkin) structures and operations on them. In this context, methods for forming new structures out of existing ones have to be redefined carefully.
We first propose a notion of substructure of a $\Lambda$-Henkin-structure
generated by one of its parametrically definable admissible subsets:
\begin{defi}
[$\Lambda$-substructure]
    Let $\frM=(dom(\frM),Pred,\mathbb{A}_\frM)$ be a $\Lambda$-Henkin-structure (where $Pred$ is the interpretation of the predicates). We
    call $\frM_\fo=(dom(\frM), Pred)$ the relational structure underlying $\frM$. Given a parametrically definable set $A \in \mathbb{A}_\frM$, the
    $\Lambda$-substructure of $\frM$ generated by $A$ is the structure $\frM \upharpoonright A=(\langle A \rangle_{\frM_\fo},\mathbb{A}_{\frM \upharpoonright A})$,
where $\langle A \rangle_{\frM_\fo}$ is the relational substructure of
$\frM_\fo$ generated by $A$ (note that $A$ forms the domain of
$\langle A \rangle_{\frM_\fo}$, as the vocabulary is purely
relational) and $\mathbb{A}_{\frM \upharpoonright A}= \{X \cap A | X \in \mathbb{A}_\frM\}$.
\label{sub}
\end{defi}

Note that in the case of \mso and \folfp, we could also have defined $\mathbb{A}_{\frM \upharpoonright A}$ in an alternative way:

\begin{prop}
Take $\frM$ and $A$ as previously and consider the structure
$(\frM \upharpoonright A)'=(\langle A \rangle_{\frM_\fo},
\mathbb{A}_{(\frM \upharpoonright A)'})$,
where
$\mathbb{A}_{(\frM \upharpoonright A)'}=\{X \subseteq A | X
    \in \mathbb{A}_\frM\}$. Whenever $\frM$ is a \mso-Henkin structure or a \folfp-Henkin structure, $\frM \upharpoonright A$ and $(\frM \upharpoonright A)'$ are one and the same structure.
\label{altsub}
\end{prop}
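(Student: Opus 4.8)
The plan is to show that the two structures $\frM \upharpoonright A$ and $(\frM \upharpoonright A)'$ have not only the same underlying relational structure — which is immediate, since both are $\langle A \rangle_{\frM_\fo}$ — but also the same family of admissible subsets. So everything reduces to proving that
$$\{X \cap A \mid X \in \mathbb{A}_\frM\} = \{X \subseteq A \mid X \in \mathbb{A}_\frM\}.$$
The inclusion from right to left is trivial: if $X \in \mathbb{A}_\frM$ and $X \subseteq A$, then $X = X \cap A$, so $X$ already has the form $X \cap A$ with $X \in \mathbb{A}_\frM$.

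For the converse inclusion, I would take an arbitrary $X \in \mathbb{A}_\frM$ and argue that $X \cap A$ itself belongs to $\mathbb{A}_\frM$ (it is obviously a subset of $A$). Here is where the hypothesis that $A$ is admissible, together with closure of $\frM$ under parametric $\Lambda$-definability, comes in. Pick two distinct set variables $X_1, X_2$ and a \fo-variable $z$, and let $g$ be any assignment into $\frM$ with $g(X_1) = X$ and $g(X_2) = A$; this is a legitimate assignment into a frame because $X, A \in \mathbb{A}_\frM$. Then, unwinding the truth clause for atoms of the form $Xx$,
$$X \cap A = \{a \in dom(\frM) \mid \frM, g[a/z] \models X_1 z \wedge X_2 z\},$$
and since $X_1 z \wedge X_2 z$ is both an \mso-formula and a \folfp-formula, the assumption that $\frM$ is a $\Lambda$-Henkin-structure (for $\Lambda \in \{\mso,\folfp\}$) yields $X \cap A \in \mathbb{A}_\frM$. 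This establishes the missing inclusion, hence $\mathbb{A}_{\frM \upharpoonright A} = \mathbb{A}_{(\frM \upharpoonright A)'}$ and the proposition.

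There is no real obstacle here; the only point to watch is the role of the logic. The argument uses a defining formula with free set variables to witness $X \cap A$ as a parametrically definable set, and such variables are available in \mso and \folfp but not in \fotc — which is exactly why the statement is restricted to those two logics. One should also keep in mind the mild subtlety that closure under parametric $\Lambda$-definability only forces \emph{definable} sets into $\mathbb{A}_\frM$, so the equality genuinely depends on exhibiting $X \cap A$ as such a definable set, not on any a priori closure of $\mathbb{A}_\frM$ under intersection.
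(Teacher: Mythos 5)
Your proof is correct and follows essentially the same route as the paper: the nontrivial direction in both cases reduces to observing that $X \cap A$ is parametrically definable in $\frM$ by a conjunction of two set-membership atoms with $X$ and $A$ as set parameters, whence it lies in $\mathbb{A}_\frM$ by closure under parametric $\Lambda$-definability. Your version is in fact slightly more careful than the paper's (which writes the defining formula as if $A$ and $B'$ were predicates rather than values of set variables under an assignment), and your closing remark correctly pinpoints why the statement is restricted to \mso and \folfp.
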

\begin{proof}
Indeed, take $B \in \mathbb{A}_{\frM \upharpoonright A}$. So there exists $B' \in \mathbb{A}_\frM$ such that $B=B' \cap A$. We want to show that also $B' \cap A \in \mathbb{A}_{(\frM \upharpoonright A)'}$ i.e. $B'\cap A \subseteq A$ (which obviously holds) and $B' \cap A\in \mathbb{A}_\frM$. The second condition holds because both $B'$ and $A$ are parametrically definable in $\frM$, so their intersection also is ($B' \cap A=\{x ~|~ \frM \models Ax \wedge B'x\}$). Conversely, consider $B \in \mathbb{A}_{(\frM \upharpoonright A)'}$. As
$B\subseteq A$ and $B\in \mathbb{A}_\frM$ it follows that $B \in \mathbb{A}_{\frM \upharpoonright A}$ (we can take $B=B \cap A$).
\end{proof}
Now, in order to show that $\Lambda$-substructures are $\Lambda$-Henkin-structures, we introduce a notion of \emph{relativization} and a corresponding \emph{relativization lemma}.  This lemma establishes that for every $\Lambda$-Henkin-structure $\frM$ and $\Lambda$-substructure $\frM \upharpoonright A$ of $\frM$ (with $A$ a set parametrically definable in $\frM$), if a set is parametrically definable in $\frM \upharpoonright A$ then it is also parametrically definable in $\frM$. This result will be useful again in Section \ref{sectproof}.
\begin{defi}[Relativization mapping]
Given two $\Lambda$-formulas $\varphi$, $\psi$ having no variables
in common and given a \fo variable $x$ occurring free in $\psi$,
we define $REL(\varphi,\psi,x)$ by induction on the complexity of
$\varphi$ and call it the \emph{relativization of $\varphi$ to
$\psi$}:
\begin{iteMize}{$\bullet$}
 \item If $\varphi$ is an atom, $REL(\varphi,\psi,x)=\varphi$,
 \item If $\varphi:\approx\varphi_1 \wedge \varphi_2$, $REL(\varphi,\psi,x)=REL(\varphi_1,\psi,x) \wedge REL(\varphi_2,\psi,x)$ (similar for $\vee, \rightarrow, \neg$),
 \item If $\varphi:\approx\exists y \chi$, $REL(\varphi,\psi,x)=\exists y (\psi[y/x] \land REL(\chi,\psi,x))$,
 \item If $\varphi:\approx\exists Y \chi$, $REL(\varphi,\psi,x)=\exists Y (\forall x(Yx \rightarrow \psi) \land REL(\chi,\psi,x))$,
 \item If $\varphi:\approx[TC_{yz}\chi](u,v)$,\\
  $REL(\varphi,\psi,x)=[TC_{yz} REL(\chi,\psi,x) \wedge \psi[y/x] \wedge \psi[z/x]](u,v)$,
 \item If $\varphi:\approx[LFP_{Xy}\chi]z$, $REL(\varphi,\psi,x)= [LFP_{Xy}\chi \wedge\psi[y/x]]z$.
\end{iteMize}
where $\psi[y/x]$ is the formula obtained by replacing in $\psi$ every occurrence of $x$ by $y$ and similarly for $\psi[z/x]$.
\end{defi}

Hence for instance, $REL(\exists y P(y),Q(x),x)=\exists y (P(y) \wedge Q(y))$, which is satisfied in any model $\frM$ of which the submodel induced by $Q$ contains an element satisfying $P$.

\begin{lem}[Relativization lemma]
Let $\frM$ be a $\Lambda$-Henkin-structure, $g$ a valuation on
$\frM$, $\varphi$, $\psi$ $\Lambda$-formulas having no variable in common and $A=\{x ~|~ \frM,
g \models \psi\}$. If $g(y)\in A$ for every variable $y$ occurring
free in $\varphi$ and $g(Y) \in \mathbb{A}_{\frM \upharpoonright A}$ for every
set variable $Y$ occurring free in $\varphi$, then $\frM,g \models
REL(\varphi,\psi,x) \Leftrightarrow \frM \upharpoonright A, g
\models \varphi$. \label{rel}
\end{lem}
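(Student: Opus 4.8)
The plan is to prove the Relativization Lemma by a straightforward induction on the complexity of $\varphi$, mirroring the clauses of the definition of $REL$. The base case is trivial: if $\varphi$ is an atom, then $REL(\varphi,\psi,x)=\varphi$, and since the relational substructure $\langle A\rangle_{\frM_\fo}$ inherits the interpretation of all (purely relational) predicates from $\frM_\fo$ restricted to $A$, and all free first-order variables of $\varphi$ are mapped by $g$ into $A$, the atom holds in $\frM$ under $g$ iff it holds in $\frM\upharpoonright A$ under $g$. The Boolean cases ($\wedge,\vee,\rightarrow,\neg$) are immediate from the induction hypothesis, once one checks that the hypotheses on $g$ (free variables landing in $A$, free set variables landing in $\mathbb{A}_{\frM\upharpoonright A}$) are preserved when passing to subformulas, which they are since the free variables of a subformula are among those of $\varphi$.

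The interesting cases are the quantifiers and the fixed-point/TC operators. For $\exists y\,\chi$, I would argue: if $\frM,g\models \exists y(\psi[y/x]\wedge REL(\chi,\psi,x))$, pick a witness $a$; then $a\in A$ because $\frM,g[a/y]\models\psi[y/x]$ means $a\in\{z\mid \frM,g\models\psi\}=A$ (here one uses that $\psi$ and $\varphi$ share no variables, so $g[a/y]$ agrees with $g$ on the variables of $\psi$ except it substitutes $a$ for the single free occurrence-slot $x$ — this is exactly what $\psi[y/x]$ encodes), and then the induction hypothesis applied to $\chi$ with the assignment $g[a/y]$ (whose hypotheses are now met since $a\in A$) gives $\frM\upharpoonright A, g[a/y]\models\chi$, hence $\frM\upharpoonright A, g\models\exists y\,\chi$. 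The converse direction is symmetric. For $\exists Y\,\chi$, the subtlety is that the quantifier in $\frM$ ranges over $\mathbb{A}_\frM$ while in $\frM\upharpoonright A$ it ranges over $\mathbb{A}_{\frM\upharpoonright A}=\{X\cap A\mid X\in\mathbb{A}_\frM\}$; the guard $\forall x(Yx\rightarrow\psi)$ restricts attention in $\frM$ to admissible sets that are already subsets of $A$, and by Proposition~\ref{altsub} (for \mso and \folfp) such sets are precisely the members of $\mathbb{A}_{\frM\upharpoonright A}$, so the correspondence of witnesses works. (For \fotc there is no set quantifier, so this clause does not arise.)

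For the $TC$ case, $REL([TC_{yz}\chi](u,v),\psi,x)=[TC_{yz}(REL(\chi,\psi,x)\wedge\psi[y/x]\wedge\psi[z/x])](u,v)$: here I would use Proposition~\ref{tcaltsem}'s characterization is \emph{not} available in general Henkin structures, so instead I work directly with the admissible-set semantics of $TC$. One checks that an admissible set $B\in\mathbb{A}_\frM$ witnessing failure of the relativized formula in $\frM$ corresponds, via intersection with $A$, to an admissible set of $\frM\upharpoonright A$ witnessing failure of $[TC_{yz}\chi](u,v)$ in $\frM\upharpoonright A$, and conversely; the conjuncts $\psi[y/x]$ and $\psi[z/x]$ in the relativized transition formula ensure the ``path'' stays inside $A$, matching the fact that in $\frM\upharpoonright A$ all elements lie in $A$ anyway. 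For the $LFP$ case, $REL([LFP_{Xy}\chi]z,\psi,x)=[LFP_{Xy}(\chi\wedge\psi[y/x])]z$, a similar argument with admissible sets applies: adding the conjunct $\psi[y/x]$ to the defining formula has the effect of intersecting the relevant operator's behavior with $A$, and one pushes this through the Henkin semantics of $LFP$ (quantifying over $\mathbb{A}_\frM$ resp. $\mathbb{A}_{\frM\upharpoonright A}$), again invoking Proposition~\ref{altsub} to identify the two families of admissible sets that are subsets of $A$.

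The main obstacle I anticipate is bookkeeping around the variable $x$ and the ``no variables in common'' hypothesis in the quantifier cases: one must be careful that when we substitute a fresh witness into a bound variable $y$ of $\varphi$, the formula $\psi[y/x]$ correctly expresses ``the witness belongs to $A$'', and that the induction hypothesis can legitimately be reapplied to the subformula with the updated assignment — in particular that $g[a/y]$ still satisfies the standing hypotheses of the lemma for the subformula. This is routine but must be spelled out, since the lemma's hypotheses mention \emph{all} free variables of $\varphi$, and after entering a quantifier a previously-bound variable becomes free in the subformula. The fixed-point and $TC$ cases are conceptually the most delicate because they are stated via closure conditions on admissible sets rather than via a direct recursive clause; however, the relativization gadget (conjoining $\psi$-guards) is tailored precisely so that the map $B\mapsto B\cap A$ and its right inverse (available by Proposition~\ref{altsub} in the \mso/\folfp cases, and trivially in the \fotc case where all admissible sets of the substructure are already subsets of $A$) furnish the required bijection between the witnessing admissible sets on the two sides.
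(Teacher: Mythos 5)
Your proposal is correct and follows essentially the same route as the paper's proof: induction on the complexity of $\varphi$, with the quantifier cases handled via the observation that $\psi[y/x]$ expresses membership in $A$, the set-quantifier case via Proposition~\ref{altsub}, and the $TC$/$LFP$ cases via the correspondence $C\mapsto A\cap C$ between admissible sets of $\frM$ and of $\frM\upharpoonright A$, using the $\psi$-guards to show the closure conditions transfer. The only detail worth spelling out that your sketch leaves implicit is in the $LFP$ case, where passing from the prefixed-point condition for $B=A\cap C$ to that for $C$ uses that $REL(\chi,\psi,x)$ is positive in $X$ (and that $X$ does not occur in $\psi$), which the paper invokes explicitly.
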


\begin{proof}
By induction on the complexity of $\varphi$. Let $g$ be an
assignment satisfying the required conditions. Base case:
$\varphi$ is an atom and $REL(\varphi,\psi,x)=\varphi$. So $\frM,g
\models \varphi \Leftrightarrow \frM \upharpoonright A,g \models
\varphi$ (by hypothesis, $g$ is a suitable assignment for both
models). Inductive hypothesis: the property holds for every
$\varphi$ of complexity at most $n$. Now consider $\varphi$ of
complexity $n+1$.
\begin{iteMize}{$\bullet$}
 \item  $\varphi :\approx\varphi_1 \wedge \varphi_2$ and $REL(\varphi_1 \wedge \varphi_2,\psi,x):\approx REL(\varphi_1,\psi,x)\wedge REL(\varphi_2,\psi,x)$. By induction hypothesis, the property holds for $\varphi_1$ and for $\varphi_2$. By the semantics of $\wedge$, it also holds for $\varphi_1 \wedge \varphi_2$. (Similar for $\vee, \rightarrow, \neg$.)
 \item $\varphi :\approx \exists y \chi$ and $REL(\exists y \chi):\approx\exists y (\psi[y/x] \land REL(\chi,\psi,x))$. By inductive hypothesis, for every node $a \in A$, $\frM,g[a/y] \models REL(\chi,\psi,x) \Leftrightarrow \frM \upharpoonright A, g[a/y] \models \chi$. Hence, by the semantics of $\exists$ and by definition of $A$, $\frM,g \models \exists y (\psi[y/x] \land REL(\chi,\psi,x)) \Leftrightarrow \frM \upharpoonright A, g \models \exists y \chi$.
 \item $\varphi :\approx\exists Y \chi$ and $REL(\exists Y
   \chi,\psi,x)=\exists Y (\forall x(Yx \rightarrow \psi) \land
   REL(\chi,\psi,x))$. As every admissible subset of $\frM
   \upharpoonright A$ is also admissible in $\frM$ (by Proposition
   \ref{altsub}) it follows by inductive hypothesis that for every $B
   \in \mathbb{A}_{\frM\upharpoonright A}$ with $B\subseteq A$, $\frM,g[B/Y] \models REL(\chi,\psi,x) \Leftrightarrow \frM \upharpoonright A, g[B/Y] \models \chi$. Hence, by the semantics of $\exists$ and by definition of $A$, $\frM,g \models \exists Y (\forall x(Yx \rightarrow \psi) \land REL(\chi,\psi,x)) \Leftrightarrow \frM \upharpoonright A, g \models \exists Y \chi$.
\item \sloppy$\varphi :\approx [TC_{yz}\chi](u,v)$ and $REL([TC_{yz}\chi](u,v),\psi,x)=[TC_{yz} REL(\chi,\psi,x) \wedge \psi[y/x] \wedge \psi[z/x]](u,v)$.
By definition of $TC$, the following are equivalent:
\begin{enumerate}[$1.$]
\item  $\frM \upharpoonright A, g \models [TC_{yz}\chi](u,v)$,
\item for all $B \in \mathbb{A}_{\frM \upharpoonright A}$, if $g(u) \in B$ and for all $a, b \in A$, $a\in B$ and $\frM \upharpoonright A, g[a/y,b/z] \models \chi$ implies $b\in B$, then $g(v)\in B$.
\end{enumerate}
By inductive hypothesis, for all $a, b \in A$,\\
 $\frM, g[a/y,b/z] \models REL(\chi,\psi,x) \Leftrightarrow \frM \upharpoonright A, g[a/y,b/z] \models \chi$. Hence $2. \Leftrightarrow 3.$:
\begin{enumerate}
\item [$3.$] for all $B \in \mathbb{A}_{\frM \upharpoonright A}$, if $g(u) \in B$ and for all $a,b\in A$, $a\in B$ and $\frM, g[a/y,b/z] \models REL(\chi,\psi,x)$ implies $b\in B$, then $g(v)\in B$,
\end{enumerate}
By definition of $A$, $3. \Leftrightarrow 4.$:
\begin{enumerate}
\item [$4.$] for all $B \in \mathbb{A}_{\frM \upharpoonright A}$, if $g(u) \in B$ and for all $a,b\in dom(\frM)$, $a\in B$ and $\frM, g[a/y,b/z] \models REL(\chi,\psi,x) \wedge \psi[y/x] \wedge \psi[z/x]$ implies $b\in B$, then $g(v)\in B$,
\end{enumerate}
We claim that $4. \Leftrightarrow 5.$:
\begin{enumerate}
\item [$5.$]for all $C \in \mathbb{A}_{\frM}$, if $g(u) \in C$ and for all $a, b \in dom(\frM)$, $a\in C$ and $\frM, g[a/y,b/z] \models REL(\chi,\psi,x) \wedge \psi[y/x] \wedge \psi[z/x]$ implies $b\in C$, then $g(v)\in C$,
\end{enumerate}
which, by the semantics of $TC$, is equivalent to:
\begin{enumerate}
\item   [$6.$]$\frM, g \models [TC_{yz} REL(\chi,\psi,x) \wedge \psi[y/x] \wedge \psi[z/x]](u,v)$.
\end{enumerate}
It is clear that $5. \Rightarrow 4.$. For the $4. \Rightarrow 5.$ direction, assume $4.$. Take any set $C\in \mathbb{A}_\frM$ such that $g(u) \in C$ and for all $a,b\in dom(\frM)$, $a\in C$ and $\frM, g[a/y,b/z] \models REL(\chi,\psi,x) \wedge \psi[y/x] \wedge \psi[z/x]$ implies $b\in C$. Let $B=A \cap C$. By Definition \ref{sub}, $B \in \mathbb{A}_{\frM \upharpoonright A}$. Now by our assumptions on $g$ and by definition of $A$, $g[a/y,b/z]$ only assigns points in $A$. So as $B = A \cap C$, $g(u) \in B$ and for all $a,b\in dom(\frM)$, $a\in B$ and $\frM, g[a/y,b/z] \models REL(\chi,\psi,x) \wedge \psi[y/x] \wedge \psi[z/x]$ implies $b\in B$. So by $4.$, $g(v) \in B$. As $B \subseteq C$, it follows that $g(v)\in C$.
\item   $\varphi:\approx[LFP_{Xy}\chi]z$ and $REL([LFP_{Xy}\chi]z,\psi,x):\approx [LFP_{Xy}\chi \wedge\psi[y/x]]z$.
By definition of $LFP$, the following are equivalent:
\begin{enumerate}
\item [$1.$] $\frM \upharpoonright A, g \models [LFP_{Xy}\chi]z$,
\item [$2.$]  for all $B \in \mathbb{A}_{\frM \upharpoonright A}$, if for all $a \in A$, $\frM\upharpoonright A, g[a/y,B/X] \models \chi$ implies $a \in B$, then $g(z)\in B$.
\end{enumerate}
By inductive hypothesis, for all $a\in A$, $B \in \mathbb{\frM \upharpoonright A}$, $\frM, g[a/y,B/X] \models REL(\chi,\psi,x) \Leftrightarrow \frM \upharpoonright A, g[a/y,B/X] \models \chi$. Hence $2.$ is equivalent to $3.$:
\begin{enumerate}
\item [$3.$] for all $B \in \mathbb{A}_{\frM \upharpoonright A}$, if for all $a \in A$, $\frM, g[a/y,B/X] \models REL(\chi,\psi,x)$ implies $a \in B$, then $g(z)\in B$,
\end{enumerate}
By definition of $A$, $3. \Leftrightarrow 4.$:
\begin{enumerate}
\item [$4.$] for all $B \in \mathbb{A}_{\frM \upharpoonright A}$, if for all $a \in dom(\frM)$,\\
 $\frM, g[a/y,B/X] \models REL(\chi,\psi,x)\wedge \psi[y/x]$ implies $a \in B$,\\ then $g(z)\in B$,
\end{enumerate}
We claim that $4. \Leftrightarrow 5.$:
\begin{enumerate}
\item [$5.$] for all $C \in \mathbb{A}_\frM$, if for all $a \in dom(\frM)$,\\
 $\frM, g[a/y,C/X] \models REL(\chi,\psi,x) \wedge \psi[y/x]$ implies $a \in C$,\\
  then $g(z)\in C$,
\end{enumerate}
which, by the semantics of $LFP$, is equivalent to:
\begin{enumerate}
\item   [$6.$]$\frM, g \models [LFP_{Xy} REL(\chi,\psi,x) \wedge \psi[y/x]]z$.
\end{enumerate}
It is clear that $5. \Rightarrow 4.$. For the $4. \Rightarrow 5.$ direction, assume $4.$. Take any set $C\in \mathbb{A}_\frM$ such that for all $a \in dom(\frM)$, $\frM, g[a/y,C/X] \models REL(\chi,\psi,x) \wedge \psi[y/x]$ implies $a \in C$.  Let $B=A \cap C$. By Definition \ref{sub}, $B \in \mathbb{A}_{\frM \upharpoonright A}$. Consider $a \in dom(\frM)$ such that $\frM, g[a/y,B/X] \models REL(\chi,\psi,x) \wedge \psi[y/x]$. As $REL(\chi,\psi,x)$ is positive in $X$ and $X$ does not occur in $\psi$, $\frM, g[a/y,C/X] \models REL(\chi,\psi,x) \wedge \psi[y/x]$. Also by hypothesis $a \in C$. Now as $\frM, g[a/y] \models \psi[y/x]$, by definition of $A$, $a \in A$. So $a \in A \cap C$, i.e, $a \in B$
and since we proved it for arbitrary $a \in dom(\frM)$, by $4.$, $g(z) \in B$. As $B \subseteq C$, it follows that $g(z)\in C$.
\end{iteMize}
\end{proof}

\begin{thm}
Let $\frM$ and $A$ be as in Definition \ref{sub}. Then $\frM \upharpoonright A$ is a $\Lambda$-\emph{Henkin}-structure.
\end{thm}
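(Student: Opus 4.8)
The plan is to verify the one nontrivial requirement, that $\frM\upharpoonright A$ is closed under parametric $\Lambda$-definability; all the remaining frame data is inherited, the domain of $\frM\upharpoonright A$ being $A$ itself (the vocabulary is relational, so the relational substructure generated by $A$ has domain exactly $A$) and $\mathbb{A}_{\frM\upharpoonright A}\subseteq\wp(A)$ by construction. So fix a $\Lambda$-formula $\varphi$, a distinguished \fo variable $x_0$, and an assignment $g$ into $\frM\upharpoonright A$; I must show that $B:=\{a\in A\mid\frM\upharpoonright A,g[a/x_0]\models\varphi\}$ lies in $\mathbb{A}_{\frM\upharpoonright A}$.

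First I would invoke the standing hypothesis of Definition~\ref{sub} that $A$ is parametrically definable in $\frM$, fixing a $\Lambda$-formula $\psi$ and a \fo variable $x$ with $A=\{a\mid\frM,g[a/x]\models\psi\}$; by renaming the bound and free variables of $\psi$ (and adjusting $g$ only on the resulting fresh variables, i.e. on the parameters of $\psi$) I may assume that no variable of $\psi$ occurs in $\varphi$ or equals $x_0$. Then $g$, and hence every $g[a/x_0]$ with $a\in A$, satisfies the hypotheses of the Relativization Lemma~\ref{rel} for $\varphi,\psi,x$: each free \fo variable of $\varphi$ is sent into $A=dom(\frM\upharpoonright A)$, and each free set variable of $\varphi$ into $\mathbb{A}_{\frM\upharpoonright A}$ — which by Proposition~\ref{altsub} is contained in $\mathbb{A}_\frM$ in the \mso and \folfp cases, so that $g$ is simultaneously a legitimate assignment into $\frM$ (in the \fotc case there are no set variables at all). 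Lemma~\ref{rel} then gives, for every $a\in A$, $\frM,g[a/x_0]\models REL(\varphi,\psi,x)\Leftrightarrow\frM\upharpoonright A,g[a/x_0]\models\varphi$, so that $B=A\cap C$ where $C:=\{a\in dom(\frM)\mid\frM,g[a/x_0]\models REL(\varphi,\psi,x)\}$.

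To conclude I would observe that $REL(\varphi,\psi,x)$ is again a $\Lambda$-formula: the only clause to watch is the one for $[LFP_{Xy}\chi]z$, relativized to $[LFP_{Xy}\,\chi\wedge\psi[y/x]]z$, which is well formed because $\chi$ is positive in $X$ and $X$ does not occur in $\psi$, so $\chi\wedge\psi[y/x]$ remains positive in $X$; the $TC$ and second-order quantifier clauses are immediately within the relevant logic. Hence $C\in\mathbb{A}_\frM$, since $\frM$ is a $\Lambda$-Henkin-structure, and therefore $B=A\cap C\in\mathbb{A}_{\frM\upharpoonright A}$ by Definition~\ref{sub}, as required. (If $x_0$ does not occur free in $\varphi$ the same computation applies, $B$ then being $A$ or $\emptyset$; and one assumes throughout that $A\neq\emptyset$ so that $\frM\upharpoonright A$ has a nonempty domain.) I expect the only real friction to lie in this bookkeeping — producing a defining formula for $A$ with variables disjoint from those of $\varphi$, absorbing its parameters into $g$, and the positivity check needed for \folfp — the statement itself then following in one line from Lemma~\ref{rel} and the shape of $\mathbb{A}_{\frM\upharpoonright A}$.
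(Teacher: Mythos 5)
Your proposal is correct and follows essentially the same route as the paper's own proof: express $A$ by a formula $\psi$ with variables disjoint from those of $\varphi$, merge the two assignments, and apply the Relativization Lemma~\ref{rel} to land the defined set in $\mathbb{A}_\frM$ and hence, via Definition~\ref{sub}, in $\mathbb{A}_{\frM\upharpoonright A}$. If anything you are slightly more careful than the paper, which asserts $B$ is exactly the $REL(\varphi,\psi,x)$-definable subset of $dom(\frM)$ (the lemma only controls points of $A$, so the correct identity is your $B=A\cap C$), and you also add the worthwhile observation that $REL(\varphi,\psi,x)$ stays inside $\Lambda$ (positivity in the \folfp clause).
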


\begin{proof}
Take $B$ parametrically definable in $\frM \upharpoonright A$,
i.e., there is a $\Lambda$-formula $\varphi(y)$ and an assignment
$g$ such that $B=\{a \in dom(\frM \upharpoonright A)~|~\frM
\upharpoonright A,g[a/y]\models \varphi(y)\}$. Now we know that
$A$ is also parametrically definable in $\frM$, i.e., there is a
$\Lambda$-formula $\psi(x)$ and an assignment $g'$ such that
$A=\{a \in dom(\frM) ~|~\frM,g'[a/x]\models \psi(x)\}$. Assume
without loss of generality that $\varphi$ and $\psi$ have no variables in common. We
define an assignment $g^*$ by letting $g^*(z)=g'(z)$ for every
variable $z$ occurring in $\psi$ and $g^*(z)=g(z)$ otherwise. The
situation with set variables is symmetric. Now by Lemma \ref{rel},
$B=\{a \in dom(\frM)~|~\frM,g^*[a/x]\models REL(\varphi,\psi,x)\}$
and hence $B \in \mathbb{A}_\frM$. By definition \ref{sub} it follows that $B \in  \mathbb{A}_{\frM \upharpoonright A}$ (because $B=B\cap A$).
\end{proof}
There is, in model theory, a whole range of methods to form new structures out of
existing ones. Standard references on the matter are \cite{1959,2004}, written in a very general algebraic setting.
Familiar constructions like disjoint unions of relational structures are redefined as particular cases of a new notion of
\emph{generalized product} of \fo-structures and abstract properties of such products are studied. In particular, an important
theorem now called the Feferman-Vaught theorem for \fo is proven in  \cite{1959}. We are particularly interested in one of its corollaries, which
establishes that generalized products of relational structures preserve elementary equivalence. We show an
analogue of this result for a particular case of generalized product of $\Lambda$-Henkin-structures that we call \emph{fusion}, this notion
being itself a generalization of a notion of disjoint union of $\Lambda$-Henkin-structures defined below.
\begin{defi}[Disjoint union of $\Lambda$-Henkin-structures]
  Let $\sigma$ be a purely relational vocabulary and $\sigma^* = \sigma \cup \{Q_1, \ldots, Q_k\}$,
  with $\{Q_1, \ldots, Q_k\}$ a set of new  monadic predicates.
  For any $\Lambda$-Henkin-structures $\frM_1, \ldots, \frM_k$ in vocabulary $\sigma$ with disjoint domains,
  define their \emph{disjoint union} $\biguplus_{ 1 \leq i \leq k} \frM_i$ (or, \emph{direct sum})
  to be the $\sigma^*$-frame that
  has as its domain the union of the domains of the structures $\frM_i$ and likewise for the relations, except for
  the predicates $Q_i$, whose interpretations are
  respectively defined as the domain of the structures $\frM_i$ (we will use $Q_i$ to label the elements of $M_i$).
  The set of admissible subsets $\mathbb{A}_{\biguplus_{1\leq i\leq k}\frM_i}$ is the closure under finite union of the union of the sets of
  admissible subsets of the $\frM_i$. That is:
  \begin{iteMize}{$\bullet$}
    \item   $dom(\biguplus_{ 1 \leq i \leq k} \frM_i)=\bigcup_{ 1 \leq i \leq k} dom(\frM_i)$
    \item   $P^{\biguplus_{ 1 \leq i \leq k}
    \frM_i}=\bigcup_{ 1 \leq i \leq k} P^{\frM_i}$ (with $P \in \sigma$) and $Q_i^{\biguplus_{ 1 \leq i \leq k}
    \frM_i}=dom(\frM_i)$
    \item   $A \in \mathbb{A}_{\biguplus_{1\leq i\leq k}\frM_i}$ iff $A= \bigcup_{1\leq i\leq k} A_i$ for some $A_i
    \in \mathbb{A}_{\frM_i}$
  \end{iteMize}
\label{union}
  \end{defi}

\begin{defi}[$f$-fusion of $\Lambda$-Henkin-structures]
Let $\sigma$ be a purely relational vocabulary and $\sigma^* = \sigma \cup \{Q_1, \ldots, Q_k\}$,
with $\{Q_1, \ldots, Q_k\}$ a set of new  monadic predicates.
Let $f$ be a function mapping each $n$-ary predicate $P \in
\sigma$ to a quantifier-free first-order formula over $\sigma^*$ in variables
$x_1,\ldots,x_n$. For any $\Lambda$-Henkin-structures
$\frM_1, \ldots, \frM_k$ in vocabulary $\sigma$ with disjoint domains, define their
\emph{$f$-fusion} to be the $\sigma$-frame $\bigoplus_{1
\leq i \leq k}^f \frM_i$ that has the same domain and set of
admissible subsets as $\biguplus_{1 \leq i \leq k} \frM_i$.
For every $P\in \sigma$, the interpretation of $P$ in
$\bigoplus_{1 \leq i \leq k}^f \frM_i$ is the set of
$n$-tuples satisfying $f(P)$ in
$\biguplus_{1\leq i\leq k} \frM_i$.
\label{fusion}
\end{defi}

An easy example of $f$-fusion on standard structures (it
is simpler to give an example on standard structures, as we do not have to say anything about admissible sets) is
the ordered sum of two linear orders $(M_1,<_1),(M_2,<_2)$, where
all the elements of $M_1$ are before the elements of $M_2$. In
this case, $\sigma$ consists of a single binary relation $<$, the
elements of $M_1$ are indexed with $Q_1$, those of $M_2$ with
$Q_2$ and $f$ maps $<$ to $x_1< x_2 \vee (Q_1x_1 \wedge Q_2 x_2)$. Another notable example of $f$-fusion is the  $\sigma \cup \{Q_1, \ldots, Q_k\}$-structure
$\biguplus_{1 \leq i \leq k} \frM_i=\bigoplus_{1 \leq i \leq k}^f \frM_i^+$, where $f$ is the identity function and for each $1\leq i \leq k$,
$\frM_i^+$ is the expansion of the $\sigma$-structure $\frM_i$ in which $Q_i^{\frM_i^+}=dom(\frM_i)$ and  $Q_j^{\frM_i^+}=\emptyset$ for every $i \neq j$. In this sense, disjoint
union as we defined it above can be seen as a special case of fusion.

We show preservation results involving $f$-fusions of
$\Lambda$-Henkin-structures. Hence we deal with analogues of
elementary equivalence for these logics and we refer to
\emph{$\Lambda$-equivalence}. Let us recall that by quantifier depth of a $\Lambda$-formula, we mean the maximal number of nested quantifiers in the formula (by ``quantifier'', we mean \fo and \mso-quantifiers, as well as $TC$ or $LFP$-operators).

\begin{defi}
Given two $\Lambda$-Henkin-structures $\frM$ and $\frN$, we write
$\frM\equiv_\Lambda\frN$ and say that $\frM$ and $\frN$ are
\emph{$\Lambda$-equivalent} if they satisfy the same $\Lambda$-sentences. Also, for any natural number $n$, we write
$\frM\equiv_\Lambda^n\frN$ and say that $\frM$ and $\frN$ are \emph{$n$-$\Lambda$-equivalent} if $\frM$ and $\frN$ satisfy the same
$\Lambda$-sentences of quantifier depth at most $n$. In
particular, $\frM\equiv_\Lambda\frN$ holds iff, for all $n$,
$\frM\equiv_\Lambda^n\frN$ holds.
\end{defi}

Now we are ready to introduce the ``Feferman-Vaught theorems'' that we will show in Section \ref{apc} and which establish that $f$-fusions of $\Lambda$-Henkin-structures preserve
$\Lambda$-equivalence, that is:

\begin{thm}
Let $\frM_1,\ldots,\frM_k$, $\frN_1,\ldots,\frN_k$ be $\Lambda$-Henkin structures. Whenever $\frM_i \equiv^n_\Lambda \frN_i$ for all $1 \leq i \leq k$,
then also $\bigoplus_{1 \leq i \leq k}^f \frM_i \equiv^n_\Lambda \bigoplus_{1 \leq i \leq k}^f \frM_i$.
\end{thm}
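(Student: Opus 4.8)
The plan is to use the Ehrenfeucht--Fra\"{\i}ss\'e games for $\Lambda$ together with their adequacy: $\mathfrak{A} \equiv^n_\Lambda \mathfrak{B}$ holds precisely when Duplicator has a winning strategy in the $n$-round $\Lambda$-game on $(\mathfrak{A},\mathfrak{B})$. (Strictly, I would first recall that an $f$-fusion of $\Lambda$-Henkin-structures is again a $\Lambda$-Henkin-structure, so that the game is defined for the fusions.) It then suffices to fix, for each $1 \leq i \leq k$, a winning strategy $\tau_i$ for Duplicator in the $n$-round game on $(\frM_i,\frN_i)$ --- which exists since $\frM_i \equiv^n_\Lambda \frN_i$ --- and to assemble the $\tau_i$ into a single winning strategy for Duplicator in the $n$-round game on $\bigl(\bigoplus^f_{1\leq i \leq k}\frM_i, \bigoplus^f_{1 \leq i \leq k}\frN_i\bigr)$.

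The combination exploits two facts about $f$-fusions, both immediate from Definitions \ref{union} and \ref{fusion}: every element of the fusion lies in exactly one component, marked by one of the labels $Q_1,\dots,Q_k$; and every admissible subset of the fusion is of the form $\bigcup_{1 \leq i \leq k} A_i$ with $A_i \in \mathbb{A}_{\frM_i}$ (and symmetrically on the $\frN$ side). Thus a point move of Spoiler is a point move in a single, identifiable component, and a set move is a tuple of set moves, one per component. Duplicator's combined strategy keeps the invariant that, after $r$ rounds, for each $i$ the projection of the play to $(\frM_i,\frN_i)$ is a $\tau_i$-consistent position of the $r$-round game: to a point move in component $i$ she answers in component $i$ of the other structure by $\tau_i$; to a set move $\bigcup_i A_i$ she answers $\bigcup_i B_i$ with $B_i$ the $\tau_i$-response to $A_i$, still admissible by the second fact; and a $TC$- or $LFP$-move she treats the same way, since under the Henkin semantics (Definition \ref{hensem}) it reduces to a set move followed by a subgame on the defining formula played with one less unit of quantifier budget, so that the induction on $n$ applies. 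At the end of a play one checks that the chosen points form a partial isomorphism \emph{for the fused signature}: since each $P \in \sigma$ is interpreted in the fusion by a quantifier-free formula $f(P)$ over $\sigma^* = \sigma \cup \{Q_1,\dots,Q_k\}$, the truth of $f(P)$ on a chosen tuple depends only on which components its entries occupy, which entries coincide, and the $\sigma$-atomic facts among entries in a common component; the first two are preserved because Duplicator always stays in the matching component and respects equalities, the last because each $\tau_i$ preserves $\sigma$-atomic facts within component $i$.

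The step I expect to be the real obstacle is the handling of the $TC$- and $LFP$-moves, specifically when Spoiler challenges the closure of a chosen admissible set using a pair of points (for $TC$), or a point together with the set (for $LFP$), that lies across several components, so that the witnesses needed to evaluate the defining formula $\varphi$ in the fusion are themselves cross-component. The resolution is that $\varphi$ is evaluated under the same Henkin semantics --- any subset it quantifies over is again of the form $\bigcup_i(\cdot)$ and splits along components --- and the subgame on $\varphi$ runs with strictly smaller budget, so that the componentwise strategies still apply and the recursion bottoms out at the atomic check above. Verifying that the adequacy lemma for the $\Lambda$-games is stated in a form that supports this nested, budget-decreasing use is where the care is needed; the remaining verifications are routine.
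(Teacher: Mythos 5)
Your overall approach---induction on $n$, assembling Duplicator's componentwise winning strategies in the $\Lambda$-games into one strategy on the fusions, using the facts that points and admissible sets of the fusion decompose along components---is exactly the paper's. The gap is in the step you yourself flag as the obstacle, and your proposed resolution does not engage with the actual difficulty. The problem with the $TC$- and $LFP$-moves is not the evaluation of the defining formula $\varphi$ in a ``subgame'' (there is no such subgame: in the games of Definitions~\ref{effotc} and~\ref{effolfp} a $TC$- or $LFP$-move is a single round consisting of a set choice followed by point choices, and $\varphi$ never appears in the play). The problem is that these moves carry \emph{pebble side-conditions}: a $TC$-move in the component game $EF^n_\fotc(\frM_i,\frN_i)$ requires Spoiler's set to contain one existing pebble of $\frM_i$ and exclude another, and an $LFP$/$GFP$-move requires a pebble outside (respectively inside) the set. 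When Spoiler plays $A=\bigcup_i A_i$ in the fusion, the projection $A_i$ need not satisfy these conditions in $\frM_i$---it may contain all pebbles of that component, or none, or the component may carry too few pebbles---so ``the $\tau_i$-response to $A_i$'' is simply undefined; moreover Spoiler's two challenge points may land in two different components, in neither of which a legal $TC$-move with that set is available. The paper spends the bulk of its proof on exactly this: it first proves a parametrized version in which every component carries at least two (for \fotc) or at least one (for \folfp) distinguished points, then perturbs $A_i$ by adding or removing a distinguished point so that the component move becomes legal, extracts an answer $B_i'$ from $\tau_i$, and corrects it back to $B_i$, with a separate case analysis for where Spoiler's challenge points fall; the unparametrized statement is recovered afterwards by spending the first rounds planting pebbles. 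Without some such device your strategy combination does not go through for \fotc and \folfp.

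Two smaller omissions. First, for \mso and \folfp the paper strengthens the induction hypothesis by closing the tuples of set parameters under union (the $(\bar{A_1},\ldots,\bar{A_k})^\cup$ construction), because after a set move the fusion position carries $\bigcup_i A_i$ as a parameter while the component positions only carry the individual $A_i$; this bookkeeping is what makes the atomic and posimorphism checks at the end of a play reduce to the componentwise ones, and your invariant that ``the projection of the play to $(\frM_i,\frN_i)$ is a $\tau_i$-consistent position'' needs it (or an equivalent) to be made precise. Second, the paper obtains the fact that the fusion is again a $\Lambda$-Henkin structure as a \emph{corollary} of the fusion theorem rather than assuming it beforehand, so if you invoke that fact up front you should check that the direction of game adequacy you rely on is available without it.
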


We will also show in this section that every $f$-fusion of $\Lambda$-Henkin-structures is a $\Lambda$-Henkin-structure.
Comparable work had already been done by Makowski in \cite{2004}
for extensions of \fo, but an important difference is that he only
considered standard structures, whereas we need to deal with
$\Lambda$-Henkin-structures. Our proofs make use of
Ehrenfeucht-Fra\"{\i}ss\'e games for each of the
logics $\Lambda$. 

\subsection{Ehrenfeucht-Fra\"\i ss\'e Games on Henkin-Structures}\
\label{apb}

\noindent
Let $\Lambda \in \{\mso,\fotc,\folfp\}$. We survey Ehrenfeucht-Fra\"{\i}ss\'e games
for \fo, \mso, \fotc, and \folfp which are suitable to use on Henkin structures.
We also provide an adequacy proof for the \fotc game. 
The \mso game is a rather straightforward
extension of the \fo case and has already been used by other
authors (see for instance \cite{1998}). The \folfp game is
borrowed from Uwe Bosse \cite{736408}. It also applies to Henkin
structures, as careful inspection shows. The \fotc game has
already been mentioned in passing by Erich Gr\"{a}del in \cite{736267}
as an alternative to the game he used and we show that it is
adequate for Henkin semantics. It looks also similar to a system
of partial isomorphisms given in \cite{1992}. However it is
important to note that this game is very different from the
\fotc game which is actually used in \cite{736267}. The two games
are equivalent when played on standard structures, but not when
played on \fotc-Henkin structures. This is so because the game
used in \cite{1992} relies on the alternative semantics for the
$TC$ operator given in Proposition \ref{tcaltsem}, so that only
finite sets of points can be chosen by players ; whereas the game
we use involves choices of not necessarily finite admissible
subsets. These are not equivalent approaches. Indeed, on
\fotc-Henkin structures a simple compactness argument shows that
the semantical clause of Proposition \ref{tcaltsem} (defined in
terms of existence of a \emph{finite} path) is not adequate.

Let us first introduce basic notions connected to these games. One
rather trivial sufficient condition for $\Lambda$-equivalence is
the existence of an \emph{isomorphism}.  Clearly isomorphic
structures satisfy the same $\Lambda$-formulas.  A more
interesting sufficient condition for $\Lambda$-equivalence is that
of Duplicator having a winning strategy in all $\Lambda$
Ehrenfeucht-Fra\"{\i}ss\'e games of finite length. To define this,
we first need this notion:

\begin{defi}[Finite Partial Isomorphism]
A \emph{finite partial isomorphism} between structures $\frM$ and
$\frN$ is a finite relation $\{(a_1,b_1),\ldots,(a_n,b_n)\}$
between the domains of $\frM$ and $\frN$ such that for all atomic
formulas $\varphi(x_1, \ldots, x_n)$, $\frM\models\varphi~[a_1,
\ldots, a_n]$ iff $\frN\models\varphi~[b_1, \ldots, b_n]$. Since
equality statements are atomic formulas, every finite partial
isomorphism is (the graph of) a \emph{injective partial function}.
\end{defi}

We will also need the following lemma:

\begin{lem}[Finiteness Lemma]
Fix any set $x_1, \ldots, x_k, X_{k+1}, \ldots, X_m$. In a finite
relational vocabulary, up to logical equivalence, with these free
variables, there are only finitely many $\Lambda$-formulas of quantifier depth $\leq n$.
\label{finitenesslemma}
\end{lem}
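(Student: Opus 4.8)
The plan is to prove the Finiteness Lemma by a simultaneous induction on the quantifier depth $n$, for all choices of the finite set of free variables $x_1,\ldots,x_k,X_{k+1},\ldots,X_m$ at once. For the base case $n=0$, a quantifier-free $\Lambda$-formula in these free variables is built by Boolean combinations from the atomic formulas $R(\vec x)$, $x_i = x_j$, $X_\ell x_i$, and $\top$. Since the vocabulary is finite and relational and we have only finitely many free variables available, there are only finitely many such atoms; hence, up to logical equivalence (i.e.\ up to the Boolean algebra they generate — at most $2^{2^N}$ classes if there are $N$ atoms), there are only finitely many quantifier-free formulas. Strictly speaking one also needs to observe that in a formula of quantifier depth $0$ no bound variables occur, so every variable appearing is among the fixed free ones; this keeps the atom count finite.

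For the inductive step, assume the claim for depth $\le n$ and every finite set of free variables. A formula $\varphi$ of quantifier depth $\le n+1$ is, up to logical equivalence, a Boolean combination of formulas each of which is either (i) of quantifier depth $\le n$ in the variables $x_1,\ldots,X_m$, or (ii) of the form $\exists x\,\psi$, $\exists X\,\psi$, $[TC_{xy}\psi](u,v)$, or $[LFP_{Xx}\psi]y$ (depending on which $\Lambda$ we are in), where in each case $\psi$ has quantifier depth $\le n$ and its free variables lie among $x_1,\ldots,X_m$ together with the finitely many extra variables bound by the outer operator (e.g.\ $x$ and $y$ for $TC$, $x$ and $X$ for $LFP$). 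By the induction hypothesis applied to this enlarged but still finite set of free variables, there are only finitely many choices of such $\psi$ up to logical equivalence, and logically equivalent $\psi$ yield logically equivalent results under the quantifier/operator prefix; and the bound variable names themselves may be taken from a fixed finite pool by $\alpha$-renaming. Hence there are only finitely many formulas of type (ii) up to logical equivalence, and combining with the finitely many of type (i), only finitely many Boolean building blocks; a Boolean combination of finitely many formulas again falls into a finite set up to logical equivalence. This completes the induction.

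The one delicate point — and the step I expect to be the main obstacle to phrase cleanly — is the bookkeeping of bound variables: one must argue that every $\Lambda$-formula of quantifier depth $\le n+1$ in the prescribed free variables is logically equivalent to one in which all bound variables are drawn from a single fixed finite set (say $y_1,\ldots,y_{n+1}$ for first-order variables and similarly for set variables), so that at each level of the induction the "enlarged" set of free variables stays uniformly bounded in size and the induction hypothesis genuinely applies. This is a routine $\alpha$-conversion argument, using that quantifier depth bounds the nesting and hence the number of simultaneously active bound variables, but it should be stated, since without it the set of relevant atoms — and thus of subformulas — would be unbounded. Everything else is the standard "finitely many atoms, finitely many Boolean combinations, finitely many one-step extensions" counting, exactly as in the classical \fo and \mso cases.
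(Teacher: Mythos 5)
Your proof is correct and follows essentially the same route as the paper's own (much terser) argument: induction on quantifier depth, finitely many atoms over a finite relational vocabulary with finitely many free variables, the observation that a depth-$(n{+}1)$ formula is a Boolean combination of atoms and of depth-$n$ formulas prefixed by a quantifier or operator, and closure of finiteness-up-to-equivalence under Boolean combinations and under prefixing. The only difference is that you make explicit the $\alpha$-renaming bookkeeping for bound variables and the $TC$/$LFP$ cases, which the paper leaves implicit; this is a welcome clarification, not a divergence in method.
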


\begin{proof}
This can be shown by induction on $k$. In a finite relational vocabulary, with finitely many free variables, there are only finitely many atomic formulas. Now, any $\Lambda$-formula of quantifier depth $k+1$ is equivalent to a Boolean combination of atoms and formulas of quantifier depth $k$ prefixed by a quantifier. Applying a quantifier to equivalent formulas preserves equivalence and the Boolean closure of a finite set of formulas remains finite, up to logical equivalence.
\end{proof}

Now, as we are concerned with extensions of \fo, every $\Lambda$-game will be defined as an extension of the classical \fo game, that we recall here:

\begin{defi}[\fo Ehrenfeucht-Fra\"\i ss\'e Game]
The \fo Ehrenfeucht-Fra\"\i ss\'e game of length $n$ on standard structures $\frM$ and $\frN$ (notation: ${EF_{FO}^n}(\frM,\frN))$ is as follows.
There are two players, Spoiler and Duplicator. The game has $n$ rounds, each of which consists of a move of Spoiler followed by a move of
Duplicator. Spoiler's moves consist of picking an element from one of the two structures, and Duplicator's responses consist of picking an element
in the other structure. In this way, Spoiler and Duplicator build up a finite binary relation between the domains of the two structures: initially, the
relation is empty; each round, it is extended with another pair. The winning conditions are as follows: if at some point of the game the
constructed binary relation is not a finite partial isomorphism, then Spoiler wins immediately. If after each round the relation is a finite
partial isomorphism, then the game is won by Duplicator.
\end{defi}

\begin{thm}[\fo Adequacy]\label{thm:effo}
  Assume a finite relational first-order language. Duplicator has a winning strategy in the game
  $EF_{FO}^n(\frM,\frN)$ iff $\frM\equiv^n_{FO}\frN$. In particular,
  Duplicator has a winning strategy in all EF-games of finite length
  between $\frM$ and $\frN$ if and only if $\frM\equiv_{FO}\frN$.
\end{thm}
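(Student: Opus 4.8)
The plan is to prove the Ehrenfeucht--Fra\"iss\'e adequacy theorem for \fo in the standard way, by showing that $n$-round games capture equivalence with respect to formulas of quantifier depth $\leq n$. I would proceed by induction on $n$. For the base case $n=0$, Duplicator having a winning strategy in the zero-round game means exactly that the empty relation is a finite partial isomorphism, which is vacuously true, and $\frM\equiv^0_{FO}\frN$ means the two structures agree on all quantifier-free sentences; over a purely relational signature with no constants these are the Boolean combinations of the (finitely many) variable-free atoms $\top$ and equalities between... — actually with no free variables the only atomic sentence is $\top$, so both sides hold trivially. (If constants were present one would check agreement on atomic sentences directly; the excerpt works with a purely relational vocabulary, so this is immediate.)

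For the inductive step, the key tool is the Finiteness Lemma (Lemma~\ref{finitenesslemma}): up to logical equivalence there are only finitely many \fo-formulas of quantifier depth $\leq n$ in a fixed finite set of free variables, so for each element $a$ of a structure one can form its \emph{$n$-type}, the conjunction of all (representatives of equivalence classes of) depth-$\leq n$ formulas in one free variable that $a$ satisfies, and this is itself expressible by a single depth-$\leq n$ formula $\tau^n_a(x)$. The standard argument then runs as follows. Suppose Duplicator wins $EF^{n+1}_{FO}(\frM,\frN)$. To show $\frM\equiv^{n+1}_{FO}\frN$, it suffices (by closure under Boolean connectives) to treat a sentence $\exists x\,\varphi(x)$ with $\varphi$ of quantifier depth $\leq n$; if $\frM\models\exists x\,\varphi$, witnessed by $a$, let Spoiler play $a$ in $\frM$, let $b$ be Duplicator's winning response, and observe that after this round Duplicator still wins the remaining $n$-round game on the pointed structures $(\frM,a)$ and $(\frN,b)$, so by the induction hypothesis $(\frM,a)\equiv^n_{FO}(\frN,b)$, hence $\frN\models\varphi(b)$ and so $\frN\models\exists x\,\varphi$; the symmetric argument handles the case $\frN\models\exists x\,\varphi$. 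Conversely, suppose $\frM\equiv^{n+1}_{FO}\frN$. Given Spoiler's move, say $a\in\frM$, consider the $n$-type $\tau^n_a(x)$; since $\frM\models\exists x\,\tau^n_a(x)$ and this sentence has quantifier depth $\leq n+1$, also $\frN\models\exists x\,\tau^n_a(x)$, so Duplicator picks a witness $b$; then $(\frM,a)$ and $(\frN,b)$ agree on all depth-$\leq n$ formulas in $x$, i.e.\ $(\frM,a)\equiv^n_{FO}(\frN,b)$ (here one uses that every depth-$\leq n$ formula in one variable, or its negation, is a conjunct of $\tau^n_a$), and by the induction hypothesis Duplicator has a winning strategy in the remaining $n$-round game; combining, Duplicator wins the $(n+1)$-round game. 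One also needs to check at each round that the accumulated relation remains a finite partial isomorphism, which follows because $0$-equivalence of the pointed structures implies agreement on atomic formulas in the chosen tuple.

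The ``in particular'' clause is then immediate: $\frM\equiv_{FO}\frN$ iff $\frM\equiv^n_{FO}\frN$ for all $n$ iff Duplicator wins $EF^n_{FO}(\frM,\frN)$ for all $n$, i.e.\ Duplicator wins all finite-length \fo games between $\frM$ and $\frN$.

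The main obstacle, and the only place any real care is needed, is the bookkeeping in the inductive step: one must handle \emph{pointed} structures (tuples of distinguished elements) uniformly, so the induction hypothesis should really be stated for structures with $k$ distinguished elements and formulas with $k$ free variables, and one must invoke the Finiteness Lemma in the right form (finitely many depth-$\leq n$ formulas in $k+1$ free variables) to build the types. Everything else -- the Boolean-connective cases, the universal quantifier via $\neg\exists\neg$, the verification that partial isomorphism is maintained -- is routine. Since the paper explicitly treats this as classical material and refers to \cite{Enderton} for background, I would present the argument concisely, emphasizing the type-based witness selection as the heart of the proof.
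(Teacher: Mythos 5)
Your proposal is correct and is precisely the classical Ehrenfeucht--Fra\"iss\'e argument (induction on $n$ with Hintikka-style $n$-types built via the Finiteness Lemma, generalized to pointed structures) that the paper itself does not reproduce but instead delegates to the standard references \cite{ebfl95} and \cite{leo}. Since those references prove the theorem in exactly this way, your approach matches the paper's intended proof.
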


The proof for the first order case is classic. We refer the reader
to the proof given in \cite{ebfl95} or to the one in \cite{leo}.

For technical convenience in the course of inductive proofs,
we extend the notion of \fo parameter by considering set
parameters, i.e., instead of interpreting a set variable as a name
of the admissible set $A$, we can add a new monadic predicate $A$ to the
signature. The new predicates and the sets they name are called
set parameters. (This is similar to the \fo notion that can be
found in \cite{97}.) We will work with \emph{parametrized} (or \emph{expanded}) Henkin-structures, that is, structures considered together with partial
valuations. This means that the assignment is possibly non empty at the beginning of the game, which can start with some ``handicap'' for Duplicator, i.e., some preliminary set of already ``distinguished objects and sets''.

We first define a necessary and sufficient condition for \mso
equivalence by extending Ehrenfeucht-Fra\"{\i}ss\'e games from
\fo to \mso. This game has already been defined in the literature, see for instance \cite{1998}.

\begin{defi}[\mso Ehrenfeucht-Fra\"\i ss\'e Game]
Consider two \mso-Henkin structures $\frM$ together with $\bar{A} \in \mathbb{A}_{\frM}^r$,
$\bar{a} \in dom(\frM)^s$ and $\frN$ together with $\bar{B} \in
\mathbb{A}_{\frN}^r$, $\bar{b} \in dom(\frN)^s$ and $r \geq 0$, $s \geq 0$, $n \geq 0$. The \mso
Ehrenfeucht-Fra\"{\i}ss\'e game $EF_\mso^n((\frM, \bar{A}, \bar{a}),
(\frN, \bar{B}, \bar{b}))$ of length $n$ on expanded structures
$(\frM, \bar{A}, \bar{a})$ and $(\frN, \bar{B}, \bar{b})$ is
defined as for the first-order case, except that each time she
chooses a structure, Spoiler can choose either an element or an
admissible subset of its domain. For a given $A_{r+1} \in
\mathbb{A}_{\frM}$ chosen by Spoiler, $(\frM, \bar{A}, \bar{a})$
is expanded to $(\frM, \bar{A}, A_{r+1}, \bar{a})$. Duplicator
then responds by choosing $B_{r+1} \in \mathbb{A}_{\frN}$ and
$(\frN, \bar{B}, \bar{b})$ is expanded to $(\frN, \bar{B},
B_{r+1}, \bar{b})$. The game goes on with the so expanded structures.
The winning conditions are as follows: if at some point of the
game $\bar{a} \mapsto \bar{b}$ is not a finite partial isomorphism
from $(\frM, \bar{A}, A_{r+1})$ to $(\frN, \bar{B}, B_{r+1})$,
then Spoiler wins immediately. If after each round the relation is
a finite partial isomorphism, then the game is won by
Duplicator.
\label{efmso}
\end{defi}

\begin{thm}[\mso Adequacy]\label{thm:efmso}
 Assume a finite relational \mso language. Given $\frM$ and $\frN$, $\bar{A} \in \mathbb{A}_{\frM}^r$, $\bar{B} \in
 \mathbb{A}_{\frN}^r$, $\bar{a} \in dom(\frM)^s$, $\bar{b} \in dom(\frN)^s$ and  $r \geq 0$, $s \geq 0$, $n \geq 0$,
 Duplicator has a winning strategy in the game $EF_\mso^n((\frM,
 \bar{A}, \bar{a}), (\frN, \bar{B}, \bar{b}))$ iff
 $(\frM, \bar{A}, \bar{a})$ and $(\frN, \bar{B}, \bar{b})$ satisfy
 the same \mso formulas of quantifier depth $n$. In
 particular, Duplicator has a winning strategy in all $EF_\mso$-games of finite length between $(\frM, \bar{A}, \bar{a})$
 and $(\frN, \bar{B}, \bar{b})$ if and only if $(\frM, \bar{A}, \bar{a})$ and $(\frN, \bar{B}, \bar{b})$ satisfy the same \mso formulas.
\end{thm}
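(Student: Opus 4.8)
The plan is to prove both directions simultaneously by induction on the number of rounds $n$, following the classical argument for Theorem~\ref{thm:effo} and adding exactly one new case in the inductive step to handle set-quantifier moves. Throughout I absorb the set parameters $\bar{A}$ and $\bar{B}$ into the signature as fresh monadic predicates, as indicated in the text before Definition~\ref{efmso}, so that "$\bar{a}\mapsto\bar{b}$ is a finite partial isomorphism between the expanded structures" literally means that $(\frM,\bar{A},\bar{a})$ and $(\frN,\bar{B},\bar{b})$ agree on all atomic formulas in the extended vocabulary. For the base case $n=0$, Duplicator wins $EF_\mso^0$ iff $\bar{a}\mapsto\bar{b}$ is already a finite partial isomorphism; since the vocabulary is finite and relational, every \mso-formula of quantifier depth $0$ is a Boolean combination of atoms, so this holds iff the two expanded structures satisfy the same quantifier-depth-$0$ \mso-formulas.

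For the inductive step, assume the statement for $n$. Consider the direction ($\Rightarrow$): suppose Duplicator has a winning strategy in $EF_\mso^{n+1}$. By Lemma~\ref{finitenesslemma}, every quantifier-depth-$(n+1)$ formula is logically equivalent to a Boolean combination of atoms and formulas of the form $\exists x\,\psi$ or $\exists X\,\psi$ with $\psi$ of quantifier depth $\le n$, so it suffices to transfer formulas of those two shapes. If $\frM,\bar{A},\bar{a}\models\exists x\,\psi$, have Spoiler open by playing a witness $a_{s+1}$ in $\frM$; Duplicator's strategy yields a response $b_{s+1}$ in $\frN$ together with a winning strategy in the remaining $n$-round game, so by the induction hypothesis $(\frM,\bar{A},\bar{a},a_{s+1})$ and $(\frN,\bar{B},\bar{b},b_{s+1})$ satisfy the same quantifier-depth-$n$ formulas, hence $\frN,\bar{B},\bar{b}\models\exists x\,\psi$. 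The converse transfer and the case $\exists X\,\psi$ are identical, using the key fact that, by the frame semantics of Definition~\ref{hensem}, a witness for $\exists X$ is an admissible set, which is precisely what Spoiler is permitted to play.

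For ($\Leftarrow$), suppose $(\frM,\bar{A},\bar{a})$ and $(\frN,\bar{B},\bar{b})$ satisfy the same quantifier-depth-$(n+1)$ formulas; I describe Duplicator's response to Spoiler's first move. Say Spoiler picks $a_{s+1}\in dom(\frM)$. By Lemma~\ref{finitenesslemma} there are, up to logical equivalence, only finitely many quantifier-depth-$n$ formulas in the free variables $x_1,\dots,x_{s+1}$, so the conjunction of representatives of those satisfied by $(\frM,\bar{A},\bar{a},a_{s+1})$ is a single \mso-formula $\chi(x_{s+1})$ of quantifier depth $n$ (its Hintikka formula). Then $\frM,\bar{A},\bar{a}\models\exists x_{s+1}\,\chi$, a formula of quantifier depth $n+1$, so by hypothesis $\frN,\bar{B},\bar{b}\models\exists x_{s+1}\,\chi$; let Duplicator respond with such a witness $b_{s+1}$. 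Since $(\frN,\bar{B},\bar{b},b_{s+1})\models\chi$, it satisfies every quantifier-depth-$n$ formula satisfied by $(\frM,\bar{A},\bar{a},a_{s+1})$, so by the induction hypothesis Duplicator has a winning strategy in the remaining $n$-round game. When Spoiler instead picks an admissible set $A_{r+1}\in\mathbb{A}_\frM$, proceed identically with the Hintikka formula of $(\frM,\bar{A},A_{r+1},\bar{a})$ (now with a free set variable $X_{r+1}$) and the formula $\exists X_{r+1}\,\chi$; by Definition~\ref{hensem} the witness provided in $\frN$ is automatically a member of $\mathbb{A}_\frN$, hence a legal move for Duplicator. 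The moves in which Spoiler plays in $\frN$ are handled by the symmetric argument, which is available because the hypothesis is symmetric.

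I expect no substantial new obstacle beyond the first-order case: the whole content is that the game's set moves and the frame semantics of $\exists X$ mesh correctly — players may choose only admissible subsets, and correspondingly an $\exists X$ in a frame demands an admissible witness — together with the bookkeeping that keeps the Hintikka-formula construction finite, which is exactly what Lemma~\ref{finitenesslemma} supplies once the set parameters have been folded into the signature as monadic predicates. If anything is delicate it is making sure the "same free variables / same quantifier depth" bookkeeping for $\chi$ is stated so that the quantified formula $\exists x_{s+1}\,\chi$ (resp.\ $\exists X_{r+1}\,\chi$) is genuinely of quantifier depth $n+1$ and in the right free variables to apply the hypothesis; this is routine but worth spelling out.
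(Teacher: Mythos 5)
Your proof is correct and is precisely the argument the paper has in mind: the paper omits the proof of Theorem~\ref{thm:efmso}, stating only that it "parallels the \fo case" with the observation that quantification corresponds game-theoretically to choosing an element of the relevant domain (here an admissible set), and your write-up fills in exactly that classical induction, with the Hintikka-formula construction backed by Lemma~\ref{finitenesslemma} and the single new case for set moves. The only point worth making explicit is that in the converse direction the witness $b_{s+1}$ (resp.\ $B_{r+1}$) satisfies not just every depth-$n$ formula satisfied on the $\frM$-side but exactly the same ones, which follows because the class of depth-$n$ formulas is closed under negation and the Hintikka formula therefore includes the negations of the unsatisfied ones.
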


We omit the proof, because it parallels the \fo case.
The proof works regardless whether \mso is interpreted in the standard or in the Henkin way. What matters here is that the game-theoretic meaning of a ``quantification'' over a given ``domain'', lies in the choice of an element from that domain (including one consisting of ``higher-order elements'', e.g., sets).

\begin{cor}
For \mso-Henkin-structures $\frM$, $\frN$ and $n \geq 0$, Duplicator has a winning strategy in $EF_\mso^n(\frM,\frN)$ if and only if $\frM \equiv_\mso^n
\frN$. In particular, Duplicator has a winning strategy in all $EF_\mso$-games of finite length between $\frM$ and $\frN$ if and only if $\frM
\equiv_\mso \frN$.
\end{cor}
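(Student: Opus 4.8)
The plan is to reduce this corollary to the Adequacy Theorem for the expanded-structure game, Theorem~\ref{thm:efmso}. The key observation is that a bare \mso-Henkin-structure $\frM$ is just the expanded structure $(\frM, \bar{A}, \bar{a})$ in the special case $r = 0$ and $s = 0$, i.e.\ with empty tuples of set parameters and element parameters. Likewise, the game $EF_\mso^n(\frM,\frN)$ is by definition nothing other than $EF_\mso^n((\frM,\langle\rangle,\langle\rangle),(\frN,\langle\rangle,\langle\rangle))$, the length-$n$ \mso game starting from the empty valuation. So the first step is simply to spell out this identification.

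Next I would invoke Theorem~\ref{thm:efmso} with $r = s = 0$: Duplicator has a winning strategy in $EF_\mso^n((\frM,\langle\rangle,\langle\rangle),(\frN,\langle\rangle,\langle\rangle))$ if and only if $(\frM,\langle\rangle,\langle\rangle)$ and $(\frN,\langle\rangle,\langle\rangle)$ satisfy the same \mso formulas of quantifier depth $n$. Since there are no parameters, a formula "satisfied" by the expanded structure is just an \mso sentence of quantifier depth $n$ true in the underlying Henkin-structure, so the right-hand side is exactly $\frM \equiv_\mso^n \frN$. Combining with the identification of the games from the first step yields the first claim. For the "in particular" clause, I would note that $\frM \equiv_\mso \frN$ means $\frM \equiv_\mso^n \frN$ for all $n$ (by the definition of $\equiv_\mso$), and that Duplicator wins all $EF_\mso$-games of finite length between $\frM$ and $\frN$ iff she wins $EF_\mso^n(\frM,\frN)$ for every $n$; applying the first claim pointwise in $n$ closes the argument.

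There is essentially no obstacle here: the corollary is a triviality once the bookkeeping about empty parameter tuples is made explicit, and the only thing to be a little careful about is that the degenerate case $r = s = 0$ is genuinely covered by the hypotheses of Theorem~\ref{thm:efmso} (which it is, since that theorem is stated for $r \geq 0$, $s \geq 0$). Accordingly the proof is a one-line appeal to the theorem, and I would write:

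\begin{proof}
Apply Theorem~\ref{thm:efmso} with $r = s = 0$, noting that $EF_\mso^n(\frM,\frN)$ is by definition the game $EF_\mso^n((\frM,\langle\rangle,\langle\rangle),(\frN,\langle\rangle,\langle\rangle))$ on the structures expanded by empty tuples of parameters, and that in this case satisfying the same \mso formulas of quantifier depth $n$ amounts to $\frM \equiv_\mso^n \frN$. The "in particular" part follows since $\frM \equiv_\mso \frN$ iff $\frM \equiv_\mso^n \frN$ for all $n$, and Duplicator wins all $EF_\mso$-games of finite length between $\frM$ and $\frN$ iff she wins $EF_\mso^n(\frM,\frN)$ for every $n$.
\end{proof}
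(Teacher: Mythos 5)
Your proof is correct and matches the paper's (implicit) argument: the paper states this corollary immediately after Theorem~\ref{thm:efmso} with no separate proof, treating it exactly as the specialization to $r=s=0$ that you spell out. The only added content in your write-up is the careful bookkeeping about empty parameter tuples, which is harmless and accurate.
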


The \fotc game that we will be introducing now had been already mentioned in passing by Erich Gr\"{a}del in \cite{736267} as an alternative to the game he used. We will show that it is adequate on Henkin-structures.

\begin{defi}[\fotc Ehrenfeucht-Fra\"\i ss\'e Game]
Consider two \fotc-Henkin structures $\frM$ and $\frN$ together with
$\bar{a} \in dom(\frM)^s$, $\bar{b} \in dom(\frN)^s$ and $s \geq 0$, $n \geq 0$. The \fotc-game $EF_\fotc^n((\frM, \bar{a}),
(\frN, \bar{b}))$ of length $n$ on expanded structures
$(\frM, \bar{a})$ and $(\frN, \bar{b})$ is
defined as for the first-order case, except that each time she
chooses a structure, Spoiler can either choose only one element or an admissible subset together with two elements of its domain.
In the first case we say that she plays an $\exists$ (or point) move and in the second case, a $TC$-move (which we will define more precisely below).
Each point move results in an extension of the assignment $\{\bar{a}\mapsto \bar{b}\}$ with elements $a_{s+1} \in dom(\frM), b_{s+1} \in dom(\frN)$. Each $TC$-move results in an extension of the assignment $\{\bar{a}\mapsto
\bar{b}\}$ with elements $a_{s+1}, a_{s+2} \in dom(\frM), b_{s+1}, b_{s+2} \in dom(\frN)$. At each round, Spoiler chooses the kind of move to be played. 

The $\exists$ move is defined as in the \fo case. The $TC$-move is as follows:

Spoiler considers two pebbles $(a_i,b_i)$ and $(a_j,b_j)$ on the board (i.e., corresponding couples of parameters taken in each structure) and depending on the structure that he chooses to consider, he plays:
\begin{iteMize}{$\bullet$}
\item   either a set $A \in \mathbb{A}_\frM$ with $a_i \in A$ and $a_j \notin
A$. Duplicator then answers with a set $B \in \mathbb{A}_\frN$ such that
$b_i \in B$ and $b_j \notin B$.  Spoiler now picks $b_{s+1} \in B,
b_{s+2} \notin B$ and Duplicator answers with $a_{s+1} \in A, a_{s+2}
\notin A$.
\item   or a set $B \in \mathbb{A}_\frN$ with $b_i \in B$ and $b_j \notin
B$.  Duplicator then answers with a set $A \in \mathbb{A}_\frM$ such
that $a_i \in A$ and $a_j \notin A$. Spoiler now picks $a_{s+1} \in A,
a_{s+2} \notin A$ and Duplicator answers with $b_{s+1} \in B, b_{s+2}
\notin B$.
\end{iteMize}
In each $TC$-move, the assignment is extended with $a_{s+1} \mapsto
b_{s+1}, a_{s+2} \mapsto b_{s+2}$. After $n$ moves, Duplicator has won
if the constructed assignment $\bar{a} \mapsto \bar{b}$ is a
partial isomorphism (i.e. the game continues with the two new
pebbles in each structure, but the sets $A$ and $B$ are
forgotten).
\label{effotc}
\end{defi}

\begin{thm}[\fotc Adequacy]\label{thm:effotc}
  Assume a finite relational \fotc language. Given two \fotc-Henkin structures $\frM$ and
  $\frN$, $\bar{a} \in dom(\frM)^s$, $\bar{b} \in dom(\frN)^s$ and $r \geq 0$, $s \geq 0$, $n \geq 0$,
  Spoiler has a winning strategy in the game $EF_\fotc^n((\frM, \bar{a}), (\frN, \bar{b}))$ iff
  there is a \fotc formula of quantifier depth $n$ distinguishing $(\frM, \bar{a})$ and $(\frN, \bar{b})$.
\end{thm}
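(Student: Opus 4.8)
\emph{Plan of proof.} I would prove the theorem by induction on $n$, following the template of the \fo case (Theorem~\ref{thm:effo}) but with one extra case for the $\exists$-move and one for the $TC$-move, and with every ``choose a sentence out of infinitely many'' step tamed by the Finiteness Lemma (Lemma~\ref{finitenesslemma}). It is convenient to work with characteristic formulas: for a \fotc-Henkin structure with parameters $(\frM,\bar a)$ (with $\bar a$ of length $s$, assigned to variables $\bar u$) and $k\geq 0$, let $\tau^k_{\frM,\bar a}$ be the conjunction of all \fotc-formulas of quantifier depth $\leq k$ in the free variables $\bar u$ that hold in $(\frM,\bar a)$; by Lemma~\ref{finitenesslemma} this conjunction is (equivalent to) a single formula of quantifier depth $\leq k$, and one checks in the usual way that $(\frN,\bar b)\models\tau^k_{\frM,\bar a}$ iff $(\frM,\bar a)$ and $(\frN,\bar b)$ satisfy the same \fotc-formulas of quantifier depth $\leq k$. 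Thus the statement to prove becomes: Spoiler wins $EF_\fotc^n((\frM,\bar a),(\frN,\bar b))$ iff $(\frN,\bar b)\not\models\tau^n_{\frM,\bar a}$ (equivalently, iff $(\frM,\bar a)\not\models\tau^n_{\frN,\bar b}$, by symmetry). For $n=0$ this is immediate: Spoiler wins the $0$-round game iff $\bar a\mapsto\bar b$ is not a finite partial isomorphism, which happens iff some atomic (equivalently, quantifier-depth-$0$) formula distinguishes the two structures.

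For the inductive step, first I would show that a distinguishing formula yields a Spoiler strategy. Given a distinguishing \fotc-formula of quantifier depth $\leq n+1$, the argument in the proof of Lemma~\ref{finitenesslemma} rewrites it as a Boolean combination of atoms and of formulas of the form $\exists x\,\psi$ or $[TC_{xy}\psi](u,v)$ with $\psi$ of quantifier depth $\leq n$, so some such ``prime'' formula $\vartheta$ already distinguishes $(\frM,\bar a)$ and $(\frN,\bar b)$. If $\vartheta$ is an atom, then $\bar a\mapsto\bar b$ is not a partial isomorphism and Spoiler wins immediately. If $\vartheta=\exists x\,\psi$ (say true in $\frM$, false in $\frN$), Spoiler plays the $\exists$-move picking a witness $a$ with $\frM\models\psi[\bar a,a]$; whatever $b$ Duplicator answers, $\frN\not\models\psi[\bar b,b]$, so $\psi$ distinguishes the expanded structures and the induction hypothesis finishes the remaining $n$ rounds. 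If $\vartheta=[TC_{xy}\psi](u,v)$ with $u,v$ the parameter variables $u_i,u_j$ pebbled by $(a_i,b_i),(a_j,b_j)$, say $\frM\models\vartheta[\bar a]$ and $\frN\not\models\vartheta[\bar b]$: by the Henkin clause for $TC$ (Definition~\ref{hensem}), failure in $\frN$ produces an admissible $B\in\mathbb{A}_\frN$ with $b_i\in B$, $b_j\notin B$ that is closed under $\psi$. Spoiler plays the $TC$-move choosing this $B$ in $\frN$; Duplicator must respond with some admissible $A\in\mathbb{A}_\frM$, $a_i\in A$, $a_j\notin A$ (if she cannot, she has lost), and since $\frM\models\vartheta[\bar a]$ the set $A$ is not $\psi$-closed, so Spoiler can pick $a_{s+1}\in A$, $a_{s+2}\notin A$ with $\frM\models\psi[\bar a,a_{s+1},a_{s+2}]$. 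For Duplicator's answer $b_{s+1}\in B$, $b_{s+2}\notin B$, the $\psi$-closure of $B$ forces $\frN\not\models\psi[\bar b,b_{s+1},b_{s+2}]$, so $\psi$ (quantifier depth $\leq n$) distinguishes the expanded structures and the induction hypothesis finishes.

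Conversely, I would show that a Spoiler strategy yields a distinguishing formula by inspecting Spoiler's first move. If it is an $\exists$-move, WLOG in $\frM$ picking $a$, then for every response $b$ of Duplicator the strategy wins the $n$-round game on $(\frM,\bar a,a)$ vs $(\frN,\bar b,b)$, so by the induction hypothesis $(\frN,\bar b,b)\not\models\tau^n_{\frM,\bar a,a}$ for all $b$; hence $\exists x\,\tau^n_{\frM,\bar a,a}$ (quantifier depth $\leq n+1$ by Lemma~\ref{finitenesslemma}) holds in $\frM$ but not in $\frN$. If the first move is a $TC$-move, WLOG Spoiler chooses $A\in\mathbb{A}_\frM$ with $a_i\in A$, $a_j\notin A$ on pebbles $(a_i,b_i),(a_j,b_j)$ with variables $u_i,u_j$. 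Put $T:=\{\tau^n_{\frM,\bar a,c,c'}\mid c\in A,\ c'\in dom(\frM)\setminus A\}$, a finite set of formulas of quantifier depth $\leq n$ (Lemma~\ref{finitenesslemma}), and let $\chi:=\bigwedge_{\tau\in T}\neg\tau$; I claim $[TC_{xy}\chi](u_i,u_j)$ distinguishes $(\frM,\bar a)$ and $(\frN,\bar b)$. In $\frM$, the set $A$ is $\chi$-closed, since for any $c\in A$, $c'\notin A$ the conjunct $\neg\tau^n_{\frM,\bar a,c,c'}$ of $\chi$ fails at $(\bar a,c,c')$; with $a_i\in A$, $a_j\notin A$ this makes $[TC_{xy}\chi](u_i,u_j)$ false in $\frM$. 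In $\frN$, take any admissible $B'\in\mathbb{A}_\frN$ with $b_i\in B'$, $b_j\notin B'$; it is a legal Duplicator response to $A$, and Spoiler's strategy then produces $b_{s+1}\in B'$, $b_{s+2}\notin B'$ such that, for every $c\in A$, $c'\notin A$, Spoiler wins the remaining $n$-round game on $(\frM,\bar a,c,c')$ vs $(\frN,\bar b,b_{s+1},b_{s+2})$, whence by the induction hypothesis $\frN\not\models\tau^n_{\frM,\bar a,c,c'}[\bar b,b_{s+1},b_{s+2}]$ for all such $c,c'$, i.e. $\frN\models\chi[\bar b,b_{s+1},b_{s+2}]$; so $B'$ is not $\chi$-closed. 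Thus every $\chi$-closed admissible subset of $\frN$ containing $b_i$ contains $b_j$, i.e. $[TC_{xy}\chi](u_i,u_j)$ holds in $\frN$. This closes the induction.

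\emph{Main obstacle.} The delicate point is the $TC$-move in the ``strategy $\Rightarrow$ formula'' direction: one must extract from Spoiler's strategy a \emph{single} formula $\chi$ of quantifier depth $\leq n$ that simultaneously (i) leaves Spoiler's chosen admissible set $A$ closed, so $[TC_{xy}\chi]$ fails on one side, and (ii) destroys the closedness of \emph{every} admissible set separating the two pebbled points on the other side, so $[TC_{xy}\chi]$ holds there. This works only because in the $TC$-move Spoiler's answer to the opponent's admissible set is fixed before the opponent's two ensuing point choices, and because Lemma~\ref{finitenesslemma} collapses the a priori unbounded family of ``edge types'' $\tau^n_{\frM,\bar a,c,c'}$ into a finite conjunction; crucially the argument uses the Henkin semantics of $TC$ (quantification over admissible subsets, Definition~\ref{hensem}), not the finite-path reading of Proposition~\ref{tcaltsem}. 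The remaining ingredients --- the atom case, the $\exists$-move, the bookkeeping of free variables, and the choice of the right polarity of a distinguishing formula --- are routine adaptations of the \fo proof and of the \mso argument already used in the literature.
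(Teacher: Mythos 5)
Your proposal is correct and follows essentially the same route as the paper's proof: induction on $n$, a case split on whether the first move (respectively the outermost connective of a prime distinguishing formula) is an $\exists$- or a $TC$-move, the Finiteness Lemma to collapse the infinitely many responses into a single formula, and the Henkin (admissible-subset) reading of $TC$ to show the constructed formula is closed on Spoiler's chosen set but breaks every separating admissible set on the other side. The only difference is cosmetic: by using characteristic formulas $\tau^n_{\frM,\bar a,c,c'}$, which depend only on the $\frM$-side pairs, your $\chi$ works uniformly for all of Duplicator's set responses $B'$ and so avoids the outer disjunction $\bigvee_B\Phi_B$ that the paper takes over Duplicator's possible choices of $B$.
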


\proof\hfill

\begin{iteMize}{$\Rightarrow$}

\item From the existence of a winning strategy for
Spoiler in the \fotc-game of length $n$ in between $(\frM, \bar{a})$ and $(\frN, \bar{b})$, we will infer the
existence of a \fotc-formula of quantifier depth $n$
distinguishing $(\frM,\bar{a})$ and $(\frN,\bar{b})$.

By induction on $n$.

Base step: With $0$ round the initial match between distinguished objects must have failed to be a partial isomorphism for Spoiler to win. This
implies that $(\frM, \bar{a})$ and $(\frN, \bar{b})$ disagree on some atomic formula.

Inductive step: The induction hypothesis says that for every two
structures, if Spoiler can win their comparison game over $n$ rounds,
then the structures disagree on some \fotc-formula of quantifier depth
$n$.  Assume that for some structures $(\frM, \bar{a}), (\frN,
\bar{b})$, Spoiler has a winning strategy for the game over $n+1$
rounds. Let us reason on Spoiler's first move in the game. It can
either be a $TC$ or an $\exists$ move.

If it is an $\exists$ move, then it means that Spoiler picks an
element $a$ in one of the two structures, so that no matter what
element $b$ Duplicator picks in the other, Spoiler has an
$n$-round winning strategy. But then we can use the induction
hypothesis, and find for each such $b$ a formula $\varphi_b(x)$
that distinguishes $(\frM, \bar{a}, a)$ from $(\frN, \bar{b}, b)$.
In fact we can assume that in each case the respective formula is
true of $(\frM, \bar{a}, a)$ and false of $(\frN, \bar{b}, b)$ (by
negating the formula if needed). Now take the big conjunction
$\varphi(x)$ of all these formulas (which is equivalent to a
finite formula according to Lemma \ref{finitenesslemma}) and prefix it
with an existential quantifier. Then the resulting formula is true
in $(\frM, \bar{a})$ but false in $(\frN, \bar{b})$. It is true in
$(\frM, \bar{a})$ if we pick $a$ for the existentially quantified
variable. And no matter which element we pick in $(\frN,
\bar{b})$, it will always falsify one of the conjuncts in the
formula, by construction. So, the new formula is false in $(\frN,
\bar{b})$. I.e., $\exists x \varphi(x)$ of quantifier depth $n+1$
distinguishes $(\frM, \bar{a})$ and $(\frN, \bar{b})$.

If Spoiler's first move is a $TC$-move, then it means that
Spoiler picks a subset in one structure, let say $A \in
\mathbb{A}_\frM$ (with $a_i \in A$ and $a_j \not \in A$), so that
no matter which $B \in \mathbb{A}_\frN$ (with $b_i \in B$ and $b_j
\not \in B$) Duplicator picks in the other structure, Spoiler can
pick $b_k \in B$, $b_{k+1} \not \in B$ such that no matter which
$a_k \in A$, $a_{k+1} \not \in A$ Duplicator picks, Spoiler has an
$n$-round winning strategy. For each $B$ that might be chosen by
Duplicator, Spoiler's given strategy gives a fixed couple
$b_k,b_{k+1}$. For each response $a_k,a_{k+1}$ of Duplicator, we
thus obtain by inductive hypothesis a discriminating formula
$\varphi_{B,a_k,a_{k+1}}(x,y)$ that we can assume to be true in
$(\frN, \bar{b})$ for $b_k,b_{k+1}$ and false in $(\frM, \bar{a})$
for $a_k,a_{k+1}$. Now for each $B$, let us take the big
conjunction $\Phi_B(x,y)$ of all these formulas (which is finite,
by Lemma \ref{finitenesslemma}). We can then construct the big
disjunction $\Phi(x,y)$ (again finite, by the same lemma) of all
the formulas $\Phi_B(x,y)$.

Considering the first round in the game together with the inductive hypothesis, note that the \mso formula $\exists X
(a_i \in X \wedge a_j \not \in X \wedge \forall x y ((x \in X \wedge y \not \in X) \rightarrow \neg\Phi(x,y)))$ holds in $(\frM, \bar{a})$. Indeed, by induction
hypothesis, any couple $a_k \in A, a_{k+1} \not \in A$ that Duplicator might choose in $dom(\frM)$ will always falsify at least one of the conjuncts of
each $\Phi_B(x,y)$. Finally, the formula $\Phi(x,y)$ being constructed as the disjunction of all the formulas $\Phi_B(x,y)$, any such couple
$a_k,a_{k+1}$ will also falsify $\Phi(x,y)$. Now the \mso formula $\exists X (a_i \in X \wedge a_j \not \in X \wedge \forall
x y ((x \in X \wedge y \not \in X) \rightarrow \neg\Phi(x,y)))$ is equivalent to $\exists X (a_i \in X \wedge a_j \not \in X \wedge \neg \exists
x y (x \in X \wedge \Phi(x,y) \wedge y \not \in X))$, which means that $(\frM, \bar{a}) \not \models [TC_{xy}\Phi(x,y)](a_i,a_j)$.

On the other hand for the same reasons, note that it holds in $(\frN, \bar{b})$ that $\forall X ((b_i \in X \wedge b_j \not \in X) \rightarrow
\exists x y (x \in X \wedge y \not \in X \wedge \Phi(x,y)))$. Indeed, by induction hypothesis, for each $B$ that Duplicator might choose in
$\mathbb{A}_{\frN}$ Spoiler will always be able to find a couple $b_k \in B, b_{k+1} \not \in B$ satisfying all the conjuncts of the
corresponding formulas $\Phi_B(x,y)$. Finally, the formula $\Phi(x,y)$ being constructed as the disjunction of all the formulas $\Phi_B(x,y)$, such a couple
$a_k,a_{k+1}$ will also satisfy $\Phi(x,y)$. Now $\forall X ((b_i \in X \wedge b_j \not \in X) \rightarrow \exists x y (x \in X \wedge y \not \in X \wedge \Phi(x,y)))$ is equivalent to $\forall X (b_i \not \in X \vee b_j \in X \vee \exists x y
(x \in X \wedge y \not \in X \wedge \Phi(x,y)))$, which means that $(\frN, \bar{b}) \models [TC_{xy}\Phi(x,y)](b_i,b_j)$.

Let $u$ be a name for the parameters $a_i, b_i$ and $v$ for $b_i, b_j$. $[TC_{xy}\Phi(x,y)](u,v)$ of quantifier depth $n+1$ distinguishes
$(\frN, \bar{a})$ and $(\frM, \bar{b})$.

\item[$\Leftarrow$]From the existence of a \fotc formula of
quantifier depth $n$ distinguishing $(\frM, \bar{a})$ and $(\frN,
\bar{b})$ we will infer the existence of a winning strategy for Spoiler in
the game $EF_{FO+TC}^n((\frM, \bar{a}), (\frN, \bar{b}))$.

By induction on $n$. 

Base step: Doing nothing is a strategy for Spoiler.

Inductive step: The inductive hypothesis says that, for every two
structures, if they disagree on some \fotc formula of quantifier
depth $n$, then Spoiler has a winning strategy in the $n$-round
game. Now, assume that some expanded structures $(\frM, \bar{a})$,
$(\frN, \bar{b})$ disagree on some \fotc formula $\chi$ of
quantifier depth $n+1$. Any such formula must be equivalent to a
Boolean combination of formulas of the form $\exists x \psi(x)$
and $[TC_{xy} \varphi(x,y)](u,v)$ with $\psi$, $\varphi$ of
quantifier depth at most $n$. If $\chi$ distinguishes the two
structures, then there is at least one component of this Boolean
combination which suffices for distinguishing them.

Let us first suppose that it is of the form $\exists x \psi(x)$. We may assume without loss of generality that
$(\frM, \bar{a}) \models \exists x \psi(x)$ whereas
$(\frN, \bar{b}) \not \models \exists x \psi(x)$. Then it means
that there exists an object $a \in dom(\frM)$ such that $(\frM, \bar{a})
\models \psi(a)$ whereas for every object $b \in dom(\frN)$, $(\frN,
\bar{b}) \not \models \psi(b)$. But then we can use our induction
hypothesis and find for each such $b$ a winning strategy for
Spoiler in $EF_\fotc^n((\frM, \bar{a}, a),(\frN, \bar{b}, b))$. We
can infer that Spoiler has a winning strategy in $EF_\fotc^{n+1}((\frM,
\bar{a}), (\frN, \bar{b}))$. His first move consists in
picking the object $a$ in $dom(\frM)$ and for each response $b$ in $dom(\frN)$ of
Duplicator, the remaining of his winning strategy is the same as in
$EF_\fotc^n((\frM, \bar{a}, a),(\frN, \bar{b}, b))$.

We now suppose that $[TC_{xy} \varphi(x,y)](u,v)$ of
quantifier depth $n+1$ distinguishes the two structures. We may assume without loss of generality that
$(\frM, \bar{a}) \models[TC_{xy} \varphi(x,y)](u,v)$ i.e. it holds
in $(\frM, \bar{a})$ that $\forall X ((a_i \in X \wedge a_j \not
\in X) \rightarrow \exists x y (x \in X \wedge y \not \in X \wedge
\varphi(x,y)))$, whereas $(\frN, \bar{b}) \not \models[TC_{xy}
\varphi(x,y)](u,v)$ i.e. it holds in $(\frN, \bar{b})$ that
$\exists X (b_i \in X \wedge b_j \not \in X \wedge \neg \exists x
y (x \in X \wedge \varphi(x, y) \wedge y \not \in X))$. We want to
show that Spoiler has a winning strategy  in
$EF_\fotc^{n+1}((\frM, \bar{a}), (\frN, \bar{b}))$. Let us
describe her first move. She first chooses $(\frN, \bar{b})$ and
$B \in \mathbb{A}_\frN$ such that $b_i \in B \wedge b_j \not \in B
\wedge \neg \exists x y (x \in B \wedge \varphi(x,y) \wedge y \not
\in B)$. By definition of $TC$, such a set exists. Duplicator has
to respond by picking a set $A$ in $\mathbb{A}_\frM$ containing $a_i$ and not $a_j$. Spoiler then
picks $a_k \in A$ and $a_{k+1} \not \in A$ such that $(\frM,
\bar{a}) \models \varphi(a_k,a_{k+1})$. This is possible because
by definition of $TC$, for any possible choice $A$ of Duplicator (i.e., any set $A$ containing $a_i$ and not $a_j$)
we have $\exists x y (x \in A \wedge y \not \in A \wedge
\varphi(x,y))$. But that means that Duplicator is now stuck and
has to pick $b_k \in B$ and $b_{k+1} \not \in B$ such that $(\frN,
\bar{b}) \not \models \varphi(b_k,b_{k+1})$. Consequently, we have
$(\frN, \bar{b}, b_k, b_{k+1}) \not \models \varphi(x, y)$,
whereas $(\frM, \bar{a}, a_k, a_{k+1}) \models \varphi(x, y)$. As
$\varphi(x,y)$ is of quantifier depth $n$, by induction
hypothesis, Spoiler has a winning strategy in $EF_\fotc^n((\frM,
\bar{a}, a_k, a_{k+1}),(\frN, \bar{b}, b_k, b_{k+1}))$. The
remaining of Spoiler's winning strategy in $EF_\fotc^{n+1}((\frM,
\bar{a}),(\frN, \bar{b}))$ (i.e. after her first move, that we
already accounted for) is consequently as in $EF_\fotc^n((\frM,
\bar{a}, a_k, a_{k+1}),(\frN, \bar{b}, b_k, b_{k+1}))$.\qed
\end{iteMize}

\begin{cor}
For structures $\frM$, $\frN$ and $n \geq 0$, Duplicator has a winning strategy in $EF_\fotc^n(\frM,\frN)$ if and only if $\frM \equiv_\fotc^n
\frN$. In particular, Duplicator has a winning strategy in all $EF_\fotc$-games of finite length between $\frM$ and $\frN$ if and only if $\frM
\equiv_\fotc \frN$.
\end{cor}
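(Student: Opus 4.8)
The plan is to derive this corollary directly from the \fotc Adequacy theorem (Theorem~\ref{thm:effotc}) by specialising the parameter tuples $\bar a,\bar b$ to be empty ($s=0$) and then passing to contrapositives, using the determinacy of the games $EF_\fotc^n$.

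First I would instantiate Theorem~\ref{thm:effotc} with $s=0$, so that $(\frM,\bar a)=\frM$ and $(\frN,\bar b)=\frN$ and the ``distinguishing formulas'' in question are \fotc \emph{sentences}: Spoiler has a winning strategy in $EF_\fotc^n(\frM,\frN)$ if and only if there is a \fotc sentence of quantifier depth at most $n$ true in one of $\frM,\frN$ and false in the other. Next I would observe that each game $EF_\fotc^n(\frM,\frN)$ is a finite two-player game of perfect information with no draws: after the $n$-th round the constructed relation either is or is not a finite partial isomorphism. Hence, by a standard backward-induction argument on the number of rounds (Zermelo's theorem — note that the possibly infinite branching of the game tree is harmless, as the tree has finite depth), exactly one of the two players has a winning strategy. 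In particular, ``Duplicator has a winning strategy in $EF_\fotc^n(\frM,\frN)$'' is equivalent to ``Spoiler does \emph{not} have a winning strategy in $EF_\fotc^n(\frM,\frN)$''.

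Combining these two observations gives: Duplicator has a winning strategy in $EF_\fotc^n(\frM,\frN)$ iff no \fotc sentence of quantifier depth at most $n$ distinguishes $\frM$ from $\frN$, iff $\frM$ and $\frN$ satisfy the same \fotc sentences of quantifier depth at most $n$, which is precisely $\frM\equiv_\fotc^n\frN$. (Should one read ``quantifier depth $n$'' in Theorem~\ref{thm:effotc} as ``exactly $n$'' rather than ``at most $n$'', one simply pads any distinguishing sentence of smaller depth up to depth $n$ by conjoining harmless dummy quantifications; the two readings then agree for the purpose at hand.)

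For the final ``in particular'' clause, I would note that Duplicator has a winning strategy in \emph{all} $EF_\fotc$-games of finite length between $\frM$ and $\frN$ iff Duplicator has a winning strategy in $EF_\fotc^n(\frM,\frN)$ for every $n\geq 0$, iff $\frM\equiv_\fotc^n\frN$ for every $n$, iff $\frM\equiv_\fotc\frN$ — the last step being the remark recorded just after the definition of $\equiv_\Lambda^n$, namely that $\frM\equiv_\Lambda\frN$ holds precisely when $\frM\equiv_\Lambda^n\frN$ holds for all $n$. The only genuinely non-routine point is the determinacy step; everything else is bookkeeping about negations and quantifier-depth bounds, so I do not expect a substantial obstacle here.
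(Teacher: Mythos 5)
Your proposal is correct and follows essentially the route the paper intends: the corollary is left unproved there as an immediate consequence of Theorem~\ref{thm:effotc}, obtained by taking $s=0$ and negating both sides. You rightly flag that the negation step needs determinacy of the finite-length game (the theorem is phrased in terms of Spoiler, the corollary in terms of Duplicator), and your backward-induction justification of that step is sound despite the infinite branching.
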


Let us finally consider the \folfp case. There are two classical
equivalent syntactic ways of defining the syntax of \folfp: the one
we used in Section \ref{seclog} and another one, dispensing with
restrictions to positive formulas, but allowing negations only in
front of atomic formulas and introducing a greatest fixed-point
operator as the dual of the least fixed-point operator (also
$\forall$ cannot be defined using $\exists$ and has to be
introduced separately, similarly for the Boolean connectives).
This second way of defining \folfp turns out to be more convenient
to define an adequate Ehrenfeucht-Fra\"{\i}ss\'e game. The game is
suitable to use on Henkin structures because the semantics on
which it relies is merely a syntactical variant of the one given
in Section \ref{sec3}. Now the \folfp-formulas $[LFP_{Xx}
\varphi(x,X)]y$ and $[GFP_{Xx} \varphi(x,X)]y$, stating that a
point belongs to the least fixed-point, or respectively, to the
greatest fixed-point induced by the formula $\varphi$ satisfy the
following equations:
\begin{center}
$[LFP_{Xx} \varphi(x,X)]y \leftrightarrow \forall X (\neg Xy
\rightarrow \exists x (\neg Xx \wedge \varphi(x,X)))$

$[GFP_{Xx} \varphi(x,X)]y \leftrightarrow \exists X (Xy \wedge
\forall x(Xx \rightarrow \varphi(x,X)))$
\end{center}
Note that this holds no matter whether we be concerned with \folfp and \mso on
standard structures or on Henkin structures. The consideration of
these equations is the key idea behind an
Ehrenfeucht-Fra\"{\i}ss\'e game defined by Uwe Bosse in
\cite{736408} for least fixed-point logic \lfp (i.e. where
fixed-points are not only considered for monadic operators, but
for any $n$-ary operator). \folfp being simply the monadic
fragment of \lfp, the game for \lfp can be adapted to \folfp in a
straightforward way:

\begin{defi}[\folfp Ehrenfeucht-Fra\"\i ss\'e game]
Consider \folfp-Henkin structures $\frM$ and $\frN$ together with $\bar{a} \in dom(\frM)^s$,
$\bar{b} \in dom(\frN)^s$,  $\bar{A} \in \mathbb{A}_\frM^r$,
$\bar{b} \in \mathbb{A}_\frN^r$, $r \geq 0$, $s \geq 0$, $n \geq 0$. In the game
$EF_\folfp^n((\frM, \bar{A}, \bar{a}),(\frN, \bar{B}, \bar{b}))$ of length $n$,
there are two types of moves, point and fixed-point moves. Each
move results in an extension of the assignment $\bar{a} \mapsto \bar{b}, \bar{A}
\mapsto \bar{B}$ with elements $a_{s+1} \in dom(\frM), b_{s+1} \in
dom(\frN)$, and possibly (in the case of fixed-point moves) with
sets $A_{r+1} \in \mathbb{A}_\frM, B_{r+1} \in \mathbb{A}_\frN$. Spoiler chooses the kind of move to be played. 
Now the following moves are
possible:

\begin{iteMize}{$\bullet$}
\item{$\exists$ move: Spoiler chooses $a_{s+1} \in dom(\frM)$ and Duplicator $b_{s+1} \in dom(\frN)$.}
\item{$\forall$ move: Spoiler chooses $b_{s+1} \in dom(\frN)$ and Duplicator $a_{s+1} \in dom(\frM)$.}
\end{iteMize}
In each point move, the assignment is extended by $a_{s+1}\mapsto
b_{s+1}$.
\begin{iteMize}{$\bullet$}
\item{$LFP$ move: Spoiler chooses $B_{r+1} \in \mathbb{A}_{\frN} \setminus \{dom(\frN)\}$ with some pebble $b_i \not \in B_{r+1}$
and Duplicator responds with $A_{r+1} \in
\mathbb{A}_{\frM}\setminus \{dom(\frM)\}$.

Now Spoiler chooses in $dom(\frM)$ a new element $a_{s+1} \not \in
A_{r+1}$ and Duplicator answers in $dom(\frN)$ with $b_{s+1} \not \in
B_{r+1}.$}
\item{$GFP$ move:  Spoiler chooses $A_{r+1} \in \mathbb{A}_{\frM}\setminus \{dom(\frM)\}$ with some pebble $a_i \in A_{r+1}$
and Duplicator responds with $B_{r+1} \in
\mathbb{A}_{\frN}\setminus \{dom(\frN)\}$ such that $B_{r+1} \neq
\emptyset$.

Now Spoiler chooses in $dom(\frN)$ a new element $b_{s+1} \in B_{r+1}$ and
Duplicator answers in $dom(\frM)$ with $a_{s+1} \in A_{r+1}$.}
\end{iteMize}
In each fixed-point move the assignment is extended by $A_{r+1}
\mapsto B_{r+1}, a_{s+1} \mapsto b_{s+1}.$

After $n$ moves, Duplicator has won if the constructed element
assignment $\bar{a} \mapsto \bar{b}$ is a partial isomorphism and
for the subset assignment $\bar{A} \mapsto \bar{B}$, for any $1
\leq j \leq r$ and $i \leq s$:
\begin{center}
$a_i \in A_j$ implies $b_i \in B_j$
\end{center}
We call an assignment with these properties a \emph{posimorphism}.
\label{effolfp}
\end{defi}

\begin{thm}[\folfp Adequacy]\label{thm:effolfp}
  Assume a finite relational \folfp language. Given two \folfp-Henkin
  structures $\frM$ and $\frN$, $\bar{A} \in \mathbb{A}_{\frM}^r$,
  $\bar{B} \in \mathbb{B}_{\frN}^r$, $\bar{a} \in dom(\frM)^s$,
  $\bar{b} \in dom(\frN)^s$ and $r \geq 0$, $s \geq 0$, $n \geq 0$,
  Duplicator has a winning strategy in the game $EF_\folfp^n((\frM,
  \bar{A}, \bar{a}), (\frN, \bar{B}, \bar{b}))$ iff $(\frM, \bar{A},
  \bar{a})$ and $(\frN, \bar{B}, \bar{b})$ satisfy the same
  \folfp-for\-mulas of quantifier depth $n$.
\end{thm}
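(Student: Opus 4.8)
The plan is to adapt Bosse's adequacy proof for \lfp from \cite{736408} to the monadic, Henkin-semantics setting, in the same way the proof of Theorem~\ref{thm:effotc} adapts Gr\"{a}del's \fotc argument. Throughout we work with the GFP-normal form of \folfp recalled just before Definition~\ref{effolfp} (negations only in front of atoms, with $\forall$ and $GFP$ taken as primitive alongside $\exists$ and $LFP$), which is equivalent to the syntax of Section~\ref{seclog} on both standard and Henkin structures, since the two semantics validate the same defining equivalences for $[LFP_{Xx}\varphi]y$ and $[GFP_{Xx}\varphi]y$. Since $EF_\folfp^n$ is a finite game it is determined, so it is enough to prove, by induction on $n$, that Spoiler has a winning strategy in $EF_\folfp^n((\frM,\bar A,\bar a),(\frN,\bar B,\bar b))$ iff some \folfp-formula of quantifier depth $\le n$, with free variables among those naming $\bar a$ and $\bar A$, distinguishes the two expanded structures. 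For $n=0$, Spoiler wins iff $\bar a\mapsto\bar b$ together with $\bar A\mapsto\bar B$ fails to be a posimorphism, and one checks this is exactly the condition that some quantifier-free formula --- a Boolean combination of first-order atoms and membership atoms $X_jx_i$, negated only at atoms --- separates the structures; the asymmetry of the posimorphism clause matches the fact that along any play the parameter sets are always chosen by Spoiler in a fixed structure ($\frN$ for $LFP$ moves, $\frM$ for $GFP$ moves) and copied by Duplicator into the other.

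For the inductive step, in the direction ``Spoiler wins $\Rightarrow$ distinguishing formula exists'' we branch on Spoiler's first move, as in the \fotc proof. An $\exists$ (resp.\ $\forall$) move is handled exactly as in the \fo case, yielding $\exists x\,\varphi(x)$ (resp.\ $\forall x\,\varphi(x)$), where $\varphi$ is the conjunction --- finite up to equivalence by Lemma~\ref{finitenesslemma} --- of the depth-$n$ formulas supplied by the sub-games. For an $LFP$ move, Spoiler picks $B_{r+1}\in\mathbb{A}_\frN\setminus\{dom(\frN)\}$ omitting a pebble $b_i$ so that, for every admissible response $A_{r+1}$ of Duplicator and every subsequent element $a_{s+1}\notin A_{r+1}$ that Spoiler then selects, he wins the remaining $n$ rounds; applying the induction hypothesis and negating where needed, we obtain for each $A_{r+1}$ a finite conjunction $\Phi_{A_{r+1}}(x,X)$ of depth-$n$ formulas holding on the $\frM$-side at $A_{r+1}$ and Spoiler's element, and failing on the $\frN$-side at $B_{r+1}$ for every element outside $B_{r+1}$, and then the finite disjunction $\Phi(x,X)$ of the $\Phi_{A_{r+1}}$ over all admissible proper sets $A_{r+1}$ avoiding $a_i$; using the defining equivalence for $LFP$ one checks that $[LFP_{Xx}\Phi(x,X)]z$, with $z$ naming the pebble pair $(a_i,b_i)$, holds in $(\frM,\bar a)$ but not in $(\frN,\bar b)$. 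A $GFP$ first move is dual, with the roles of conjunction and disjunction, and of the two structures, interchanged, giving a formula $[GFP_{Xx}\Phi]z$.

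For the converse, a distinguishing formula of quantifier depth $n+1$ is put in GFP-normal form and reduced to a single Boolean component of the shape $\exists x\,\psi$, $\forall x\,\psi$, $[LFP_{Xx}\varphi]z$ or $[GFP_{Xx}\varphi]z$ with $\psi,\varphi$ of depth $\le n$; the quantifier components are treated exactly as in the \fo and \fotc proofs. If the component is $[LFP_{Xx}\varphi]z$, say true in $(\frM,\bar a)$ and false in $(\frN,\bar b)$, the defining equivalence yields an admissible set $B\ne dom(\frN)$ that omits the $z$-pebble and is closed under $\varphi(\cdot,B)$; Spoiler plays the $LFP$ move with this $B$, Duplicator must respond with a proper admissible set $A$, which we may assume omits the $z$-pebble (else the posimorphism condition already fails), and since $[LFP_{Xx}\varphi]z$ holds in $\frM$ there is $a_{s+1}\notin A$ with $\frM\models\varphi(a_{s+1},A)$, which Spoiler picks; whatever $b_{s+1}\notin B$ Duplicator answers, closure of $B$ forces $\frN\not\models\varphi(b_{s+1},B)$, so the expanded structures disagree on the depth-$n$ formula $\varphi$ and the induction hypothesis supplies the rest of Spoiler's strategy. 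The $GFP$ component is symmetric; here one uses that Duplicator can always supply the required proper (and, for $GFP$ moves, nonempty) admissible set, which is where closure of a Henkin structure under $\Lambda$-definability, and Remark~\ref{rem}, enter.

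I expect the main obstacle to be the fixed-point bookkeeping: one must check that the formulas $\Phi$ assembled from sub-games are genuine \folfp-formulas --- in particular positive in the bound variable $X$, which has to be verified against the way the new set parameter occurs in the collected depth-$n$ formulas, this being exactly what forces the GFP-normal form and the asymmetric winning condition on us --- and that the asymmetric posimorphism winning condition stays consistent with the $LFP$/$GFP$ duality along every play, so that the quantifier-depth-$0$ membership atoms that can arise are precisely those the game respects. Everything else is a routine transcription of the \fo argument, and, exactly as in the \mso and \fotc cases, restricting the set quantifiers to admissible subsets causes no difficulty because the game itself only ever lets a player choose admissible sets in whichever structure is under consideration.
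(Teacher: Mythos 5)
Your proposal is correct and follows the same route as the paper, which for this theorem gives no details beyond citing Bosse's adequacy proof for \lfp and asserting that the same argument works on Henkin structures; your sketch is essentially that transfer, worked out in the style of the paper's own \fotc adequacy proof (induction on $n$, case split on Spoiler's first move, finite conjunctions/disjunctions via Lemma~\ref{finitenesslemma}, and the defining equivalences for $LFP$/$GFP$). The one point you flag but do not fully discharge --- that the assembled formulas $\Phi$ are positive in the bound set variable, which is exactly what the GFP-normal form and the asymmetric posimorphism winning condition are designed to guarantee --- is precisely the content of Bosse's argument, so nothing is missing relative to what the paper itself provides.
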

For a proof in the case of standard structures, we refer the reader to Uwe Bosse \cite{736408}. As pointed out earlier, the same argument works as well in the case of Henkin structures.

\subsection{Fusion Theorems on Henkin-Structures}
\label{apc}

Let $\Lambda \in \{\mso,\fotc,\folfp\}$. We show our analogues of Feferman-Vaught theorem for fusions of $\Lambda$-Henkin-structures. We will refer to them as \emph{$\Lambda$-fusion Theorems}, even though they will sometimes be formally first stated as corollaries. What we show is, more precisely, that fusion of $\Lambda$-Henkin-structures preserve $\Lambda$-equivalence for all fixed quantifier-depths.

In order to give inductive proofs for \mso and \folfp, it will be more convenient to consider parametrized $\Lambda$-Henkin-structures where the set of set parameters is closed under union, this notion being defined below. This is safe because whenever two parametrized structures $(\frM,\bar{A},\bar{a})$ and $(\frN,\bar{B},\bar{b})$ are $n$-$\Lambda$-equivalent, it follows trivially that $\frM$ and $\frN$ considered together with a subset of this set of parameters are also $n$-$\Lambda$-equivalent.

\begin{defi}
Let $A_1, \ldots, A_k$ be a finite sequence of set parameters. We
define the sequence $(A_1, \ldots, A_k)^\cup$ as the finite sequence of set
parameters obtained by closing the set $\{A_1, \ldots, A_k\}$
under union, i.e., $(A_1, \ldots, A_k)^\cup=\{\bigcup_{i \in I}
A_i | I \subseteq \{1,\ldots,k\}\}$. (We additionally assume that
this set is ordered in a fixed canonical way, depending on the
index sets $I$.)
\end{defi}

\begin{thm}[Fusion Theorem for \mso]
Let $\mathfrak{M}_i$ and $\mathfrak{N}_i$ be $\mso$-Henkin
structures, where $1\leq i\leq k$. Furthermore, for $1\leq i\leq k$, 
let $\bar{a_i}$, $\bar{b_i}$ be sequences of first-order
parameters of the form $a_{i_1},\ldots,a_{i_m}$,
$b_{i_1},\ldots,b_{i_m}$ (where $m\in\mathbb{N}$ may depend on $i$)
and 
$\bar{A_i}$, $\bar{B_i}$ sequences of set parameters of
the form $A_{i_1},\ldots,A_{i_{m'}}$, $B_{i_1},\ldots,B_{i_{m'}}$
(where $m'\in\mathbb{N}$ may again depend on $i$).
Whenever $$(\frM_i,\bar{A_i},
\bar{a_i})\equiv_{MSO}^n(\frN_i,\bar{B_i},\bar{b_i})\text{ for all }1
\leq i \leq k,$$ then also
$$\bigoplus_{1\leq i\leq k}^f \frM_i,(\bar{A_1},
\ldots, \bar{A_k})^\cup, \bar{a_1},\ldots,\bar{a_k}\equiv_{MSO}^n
\bigoplus_{1\leq i\leq k}^f \frN_i, (\bar{B_1},
\ldots, \bar{B_k})^\cup, \bar{b_1},\ldots,\bar{b_k}.$$.
\label{msofusion}
\end{thm}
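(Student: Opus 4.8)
The plan is to prove the statement by induction on $n$; by the adequacy of the \mso Ehrenfeucht-Fra\"{\i}ss\'e game (Theorem~\ref{thm:efmso}) this is equivalent to exhibiting a winning strategy for Duplicator in $EF_\mso^n$ on the two $f$-fusions, the strategy being to mirror each of Spoiler's moves into the relevant component game(s) and reply there according to a winning strategy witnessing the hypothesis $(\frM_i,\bar{A_i},\bar{a_i})\equiv_\mso^n(\frN_i,\bar{B_i},\bar{b_i})$. Two preliminary facts drive everything. \emph{(i)} By Definition~\ref{union}, every admissible subset $A$ of $\bigoplus^f_{1\le i\le k}\frM_i$ decomposes uniquely as $A=\bigcup_i A_i$ with $A_i=A\cap dom(\frM_i)\in\mathbb{A}_{\frM_i}$, and conversely every such union is admissible; also every element of the fusion lies in exactly one component. \emph{(ii)} Since $f(P)$ is a quantifier-free Boolean combination of $\sigma$-atoms and the markers $Q_i$ interpreted over $\biguplus_i\frM_i$, and a $\sigma$-atom of $\biguplus_i\frM_i$ can hold of a tuple only if that tuple lies within a single component (where it reduces to the corresponding atom of $\frM_i$), while $Q_i$ merely records the component, any map that sends each pebbled element to an element of the matching component and restricts to a partial isomorphism on each component is automatically a partial isomorphism for the relational part of the fusions; and membership in a set parameter, being a union of component sets, likewise reduces to the component level.

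For the base case $n=0$ the hypotheses say that for each $i$ the map $\bar{a_i}\mapsto\bar{b_i}$, together with $\bar{A_i}\mapsto\bar{B_i}$, is a partial isomorphism of $\frM_i$ and $\frN_i$; combining this with fact~\emph{(ii)} and the observation that membership of a component-$i$ element in a union parameter $\bigcup_{(i,j)\in J}A_{i_j}$ taken from $(\bar{A_1},\ldots,\bar{A_k})^\cup$ depends only on its membership in the component parameters $A_{i_j}$, one checks that the initial position of the fusion game is a partial isomorphism. For the inductive step, if Spoiler picks an element $a$, say with $a\in dom(\frM_j)$ (the other copy being symmetric), then $(n+1)$-\mso-equivalence of the $j$-th pair supplies $b\in dom(\frN_j)$ with $(\frM_j,\bar{A_j},\bar{a_j},a)\equiv_\mso^n(\frN_j,\bar{B_j},\bar{b_j},b)$; Duplicator answers $b$, the other components are untouched, and the induction hypothesis applies. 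If Spoiler picks a set $A\in\mathbb{A}_{\bigoplus^f\frM_i}$, decompose $A=\bigcup_i A_i$ by fact~\emph{(i)}; for each $i$, $(n+1)$-equivalence of the $i$-th pair yields $B_i\in\mathbb{A}_{\frN_i}$ with $(\frM_i,\bar{A_i},A_i,\bar{a_i})\equiv_\mso^n(\frN_i,\bar{B_i},B_i,\bar{b_i})$, and Duplicator answers with $B=\bigcup_i B_i$, which by fact~\emph{(i)} is admissible in $\bigoplus^f\frN_i$.

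The delicate point --- and the reason the theorem is stated with the union-closure $(\bar{A_1},\ldots,\bar{A_k})^\cup$ rather than with $\bar{A_1},\ldots,\bar{A_k}$ --- is the bookkeeping of set parameters in this set-move case. After the move the component parameter lists are $\bar{A_i}A_i$ and $\bar{B_i}B_i$, and to reapply the induction hypothesis one needs the fusions carrying the parameters $(\bar{A_1}A_1,\ldots,\bar{A_k}A_k)^\cup$. This union-closed list contains the old list $(\bar{A_1},\ldots,\bar{A_k})^\cup$ \emph{and} the particular new set $A=\bigcup_i A_i$ actually chosen in the fusion game (and likewise on the $\frN$-side); so, invoking the remark that $n$-equivalence of parametrized structures is preserved when set parameters are discarded, the $n$-equivalence delivered by the induction hypothesis for the larger union-closed parameter list in particular yields $n$-equivalence for the parameters genuinely present in the fusion game, which is what is needed. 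The residual verifications --- that $\bigcup_{j:(i,j)\in J}A_{i_j}\in\mathbb{A}_{\frM_i}$ (closure of \mso-Henkin structures under finite unions, by parametric definability) and that every maintained position is a partial isomorphism via fact~\emph{(ii)} --- are routine.
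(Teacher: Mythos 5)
Your proposal is correct and follows essentially the same route as the paper: a winning strategy for Duplicator in the fusion game is assembled by induction on $n$ from her componentwise strategies, with point moves answered in the unique relevant component, set moves decomposed via Definition~\ref{union} and answered componentwise, and the union-closure of the set-parameter lists serving exactly the bookkeeping role you identify (the paper's parenthetical ``this is enough, because $A \in (\bar{A_1},A_1,\ldots,\bar{A_k},A_k)^\cup$''). Your explicit remark that unions of several parameters from the \emph{same} component remain admissible by parametric definability is a small verification the paper leaves implicit, but it does not change the argument.
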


\proof
We define a winning strategy for Duplicator in the game
$$EF_\mso^n((\bigoplus_{1\leq i\leq k}^f
\frM_i,(\bar{A_1}, \ldots, \bar{A_k})^\cup,
\bar{a_1},\ldots,\bar{a_k}), (\bigoplus_{1\leq i\leq k}^f
\frN_i, (\bar{B_1}, \ldots, \bar{B_k})^\cup,
\bar{b_1},\ldots,\bar{b_k}))$$ out of her winning strategies in
the games $EF_\mso^n((\frM_i,\bar{A_i},
\bar{a_i}),(\frN_i,\bar{B_i},\bar{b_i}))$ by induction on $n$.

Base step: $n=0$, doing nothing is a strategy for Duplicator. We
need to show that $$(\bigoplus_{1\leq i\leq k}^f
\frM_i,(\bar{A_1}, \ldots, \bar{A_k})^\cup,
\bar{a_1},\ldots,\bar{a_k})$$ and $$(\bigoplus_{ 1 \leq i
\leq k}^f \frN_i, (\bar{B_1},\ldots,\bar{B_k})^\cup,
\bar{b_1},\ldots,\bar{b_k})$$ agree on all atomic formulas. Now in
the fusion structures, each atomic formula is defined by $f$ in terms
of a $\sigma^*$-quantifier free formula that is evaluated in the
corresponding disjoint union structure. So it is enough to show that
the disjoint union structures agree on all atomic $\sigma^*$-formulas
and on their Boolean combinations. The initial match between the
distinguished objects in $(\frM_i, \bar{A_i},\bar{a_i})$ and
$(\frN_i, \bar{B_i},\bar{b_i})$ is a partial isomorphism for every
$1\leq i \leq k$, so it is also one for $\biguplus_{1 \leq
i \leq k} \frM_i, \bar{a_1},\ldots,\bar{a_k}$ and
$\biguplus_{1\leq i\leq k} \frN_i,
\bar{b_1},\ldots,\bar{b_k}$ i.e. the two disjoint union structures
extended with \fo parameters agree on all $\sigma^*$-atomic
formulas. We still need to show that it is also one for
$\biguplus_{1 \leq i \leq k} \frM_i, (\bar{A_1},
\ldots, \bar{A_k})^\cup,\bar{a_1},\ldots,\bar{a_k}$ and
$\biguplus_{1\leq i\leq k} \frN_i,(\bar{B_1},
\ldots, \bar{B_k})^\cup, \bar{b_1},\ldots,\bar{b_k}$ i.e. the two
disjoint union structures extended with \fo parameters and the
closure under union of set parameters agree on all
$\sigma^*$-atomic formulas. It is enough to point that for every
parameter $a_{i_j}$, for every $I \subseteq
\{i_1,\ldots,i_{m'},\ldots,k_1,k_{m'}\}$ by construction of
$\bigcup_{i \in I} A_i$ in $(\bar{A_1}, \ldots,
\bar{A_k})^\cup$, the following are equivalent:
\begin{iteMize}{$\bullet$}
\item $\biguplus_{1\leq i\leq k}
\frM_i,(\bar{A_1}, \ldots, \bar{A_k})^\cup,
\bar{a_1},\ldots,\bar{a_k} \models \bigcup_{i \in I} A_ia_{i_j}$,
\item   $\biguplus_{1\leq i\leq k} \frM_i,(\bar{A_1},
\ldots, \bar{A_k})^\cup, A_{i_l}, \bar{a_1},\ldots,\bar{a_k} \models
A_{i_l}a_{i_j}$ for some $i_l$ in $I$.
\end{iteMize}
Similarly for every parameter $b_{i_j}$, by construction of
$\bigcup_{i \in I} B_i$ in $(\bar{B_1}, \ldots,
\bar{B_k})^\cup$, the following are equivalent:
\begin{iteMize}{$\bullet$}
\item $\biguplus_{1\leq i\leq k}
\frN_i,(\bar{B_1}, \ldots, \bar{B_k})^\cup,
\bar{b_1},\ldots,\bar{b_k} \models \bigcup_{i \in I} B_ib_{i_j}$,
\item   $\biguplus_{1\leq i\leq k} \frN_i,(\bar{B_1},
\ldots, \bar{B_k})^\cup, B_{i_l}, \bar{b_1},\ldots,\bar{b_k} \models
B_{i_l}b_{i_j}$ for some $i_l$ in $I$.
\end{iteMize}
But by Duplicator's winning strategy in the small structure games, we
know that the following are equivalent:
\begin{iteMize}{$\bullet$}
\item   $\biguplus_{1\leq i\leq k} \frM_i,(\bar{A_1},
\ldots, \bar{A_k})^\cup, A_{i_l}, \bar{a_1},\ldots,\bar{a_k} \models
A_{i_l}a_{i_j}$ for some $i_l$ in $I$.
\item   $\biguplus_{1\leq i\leq k} \frN_i,(\bar{B_1},
\ldots, \bar{B_k})^\cup, B_{i_l}, \bar{b_1},\ldots,\bar{b_k} \models
B_{i_l}b_{i_j}$ for some $i_l$ in $I$.
\end{iteMize}
So the following are also equivalent:
\begin{iteMize}{$\bullet$}
\item $\biguplus_{1\leq i\leq k}
\frM_i,(\bar{A_1}, \ldots, \bar{A_k})^\cup,
\bar{a_1},\ldots,\bar{a_k} \models \bigcup_{i \in I} A_ia_{i_j}$,
\item $\biguplus_{1\leq i\leq k}
\frN_i,(\bar{B_1}, \ldots, \bar{B_k})^\cup,
\bar{b_1},\ldots,\bar{b_k} \models \bigcup_{i \in I} B_ib_{i_j}$,
\end{iteMize}
So the two extended disjoint union structures agree on all
$\sigma^*$-atomic formulas. Now relying on the semantics of
Boolean connectives, it can be shown by induction on the
complexity of quantifier free sentences that they also agree
on all Boolean combinations of atomic $\sigma^*$-sentences.

Inductive step: the inductive hypothesis says that whenever Duplicator has a
winning strategy in
$EF_\mso^n((\frM_i,\bar{A_i},\bar{a_i}),(\frN_i,\bar{B_i},\bar{b_i}))$
for all $1 \leq i \leq k$, he also has one in
$$EF_\mso^n((\bigoplus_{1\leq i\leq k}^f
\frM_i,(\bar{A_1}, \ldots,
\bar{A_k})^\cup,\bar{a_1},\ldots,\bar{a_k}), (\bigoplus_{ 1 \leq
i \leq k}^f \frN_i,(\bar{B_1},\ldots,\bar{B_k})^\cup,
\bar{b_1},\ldots,\bar{b_k})).$$ We want to show that this also
holds when the length of the games is $n+1$. Suppose Duplicator
has a winning strategy in the game
$EF_\mso^{n+1}((\frM_i,\bar{A_i},\bar{a_i}),(\frN_i,\bar{B_i},\bar{b_i}))$
for all $1 \leq i\leq k$. We describe Duplicator's answer to
Spoiler's first move in the game $EF_\mso^{n+1}((\bigoplus_{1\leq i\leq
k}^f \frM_i, \bar{A_1}, \ldots,
\bar{A_k},\bar{a_1},\ldots,\bar{a_k}), (\bigoplus_{ 1 \leq
i \leq k}^f \frN_i,\bar{B_1}, \ldots, \bar{B_k},
\bar{b_1},\ldots,\bar{b_k}))$.\\ It will then follow by induction
hypothesis, that he has a winning strategy in the remaining
$n$-length game.
\begin{iteMize}{$\bullet$}
\item   Spoiler's first move is a point move. Suppose Spoiler
picks $a$ in $\bigoplus_{1\leq i\leq k}^f \frM_i$. Then $a$ belongs to
$dom(\frM_i)$ for some $1\leq i\leq k$. So Duplicator uses his
winning strategy in
$EF_\mso^{n+1}((\frM_i,\bar{A_i},\bar{a_i}),(\frN_i,\bar{B_i},\bar{b_i}))$
to pick $b\in dom(\frN_i)$,
 so that he still has a winning strategy in $EF_\mso^n((\frM_i,\bar{A_i},\bar{a_i},a),(\frN_i,\bar{B_i},\bar{b_i},b))$.
 By induction hypothesis he also has one
 in the remaining $n$-length \mso game between the following two structures:
 
 $$(\bigoplus_{1\leq i\leq k}^f
\frM_i,(\bar{A_1},\ldots,\bar{A_k})^\cup,
\bar{a_1},\ldots,\bar{a_k},a)$$
and
$$(\bigoplus_{ 1 \leq i \leq k}^f \frN_i,
(\bar{B_1},\ldots,\bar{B_k})^\cup,
\bar{b_1},\ldots,\bar{b_k},b)$$
\item   Spoiler's first move is a set move. Suppose Spoiler chooses a set
$A$ in the set of admissible subsets of $\bigoplus_{1\leq
i\leq k}^f \frM_i$. Then $A$ is necessarily of the form $A_1 \cup
\ldots \cup A_k$, with $A_i$ an admissible subset of $\frM_i$. We
now define locally his response $B=B_1 \cup \ldots \cup B_k$,
using his winning strategies in the small structures, so that he still
has a winning strategy in
$EF_\mso^n((\frM_i,\bar{A_i},A_i,\bar{a_i}),(\frN_i,\bar{B_i},B_i,\bar{b_i}))$
for all $1 \leq i\leq k$. By induction hypothesis, he also has one
in the remaining $n$-length \mso game between the following two structures:

$$(\bigoplus_{1\leq i\leq k}^f
\frM_i,(\bar{A_1},A_1,\ldots,\bar{A_k},A_k)^\cup,
\bar{a_1},\ldots,\bar{a_k})$$
and
$$(\bigoplus_{ 1 \leq i \leq k}^f \frN_i,
(\bar{B_1},B_1,\ldots,\bar{B_k}, B_k)^\cup,
\bar{b_1},\ldots,\bar{b_k}).$$ 
(Note that this is enough,
because $A \in (\bar{A_1},A_1,\ldots,\bar{A_k},A_k)^\cup$.)\qed
\end{iteMize}\smallskip

\noindent Now an analogue of this result for disjoint unions can easily be derived as a corollary of Theorem \ref{msofusion}. For the convenience of the reader, we provide here the detailed argument:

\begin{cor}
Whenever $(\frM_i,\bar{A_i},
\bar{a_i})\equiv_\mso^n(\frN_i,\bar{B_i},\bar{b_i})$ for all $1
\leq i \leq k$ (with $\bar{a_i}$ a sequence of first-order
parameters of the form $a_{i_1},\ldots,a_{i_m}$ with
$m\in\mathbb{N}$ and $\bar{A_i}$ a sequence of set parameters of
the form $A_{i_1},\ldots,A_{i_{m'}}$ with $m'\in\mathbb{N}$,
similarly for the $\bar{b_i}$ and $\bar{B_i}$), then also
$\biguplus_{ 1 \leq i \leq k} \frM_i,(\bar{A_1},
\ldots, \bar{A_k})^\cup, \bar{a_1},\ldots,\bar{a_k}\equiv_\mso^n
\biguplus_{ 1 \leq i \leq k} \frN_i, (\bar{B_1},
\ldots, \bar{B_k})^\cup, \bar{b_1},\ldots,\bar{b_k}$.\label{msounion}
\end{cor}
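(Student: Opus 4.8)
The plan is to deduce Corollary~\ref{msounion} from the Fusion Theorem for \mso (Theorem~\ref{msofusion}), by invoking the observation recorded just after Definition~\ref{fusion} that disjoint union is a special case of $f$-fusion. Concretely, I would write $\frM_i^+$ for the expansion of $\frM_i$ in which $Q_i$ is interpreted as $dom(\frM_i)$ and each $Q_j$ with $j\neq i$ as $\emptyset$, and similarly $\frN_i^+$; then, with $f$ the identity function, $\biguplus_{1\leq i\leq k}\frM_i=\bigoplus_{1\leq i\leq k}^f\frM_i^+$ and $\biguplus_{1\leq i\leq k}\frN_i=\bigoplus_{1\leq i\leq k}^f\frN_i^+$. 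I would also note that $\mathbb{A}_{\frM_i^+}=\mathbb{A}_{\frM_i}$: adding monadic predicates that are interpreted as $\emptyset$ and as the full domain does not affect parametric definability, since inside any formula $Q_ix$ may be replaced by $x=x$ and $Q_jx$ by $\neg(x=x)$. In particular $\frM_i^+$ is again an \mso-Henkin structure, and the set parameters $(\bar{A_1},\ldots,\bar{A_k})^\cup$ are literally the same whether computed inside $\biguplus_i\frM_i$ or inside $\bigoplus_i^f\frM_i^+$ (and likewise on the $\frN$-side).

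The only step that is not bookkeeping is checking that $(\frM_i,\bar{A_i},\bar{a_i})\equiv_\mso^n(\frN_i,\bar{B_i},\bar{b_i})$ implies $(\frM_i^+,\bar{A_i},\bar{a_i})\equiv_\mso^n(\frN_i^+,\bar{B_i},\bar{b_i})$. By the adequacy of the \mso game (Theorem~\ref{thm:efmso}) it suffices to argue that a winning strategy for Duplicator in $EF_\mso^n((\frM_i,\bar{A_i},\bar{a_i}),(\frN_i,\bar{B_i},\bar{b_i}))$ stays winning after the expansion. This is immediate: the two games are played over identical domains and identical sets of admissible subsets, so the same strategy is available, and the only atomic formulas over the expanded vocabulary not already present before are of the form $Q_jx$, whose truth value is constant on each structure and agrees across $\frM_i^+$ and $\frN_i^+$ (everywhere true if $j=i$, everywhere false if $j\neq i$); hence a relation that is a finite partial isomorphism for the reducts remains one for the expansions.

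It then remains to apply Theorem~\ref{msofusion} to the family $\frM_i^+$, $\frN_i^+$ and the identity $f$, which yields
$$\bigoplus_{1\leq i\leq k}^f\frM_i^+,(\bar{A_1},\ldots,\bar{A_k})^\cup,\bar{a_1},\ldots,\bar{a_k}\equiv_\mso^n\bigoplus_{1\leq i\leq k}^f\frN_i^+,(\bar{B_1},\ldots,\bar{B_k})^\cup,\bar{b_1},\ldots,\bar{b_k},$$
and to rewrite the two sides, via the identifications above, as $\biguplus_i\frM_i$ and $\biguplus_i\frN_i$ --- which is precisely the statement of the corollary. I do not anticipate a real obstacle; the one point requiring mild care is the expansion step above, together with making sure the admissible sets and set parameters of the disjoint unions coincide on the nose with those of the fusions of the $Q$-expansions, so that Theorem~\ref{msofusion} can be quoted without any adjustment.
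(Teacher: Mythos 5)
Your proposal is correct and follows essentially the same route as the paper's own proof: the paper likewise expands each $\frM_i$, $\frN_i$ with the predicates $Q_j$ (full domain for $j=i$, empty otherwise), observes that $n$-\mso-equivalence is preserved by this expansion, applies Theorem~\ref{msofusion} with $f$ the identity mapping, and identifies the resulting fusions with the disjoint unions. Your extra care about the admissible sets and the game-theoretic justification of the expansion step only makes explicit what the paper dismisses with ``Clearly''.
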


\begin{proof}
Let $(\frM_i,\bar{A_i},\bar{a_i})\equiv_\mso^n(\frN_i,\bar{B_i},\bar{b_i})$ for all $1
\leq i \leq k$ (with $\bar{a_i}$ a sequence of first-order
parameters of the form $a_{i_1},\ldots,a_{i_m}$ with
$m\in\mathbb{N}$ and $\bar{A_i}$ a sequence of set parameters of
the form $A_{i_1},\ldots,A_{i_{m'}}$ with $m'\in\mathbb{N}$,
similarly for the $\bar{b_i}$ and $\bar{B_i}$).

Now consider the following expansions $\frM'_i$ and $\frN'_i$ of the $\sigma$ structures $\frM_i$ and $\frN_i$ to $\sigma^*=\sigma \cup \{Q_1,\ldots,Q_k\}$:
the interpretation of $Q_j$ is empty in $\frM'_i$ (respectively $\frN'_i$) whenever $i \neq j$ and it is the domain of $\frM'_i$ (respectively $\frN'_i$) whenever $i=j$.

Clearly $(\frM'_i,\bar{A_i},\bar{a_i})\equiv_\mso^n(\frN'_i,\bar{B_i},\bar{b_i})$ for all $1
\leq i \leq k$.

Now consider a mapping $f$ such that for every $n$-ary predicate $P \in \sigma^*$, $f(P)=Px_1 \ldots x_n$. By Theorem \ref{msofusion} we have that $$\bigoplus_{1\leq i\leq k}^f \frM'_i,(\bar{A_1},
\ldots, \bar{A_k})^\cup, \bar{a_1},\ldots,\bar{a_k}\equiv_\mso^n
\bigoplus_{ 1 \leq i \leq k}^f \frN'_i, (\bar{B_1},
\ldots, \bar{B_k})^\cup, \bar{b_1},\ldots,\bar{b_k}.$$

Corollary \ref{msounion} follows, because $$\bigoplus_{1\leq i\leq k}^f \frM'_i,(\bar{A_1},
\ldots, \bar{A_k})^\cup, \bar{a_1},\ldots,\bar{a_k}\text{ and }\bigoplus_{ 1 \leq i \leq k}^f \frN'_i, (\bar{B_1},
\ldots, \bar{B_k})^\cup, \bar{b_1},\ldots,\bar{b_k}$$ are isomorphic (w.r.t.~$\sigma$) to $$\biguplus_{ 1 \leq i \leq k} \frM_i,(\bar{A_1},
\ldots, \bar{A_k})^\cup, \bar{a_1},\ldots,\bar{a_k}\text{ and }\biguplus_{ 1 \leq i \leq k} \frN_i, (\bar{B_1},
\ldots, \bar{B_k})^\cup, \bar{b_1},\ldots,\bar{b_k}$$ respectively.
\end{proof}

Another important corollary of Theorem \ref{msofusion} is the fact that fusions of \mso-Henkin structures are also \mso-Henkin structures.
Let us stress the importance of this fact, which is needed for the correctness of our main completeness argument.

\begin{cor}
$\mathbb{A}_{\bigoplus_{1\leq i\leq k}^f\frM_i}$ is closed
under \mso parametric definability and so $\bigoplus_{1\leq i\leq k}^f\frM_i$ is a \mso-Henkin
structure.\label{msohenkinfusion}
\end{cor}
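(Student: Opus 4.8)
The plan is to check directly that the set of admissible subsets of the fusion is closed under parametric \mso-definability. By Definition~\ref{union}, a subset of $dom(\bigoplus_{1\leq i\leq k}^{f}\frM_i)$ is admissible precisely when it is a union of admissible subsets of the components, so it suffices to show: for every \mso-formula $\varphi(x)$ of quantifier depth $n$, every assignment $g$ into the fusion, and every $1\leq i\leq k$, the slice $B\cap dom(\frM_i)$ of $B:=\{a\mid \bigoplus_{1\leq i\leq k}^{f}\frM_i,g[a/x]\models\varphi\}$ lies in $\mathbb{A}_{\frM_i}$; then $B=\bigcup_{i}(B\cap dom(\frM_i))$ is admissible in the fusion. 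First I would decompose the parameters carried by $g$. Each first-order parameter $g(y)$ (for $y$ free in $\varphi$, $y\neq x$) lies in exactly one component. Each set parameter $g(Y)$ is, by Definition~\ref{union}, a union $\bigcup_{j}C^{(j)}$ with $C^{(j)}\in\mathbb{A}_{\frM_j}$, and since the domains are disjoint this forces $C^{(j)}=g(Y)\cap dom(\frM_j)$; in particular every slice of $g(Y)$ is already admissible in its component. For each $j$, let $\bar A_j$ collect the slices $g(Y)\cap dom(\frM_j)$ and $\bar a_j$ the first-order parameters lying in $\frM_j$.

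Fix $i$. The key step is the claim that if $a,a'\in dom(\frM_i)$ satisfy $(\frM_i,\bar A_i,\bar a_i,a)\equiv_\mso^{n}(\frM_i,\bar A_i,\bar a_i,a')$, then $a\in B$ iff $a'\in B$. To prove it I would apply the Fusion Theorem for \mso (Theorem~\ref{msofusion}) to the two families of component structures that coincide on every $j\neq i$ and carry the additional first-order parameter $a$, resp.\ $a'$, on the $i$-th component (the theorem permits first-order parameters distributed across several components, which is exactly the situation here). It yields that the two fusions, each equipped with the set parameters $(\bar A_1,\ldots,\bar A_k)^{\cup}$, the first-order parameters $\bar a_1,\ldots,\bar a_k$, and in addition $a$, resp.\ $a'$, are $n$-\mso-equivalent. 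Since $(\bar A_1,\ldots,\bar A_k)^{\cup}$ contains each $g(Y)=\bigcup_{j}(g(Y)\cap dom(\frM_j))$, both structures carry all the parameters that $g$ assigns to the free variables of $\varphi$ (and possibly more, which does no harm); as $\varphi$ has quantifier depth $n$, the two structures agree on $\varphi$, which establishes the claim.

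Hence $B\cap dom(\frM_i)$ is a union of $\equiv_\mso^{n}$-classes of $(\frM_i,\bar A_i,\bar a_i)$ in the single free variable $x$. By the Finiteness Lemma (Lemma~\ref{finitenesslemma}), over the signature expanded with the $\bar A_i$ as monadic predicates and with the finitely many free variables occurring in $\bar a_i$ and $x$, there are, up to logical equivalence, only finitely many \mso-formulas of quantifier depth $\leq n$; so there are only finitely many such classes, and each is defined over $(\frM_i,\bar A_i,\bar a_i)$ by a single \mso-formula $\psi_\tau(x)$ of quantifier depth $\leq n$, namely the conjunction of all depth-$\leq n$ formulas true of a representative. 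Thus $B\cap dom(\frM_i)=\bigcup_{\tau\in S}\{a\in dom(\frM_i)\mid \frM_i,g_i[a/x]\models\psi_\tau\}$ for a finite set $S$, where $g_i$ interprets $\bar A_i,\bar a_i$. Each set in this finite union is parametrically \mso-definable in the \mso-Henkin structure $\frM_i$, hence lies in $\mathbb{A}_{\frM_i}$; and $\mathbb{A}_{\frM_i}$ is closed under finite unions because $A\cup A'=\{x\mid Ax\vee A'x\}$ is again parametrically \mso-definable. Therefore $B\cap dom(\frM_i)\in\mathbb{A}_{\frM_i}$ for every $i$, so $B\in\mathbb{A}_{\bigoplus_{1\leq i\leq k}^{f}\frM_i}$, and the fusion is closed under \mso parametric definability, i.e.\ it is an \mso-Henkin structure.

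The main obstacle I expect is the bookkeeping in the reduction to Theorem~\ref{msofusion}: one must split the parameters of $g$ correctly over the components — noting that slices of admissible subsets of the fusion are automatically admissible in the components, and that distributing first-order parameters over several components is precisely what the Fusion Theorem allows — and verify that the union-closure $(\bar A_1,\ldots,\bar A_k)^{\cup}$ faithfully recovers each set parameter $g(Y)$ while only adding harmless extra ones. Everything after that is a routine application of the Finiteness Lemma together with the closure of each $\mathbb{A}_{\frM_i}$ under \mso-definability.
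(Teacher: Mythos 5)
Your proof is correct and follows essentially the same route as the paper's: both rest on the characterization of parametric definability as invariance under $n$-\mso-equivalence for some $n$ (via the Finiteness Lemma) together with an application of the Fusion Theorem to transfer equivalence of points from a component to the whole fusion. The only difference is presentational — you argue directly that each slice $B\cap dom(\frM_i)$ is a finite union of definable $\equiv_\mso^n$-classes, whereas the paper runs the contrapositive (a non-definable slice yields two indistinguishable points that the Fusion Theorem shows remain indistinguishable in the fusion); your version is somewhat more explicit about the decomposition of the parameters of $g$ over the components, which the paper glosses over.
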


\begin{proof}
First note that the following are equivalent:
\begin{iteMize}{$\bullet$}
\item   $B$ is \mso parametrically definable in $\frM$,
\item   for some $n$, there is a finite sequence of parameters $\bar{a},\bar{A}$ such that
$B$ is defined by a \mso formula $\varphi$ of quantifier depth $n$
using $\bar{a},\bar{A}$,
\item  for some $n$, for every two points $a,a'\in
dom(\frM)$, if they are \mso $n$-indistinguishable using
$\bar{a},\bar{A}$, then $a \in B$ iff $a' \in B$.
\end{iteMize}
Now suppose for the sake of contradiction that there is $B \subseteq dom(\bigoplus_{1\leq
i\leq k}^f\frM_i)$ \mso parametrically definable in
$\bigoplus_{1\leq i\leq k}^f\frM_i$ using $\bar{a'},\bar{A'}$, but $B
\notin \mathbb{A}_{\bigoplus_{1\leq i\leq k}^f\frM_i}$. So
it means that for some $1\leq i\leq k $, $A_i=B \cap dom(\frM_i)$
is not \mso parametrically definable in $\frM_i$ i.e. there are
two \mso parametrically indistinguishable points $a \in B$, $a'
\notin B$. So for all $n$, for all sequence of parameters
$\bar{a},\bar{A}$ in $\frM_i$, $$(\frM_i,\bar{a},\bar{A},
a)\equiv_\mso^n(\frM_i,\bar{a},\bar{A}, a')$$ and by the fusion theorem,\footnote{There is no need to consider the case where
$\bar{a'},\bar{A'}$ is empty, because if a set is parametrically
definable using no parameter, it is also definable using
parameters.} $$\bigoplus_{1\leq i\leq k}^f\frM_i,
\bar{a},\bar{A},\bar{a'},\bar{A'},a\equiv^n_\mso \bigoplus_{1\leq i\leq k}^f\frM_i,
\bar{a},\bar{A},\bar{a'},\bar{A'},a'$$ But this entails that $B$
is not \mso parametrically definable in
$\bigoplus_{1\leq i\leq k}^f\frM_i$ using $\bar{a'},\bar{A'}$, which is a contradiction.
\end{proof}

\begin{cor}
$\mathbb{A}_{\biguplus_{ 1 \leq i \leq k} \frM_i}$ is closed
under \mso parametric definability and so $\biguplus_{ 1 \leq i \leq k} \frM_i$ is a \mso-Henkin
structure.\label{msohenkinunion}
\end{cor}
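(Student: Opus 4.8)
The plan is to obtain this essentially for free from Corollary~\ref{msohenkinfusion}, exploiting the observation recorded just after Definition~\ref{fusion} that a disjoint union is a particular $f$-fusion: $\biguplus_{1\leq i\leq k}\frM_i = \bigoplus_{1\leq i\leq k}^f \frM_i^+$, where $f$ is the identity function and $\frM_i^+$ is the expansion of $\frM_i$ interpreting $Q_i$ as $dom(\frM_i)$ and every other $Q_j$ as $\emptyset$. The first thing I would check is that each $\frM_i^+$ is itself an \mso-Henkin structure: since $dom(\frM_i)$ and $\emptyset$ are trivially parametrically definable, the new predicates contribute nothing to \mso parametric definability, so a set is \mso-definable with parameters in $\frM_i^+$ iff it is so in $\frM_i$, and hence $\mathbb{A}_{\frM_i^+}=\mathbb{A}_{\frM_i}$ is closed under it. Then Corollary~\ref{msohenkinfusion}, applied to the $\frM_i^+$, yields directly that $\bigoplus_{1\leq i\leq k}^f \frM_i^+$, i.e.\ $\biguplus_{1\leq i\leq k}\frM_i$, is an \mso-Henkin structure.

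Alternatively, and avoiding the minor nuisance of the $Q_j$ playing a double role, I would just replay the proof of Corollary~\ref{msohenkinfusion} verbatim, substituting Corollary~\ref{msounion} for Theorem~\ref{msofusion}. Concretely: assume for contradiction that some $B\subseteq dom(\biguplus_i\frM_i)$ is \mso-definable with parameters $\bar{a'},\bar{A'}$ but $B\notin\mathbb{A}_{\biguplus_i\frM_i}$; by Definition~\ref{union} this forces, for some $i$, the trace $A_i=B\cap dom(\frM_i)$ to be non-\mso-definable in $\frM_i$, hence (by the definability/indistinguishability characterization at the start of the proof of Corollary~\ref{msohenkinfusion}) there are $a\in A_i$ and $a'\in dom(\frM_i)\setminus A_i$ with $(\frM_i,\bar a,\bar A,a)\equiv_\mso^n(\frM_i,\bar a,\bar A,a')$ for all $n$ and all parameter tuples $\bar a,\bar A$ in $\frM_i$. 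Feeding this into Corollary~\ref{msounion} (take $\frN_j=\frM_j$ for all $j$, adjoin $a$ resp.\ $a'$ to the $i$-th parameter tuple) gives $\equiv_\mso^n$-equivalence of the two disjoint unions expanded with $\bar{a'},\bar{A'}$ and with $a$ resp.\ $a'$, for every $n$, so $a$ and $a'$ remain \mso-parametrically indistinguishable in the disjoint union even relative to $\bar{a'},\bar{A'}$ — contradicting that $B$ separates them.

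I do not expect a genuine obstacle here: both routes are bookkeeping on top of results already in hand. The only step that needs a line of care is verifying that the $\frM_i^+$ are \mso-Henkin structures (equivalently, that ``$B\notin\mathbb{A}_{\biguplus_i\frM_i}$'' really is equivalent to some trace $A_i=B\cap dom(\frM_i)$ failing to be admissible in $\frM_i$), which is immediate from Definition~\ref{union}; everything else is inherited from Corollary~\ref{msohenkinfusion} or Corollary~\ref{msounion}.
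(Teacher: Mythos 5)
Your proposal is correct and matches the paper's proof, which is literally ``analogous to the proof of Corollary~\ref{msohenkinfusion}, as $\mathbb{A}_{\bigoplus_{1\leq i\leq k}^f\frM_i}=\mathbb{A}_{\biguplus_{1\leq i\leq k}\frM_i}$'' --- your second route spells out exactly that replay (substituting Corollary~\ref{msounion} for Theorem~\ref{msofusion}). Your first route, reducing to Corollary~\ref{msohenkinfusion} via the expansions $\frM_i^+$ and the identity fusion, is only a cosmetic repackaging of the same observation, which the paper itself records after Definition~\ref{fusion}.
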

\begin{proof}
Analogous to the proof of Corollary \ref{msohenkinfusion} (as
$\mathbb{A}_{\bigoplus_{1\leq i\leq k}^f\frM_i}=\mathbb{A}_{\biguplus_{ 1 \leq i \leq k} \frM_i}$).
\end{proof}

Let us now consider the \fotc case. As $TC$ moves can only be played when there are already two pebbles on the board, it is more convenient to show first a version of our \fotc fusion theorem in which each small structure comes with at least two parameters. This allows us to define Duplicator's answer to a $TC$ move played in a big structure, by means of his winning strategies in the corresponding small structures. We then derive as a corollary the fusion theorem for non-parametrized structures.

\begin{thm}[Fusion Theorem for \fotc]
Let $\mathfrak{M}_i$ and $\mathfrak{N}_i$ be $\fotc$-Henkin
structures, where $1\leq i\leq k$. Furthermore, for $1\leq i\leq k$, 
let $\bar{a_i}$, $\bar{b_i}$ be sequences of first-order
parameters of the form $a_{i_1},\ldots,a_{i_m}$,
$b_{i_1},\ldots,b_{i_m}$ (where
$m\in\mathbb{N}$ may depend on $i$), where each sequence $\bar{a_i}$
(or  $\bar{b_i}$) contains at least two
distinct elements, unless the structure $\mathfrak{M}_i$
(respectively, $\mathfrak{N}_i$) has
only one element. Whenever $$(\frM_i,\bar{a_i})\equiv_\fotc^n(\frN_i,\bar{b_i})\text{ for
all }1 \leq i \leq k,$$
then also $$\bigoplus_{1\leq i\leq k}^f \frM_i,
\bar{a_1},\ldots,\bar{a_k}\equiv_\fotc^n \bigoplus_{1\leq
i\leq k}^f \frN_i,\bar{b_1},\ldots,\bar{b_k}.$$ 
\end{thm}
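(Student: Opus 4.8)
The plan is to reduce everything to the Ehrenfeucht--Fra\"\i ss\'e game for \fotc. By adequacy of that game (Theorem~\ref{thm:effotc}, together with the determinacy of finite games, so that ``$\equiv^n_\fotc$'' is the same as ``Duplicator wins the $n$-round game''), it suffices to produce, for every $n$, a winning strategy for Duplicator in
$EF_\fotc^n((\bigoplus_{1\leq i\leq k}^f\frM_i,\bar{a_1},\ldots,\bar{a_k}),(\bigoplus_{1\leq i\leq k}^f\frN_i,\bar{b_1},\ldots,\bar{b_k}))$, manufactured from Duplicator's winning strategies in the component games $EF_\fotc^n((\frM_i,\bar{a_i}),(\frN_i,\bar{b_i}))$. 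As in the proof of Theorem~\ref{msofusion} this goes by induction on $n$, and throughout one maintains the invariant that the board of the big game is the disjoint union, over the components, of the boards of the small games (so every pebble lives in a definite component and is matched with a pebble of the same component), and that each small game still admits a winning Duplicator strategy of the appropriate remaining length.

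The base case $n=0$ is the same computation as for \mso (and is actually easier, since for \fotc there are no set parameters to close under union): every atomic $\sigma$-formula of the fusion is the $f$-image of a quantifier-free $\sigma^*$-formula evaluated in the disjoint union, and because each small initial assignment is a partial isomorphism and the parameters are matched component by component, the two disjoint unions agree on all atomic $\sigma^*$-formulas, hence on all their Boolean combinations. For the inductive step I would describe Duplicator's reply to Spoiler's first move. If Spoiler plays a point ($\exists$) move picking $a\in dom(\bigoplus^f\frM_i)$, then $a\in dom(\frM_i)$ for a unique $i$; Duplicator answers with the element $b\in dom(\frN_i)$ prescribed by his winning strategy in the $i$-th small game, and the induction hypothesis applied to the remaining $n$-round games (with component $i$'s board extended by $a\mapsto b$ and the others untouched) finishes the job.

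The real work, and the main obstacle, is the $TC$-move. Spoiler picks a set $A$ admissible in the fusion; by Definition~\ref{union} this means $A=\bigcup_l A_l$ with $A_l\in\mathbb{A}_{\frM_l}$, and $A$ separates two pebbles, $a_i\in A$ lying in some component $p$ and $a_j\notin A$ lying in some component $q$. Duplicator must return $B=\bigcup_l B_l$ separating the matched pebbles $b_i,b_j$, and then meet Spoiler's follow-up pair $(\beta_1\in B,\beta_2\notin B)$ with a pair $(\alpha_1\in A,\alpha_2\notin A)$ so that the remaining $n$-round game is won. The plan is to answer componentwise: in each component $l$ in which $A_l$ actually separates two of the pebbles on $l$'s board, Duplicator simulates a \fotc $TC$-move in the $l$-th small game with the set $A_l$, obtaining $B_l$ together with a partial play that is poised to absorb an incoming pair landing in component $l$; the hypothesis that each $\bar{a_l}$ contains two distinct elements (vacuous when $\frM_l$ is a singleton, in which case no $TC$-move is possible or needed) is precisely what guarantees these simulated moves can be started. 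In the remaining components Duplicator instead uses the Finiteness Lemma~\ref{finitenesslemma}: he reads off the finitely many quantifier-depth-$\le n$ $1$-types over $\bar{a_l}$ that are realized inside $A_l$, and lets $B_l$ be the union in $\frN_l$ of the matching type-classes (parametrically definable, hence admissible), which automatically puts $b_i$ inside $B$ exactly when $a_i\in A$ and $b_j$ outside $B$ exactly when $a_j\notin A$, because the type of a pebble pins it down uniquely. When Spoiler then plays $(\beta_1,\beta_2)$, each $\beta$ is routed to its component: a component that receives both elements continues its simulated $TC$-move there; a component that receives only one answers with a single matching point, chosen inside $A_l$ as a realizer of $\beta_1$'s $n$-type (which exists since $\beta_1\in B_l$) or outside $A_l$ as a realizer of $\beta_2$'s $n$-type (which exists by $\frM_l\equiv^{n+1}_\fotc\frN_l$ applied to the sentence $\exists x\,(\text{that type})$), possibly via a small-game point move. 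One then invokes the induction hypothesis for the remaining $n$-round game. The delicate point — and where I expect the genuine difficulty to lie — is the bookkeeping of how many rounds are ``charged'' to each component: a component forced to absorb two new pebbles must do so via one simulated $TC$-move rather than two point moves, since only then does its configuration drop from $\equiv^{n+1}_\fotc$ to $\equiv^n_\fotc$ (and not to $\equiv^{n-1}_\fotc$, which would defeat the induction hypothesis); so the crux is to choose $B=\bigcup_l B_l$ in such a way that Spoiler can never force two follow-up elements into a component in which no $TC$-move with $A_l$ is available, and to verify that the type-based choice of the $B_l$ (together with the component-respecting routing of pebbles) achieves exactly this.
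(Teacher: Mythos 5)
Your overall architecture (induction on $n$, componentwise routing of moves, the base case via partial isomorphisms of the disjoint unions) matches the paper's proof, and your treatment of the point move is fine. The gap is in the $TC$-move, and it sits exactly where you flagged it. For a component $l$ in which $A_l$ does not separate two pebbles of $\bar{a_l}$, you define $B_l$ by transferring the quantifier-depth-$n$ types realized in $A_l$; but nothing prevents Spoiler from then picking \emph{both} follow-up points $\beta_1\in B_l$, $\beta_2\notin B_l$ inside that very component (this happens whenever $B_l$ is a proper nonempty subset of $dom(\frN_l)$, which the type-based definition does not rule out). Duplicator must then extend the $l$-th board by \emph{two} new pebbles while retaining a winning strategy for $n$ more rounds there; the only way to pay for two new pebbles with a single round of the small game is to have set up a genuine $TC$-move in $EF_\fotc^{n+1}((\frM_l,\bar{a_l}),(\frN_l,\bar{b_l}))$, which your type-based $B_l$ does not do. Choosing independent realizers of the two $n$-types yields pebbles that are individually $n$-equivalent to $\beta_1$ and $\beta_2$ but says nothing about the pair $(\alpha_1,\alpha_2)$ versus $(\beta_1,\beta_2)$ (their order relation, for instance), so even the partial-isomorphism condition, let alone $n$-equivalence of the extended configurations, can fail.

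The paper closes this hole differently, and this is where the hypothesis that every non-singleton component carries two distinct pebbles is really used: in \emph{every} such component Duplicator perturbs $A_l$ by adding or removing one designated pebble $a_j$ so that the perturbed set does separate two pebbles, plays the small-game $TC$-move on the perturbed set to obtain $B_l'$, and returns $B_l=B_l'\cup\{b_j\}$ or $B_l=B_l'\setminus\{b_j\}$. Every component is then poised to absorb an incoming pair through its own $TC$-move; a separate, slightly delicate argument (which your ``simulated'' components would also need) covers the cases where only one follow-up point lands in a component or where a follow-up point coincides with the adjusted pebble $b_j$, by observing that a winning $TC$-move strategy supplies correct answers for \emph{all} pairs Spoiler could have picked. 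You would need to replace the type-based construction by something of this kind for the induction to go through.
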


\proof
We define a winning strategy for Duplicator in the game
$$EF_\fotc^n((\bigoplus_{ 1 \leq i \leq k}^f \frM_i,
\bar{a_1},\ldots,\bar{a_k}), (\bigoplus_{1\leq i \leq k}^f
\frN_i, \bar{b_1},\ldots,\bar{b_k}))$$ out of her winning
strategies in the games
$EF_\fotc^n((\frM_i,\bar{a_i}),(\frN_i,\bar{b_i}))$ by induction
on $n$.

Base step: $n=0$, doing nothing is a strategy for Duplicator. We
need to show that the $\bigoplus_{ 1 \leq i \leq k}^f
\frM_i, \bar{a_1},\ldots,\bar{a_k}$ and $\bigoplus_{ 1
\leq i \leq k}^f \frN_i, \bar{b_1},\ldots,\bar{b_k}$ agree on all
atomic formulas. Now in the fusion structures, each atomic formula is
defined by $f$ in terms of a $\sigma^*$-quantifier free formula
that is evaluated in the corresponding disjoint union structure. So it
is enough to show that the disjoint union structures agree on all
atomic $\sigma^*$-formulas and on their Boolean combinations. The
initial match between the distinguished objects in $(\frM_i,
\bar{a_i})$ and $(\frN_i, \bar{b_i})$ is a partial isomorphism for
every $1\leq i \leq k$, so it is also one for $\biguplus_{1
\leq i \leq k} \frM_i, \bar{a_1},\ldots,\bar{a_k}$ and
$\biguplus_{1\leq i\leq k} \frN_i,
\bar{b_1},\ldots,\bar{b_k}$ i.e. the two disjoint union structures
agree on all $\sigma^*$-atomic formulas. Now relying on the
semantics of Boolean connectives, it can be shown by induction on
the complexity of quantifier free sentences that they also agree
on all Boolean combinations of atomic $\sigma^*$-sentences.

Inductive step: the inductive hypothesis says that whenever Duplicator
has a winning strategy in the game $EF_\fotc^n((\frM_i,\bar{a_i}),(\frN_i,\bar{b_i}))$
for some $(\frM_i,\bar{a_i}),(\frN_i,\bar{b_i})$ satisfying the required
conditions on parameters and $1 \leq i\leq k$, he also has one in
the game $EF_\fotc^n((\bigoplus_{ 1 \leq i \leq k}^f \frM_i,
\bar{a_1},\ldots,\bar{a_k}), (\bigoplus_{ 1 \leq i \leq
k}^f \frN_i,
\bar{b_1},\ldots,\bar{b_k}))$.

We want to show that this also holds whenever the length of the game
is $n+1$. Suppose Duplicator has a winning strategy in the game
$EF_\fotc^{n+1}((\frM_i,\bar{a_i}),(\frN_i,\bar{b_i}))$ for all $1
\leq i\leq k$. We describe Duplicator's answer to Spoiler's first
move in the game $EF_\fotc^{n+1}((\bigoplus_{ 1 \leq i \leq k}^f \frM_i,
\bar{a_1},\ldots,\bar{a_k}), (\bigoplus_{ 1 \leq i \leq
k}^f \frN_i, \bar{b_1},\ldots,\bar{b_k}))$. It will then follow by
induction hypothesis, that he has a winning strategy in the
remaining $n$-length game.
\begin{iteMize}{$\bullet$}
\item   Spoiler's first move is an $\exists$ move. Let Spoiler choose a point $a \in
dom(\bigoplus_{1 \leq i \leq k}^f \frM_i)$, then $a \in
dom(\frM_i)$ for some $1\leq i\leq k$. So Duplicator can use his
winning strategy in
$EF_\fotc^n((\frM_i,\bar{a_i}),(\frN_i,\bar{b_i}))$ and pick a
corresponding point $b$ in the other structure. Now he still has a
winning strategy in
$EF_\fotc^n((\frM_i,\bar{a_i},a),(\frN_i,\bar{b_i},b))$. So by
induction hypothesis he also has one in the remaining $n$ length
game $$EF_\fotc^n((\bigoplus_{1\leq i\leq k}^f \frM_i,
\bar{a_1},\ldots,\bar{a_k},a), (\bigoplus_{ 1 \leq i \leq
k}^f \frN_i, \bar{b_1},\ldots,\bar{b_k},b)).$$
\item   Spoiler's first move is a $TC$ move. Suppose Spoiler chooses a set
$A$ in the set of admissible subsets of $\bigoplus_{1 \leq
i \leq k}^f \frM_i$. Then $A$ is necessarily of the form $A_1 \cup
\ldots \cup A_k$, with $A_i$ an admissible subset (possibly empty)
of $\frM_i$. Her response $B=B_1 \cup \ldots \cup B_k$ can now be
defined locally for each $B_i$ using her winning strategies in the
small structures. So let Spoiler choose $A=A_1 \cup \ldots \cup A_k$.
Keeping in mind that each non single point small structure comes with
at least two distinct parameters, there are four cases:
\begin{enumerate}[a)]
\item     in $dom(\frM_i)$, there is a distinguished object inside, but also
outside $A_i$, so Duplicator considers $A_i$ together with these
two parameters and constructs $B_i$ by using his winning strategy
in $EF_\fotc^{n+1}((\frM_i,\bar{a_i}),(\frN_i,\bar{b_i}))$.
\item   in $dom(\frM_i)$, only distinguished objects exist
inside $A_i$\footnote{Note that as a special case we may have
$A_i=dom(\frM_i)$.}, so Duplicator considers any one of these
distinguished objects, say $a_j$, and looks at $A_i\backslash
\{a_j\}$ together with some parameter inside $A_i$.  Then he can
use his winning strategy in
$EF_\fotc^{n+1}((\frM_i,\bar{a_i}),(\frN_i,\bar{b_i}))$ to construct
an answer that we call $B_i'$. Now $B_i=B_i'\cup \{b_j\}$;
\item in $dom(\frM_i)$, only distinguished objects exist outside
  $A_i$,\footnote{Note that as a special case we may have
  $A_i=\emptyset$.} so Duplicator similarly considers some
  distinguished object $a_j$ and looks at $A_i \cup\{a_j\}$ together
  with some other parameter outside $A_i$, so that he can construct an
  answer that we call $B_i'$ by using his winning strategy in
  $EF_\fotc^{n+1}((\frM_i,\bar{a_i}),(\frN_i,\bar{b_i}))$. Now
  $B_i=B_i'\backslash \{b_j\}$;
\item    $\frM_i$ is a single point structure, then
$B_i=\emptyset$ if $A_i=\emptyset$ and $B_i=dom(\frM_i)$ if
$A_i=dom(\frN_i)$.
\end{enumerate}
Once $B=B_1 \cup \ldots \cup B_k$ has been constructed, Spoiler
picks two points $b \in B$ and $b' \notin B$. There are two cases:
\begin{enumerate}[1.]
\item  $b$ and $b'$ belong to the domain of one and the same small structure $\frN_i$ ; now $dom(\frM_i)$
is as previously described in $a), b), c)$ (but not $d))$, because
two distinct points cannot belong to one and the same single point
structure) and in each case Duplicator does the following:
\begin{enumerate}[a)]
\item He uses his winning strategy in the game
  $EF_\fotc^{n+1}((\frM_i,\bar{a_i}),(\frN_i,\bar{b_i}))$ to answer
  with $a, a'$, so that he still has a winning strategy in
  $EF_\fotc^n((\frM_i,\bar{a_i},a,a'),(\frN_i,\bar{b_i},b,b'))$. By
  induction hypothesis he also has one in the remaining $n$ length
  game $$EF_\fotc^n((\bigoplus_{1\leq i\leq k}^f \frM_i,
  \bar{a_1},\ldots,\bar{a_k},a,a'), (\bigoplus_{ 1 \leq i \leq k}^f
  \frN_i, \bar{b_1},\ldots,\bar{b_k},b,b')).$$
\item   Suppose initially that $b' \neq b_j$. Now Duplicator
considers $A_i \backslash \{a_j\}$ together with $a_j$ and
some other parameter inside this set.  Then he uses his winning strategy
in $EF_\fotc^{n+1}((\frM_i,\bar{a_i}),(\frN_i,\bar{b_i}))$ to pick
corresponding $a,a'$ in $\frM_i$, so that he still has a winning
strategy in
$EF_\fotc^n((\frM_i,\bar{a_i},a,a'),(\frN_i,\bar{b_i},b,b'))$. By
induction hypothesis he also has one in the remaining $n$ length
game $$EF_\fotc^n((\bigoplus_{1\leq i\leq k}^f \frM_i,
\bar{a_1},\ldots,\bar{a_k},a,a'), (\bigoplus_{ 1 \leq i
\leq k}^f \frN_i, \bar{b_1},\ldots,\bar{b_k},b,b'));$$ 
Next, suppose
$b=b_j$. Then we choose $a=a_j$. The parameter $a_j$ already matches
$b$ i.e., Duplicator has a winning strategy in
$$EF_\fotc^{n+1}((\frM_i,\bar{a_i},a),(\frN_i,\bar{b_i},b))$$ that he may
use to pick
$a'$, thus answering as if it was a point move (i.e., $a'$ has to be
$n$-equivalent to $b'$).  Therefore Duplicator still has a winning strategy
in $EF_\fotc^n((\frM_i,\bar{a_i},a,a'),(\frN_i,\bar{b_i},b,b'))$.
By induction hypothesis he also has one in the remaining $n$
length game $$EF_\fotc^n((\bigoplus_{1\leq i\leq k}^f \frM_i,
\bar{a_1},\ldots,\bar{a_k},a,a'), (\bigoplus_{1\leq i \leq
k}^f \frN_i, \bar{b_1},\ldots,\bar{b_k},b,b')).$$ This works, except that there is the additional condition
$a'\notin A_i$ that Duplicator must
also maintain in order to respect the rules of the game. A slightly more refined argument shows, however
that there
has to be an $n$-equivalent point to $b'$ which is outside $A_i$.
Indeed, instead of $b$, Spoiler could have picked any other point
$b^*\in B_i$ together with $b' \notin B_i$ and Duplicator's
winning strategy would have provided a correct answer $a^*\in
A_i$, $a'\notin A_i$, which means that Duplicator would have found
some point $a'$ which is at least $n$-equivalent to $b'$ and lies
outside $A_i$ (because if Duplicator has a winning strategy in
$EF_\fotc^n((\frM_i,\bar{a_i},a^*,a'),(\frN_i,\bar{b_i},b^*,b'))$
then he has one in
$EF_\fotc^n((\frM_i,\bar{a_i},a'),(\frN_i,\bar{b_i},b'))$ as well,
and consequently also in
$EF_\fotc^n((\frM_i,\bar{a_i},a,a'),(\frN_i,\bar{b_i},b,b'))$).
\item  Suppose initially that $b \neq b_j$.  Then Duplicator
considers $A_i \cup \{a_j\}$ together with $a_j$ and with some
other parameter outside this set and uses his winning strategy in
$EF_\fotc^{n+1}((\frM_i,\bar{a_i}),(\frN_i,\bar{b_i}))$, so that he
still has a winning strategy in
$EF_\fotc^n((\frM_i,\bar{a_i},a,a'),(\frN_i,\bar{b_i},b,b'))$. By
induction hypothesis he also has one in the remaining $n$ length
game $$EF_\fotc^n((\bigoplus_{1\leq i\leq k}^f \frM_i,
\bar{a_1},\ldots,\bar{a_k},a,a'), (\bigoplus_{1\leq i \leq
k}^f \frN_i, \bar{b_1},\ldots,\bar{b_k},b,b'));$$ otherwise
$b'=b_j$, then $a'=a_j$ because the parameter $a_j$ already
matches $b'$ i.e., Duplicator has a winning strategy in
$$EF_\fotc^{n+1}((\frM_i,\bar{a_i},a'),(\frN_i,\bar{b_i},b')),$$ so we
can show by a similar argument as above
that he can use it to pick $a\in A_i$, so that he still has a winning strategy in
$EF_\fotc^n((\frM_i,\bar{a_i},a,a'),(\frN_i,\bar{b_i},b,b'))$. By
induction hypothesis he also has one in the remaining $n$ length
game $$EF_\fotc^n((\bigoplus_{1\leq i\leq k}^f \frM_i,
\bar{a_1},\ldots,\bar{a_k},a,a'), (\bigoplus_{ 1 \leq i
\leq k}^f \frN_i, \bar{b_1},\ldots,\bar{b_k},b,b')).$$
\end{enumerate}
\item      otherwise $b \in dom(\frN_i,\bar{b_i})$ and $b' \in dom(\frN_j,\bar{b_j})$ with $i \neq j$;
we can again use a similar argument to show that Duplicator can
use his winning strategy in
$$EF_\fotc^{n+1}((\frM_i,\bar{a_i}),(\frN_i,\bar{b_i}))\text{ and }
EF_\fotc^{n+1}((\frM_j,\bar{a_j}),(\frN_j,\bar{b_j}))$$ to pick $a$,
$a'$ in the right part of the structure (that is, inside or outside
$A_i$), so that he still has a winning strategy in the games\\
$EF_\fotc^n((\frM_i,\bar{a_i},a),(\frN_i,\bar{b_i},b))$ and
$EF_\fotc^n((\frM_j,\bar{a_j},a'),(\frN_j,\bar{b_j},b'))$ (in the
special case where for instance, $\frM_j$ is a single point structure,
Duplicator picks the only available point in the other structure).
By induction hypothesis he also has one in the remaining $n$
length game $$EF_\fotc^n((\bigoplus_{1\leq i\leq k}^f \frM_i,
\bar{a_1},\ldots,\bar{a_k},a,a'), (\bigoplus_{ 1 \leq i
\leq k}^f \frN_i, \bar{b_1},\ldots,\bar{b_k},b,b')).\eqno{\qEd}$$
\end{enumerate}
\end{iteMize}

\noindent We now show a corollary of the preceding lemma, in which the
small structures do not come with any distinguished objects:
\begin{cor}
Whenever $\frM_i\equiv_\fotc^n\frN_i$ for all $1 \leq i \leq k$,
then also $\bigoplus_{1\leq i\leq k}^f \frM_i\equiv_\fotc^n
\bigoplus_{1\leq i\leq k}^f \frN_i$.\label{fotcfusion}
\end{cor}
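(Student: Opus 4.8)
The plan is to prove the Corollary directly via the \fotc Ehrenfeucht-Fra\"{\i}ss\'e game, building a winning strategy for Duplicator in $EF_\fotc^n(\bigoplus_{1\leq i\leq k}^f\frM_i,\bigoplus_{1\leq i\leq k}^f\frN_i)$ out of her winning strategies in the component games $EF_\fotc^n(\frM_i,\frN_i)$, and re-running the induction of the preceding Fusion Theorem for \fotc essentially verbatim. The case $n=0$ is trivial: the empty assignment is always a partial isomorphism, so Duplicator wins the zero-round game; so assume $n\geq 1$.

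The reason one cannot simply quote that theorem is that it requires each component to be equipped with two distinct parameters (unless the component has a single element), whereas $\frM_i\equiv_\fotc^n\frN_i$ does not by itself yield a \emph{matched} pair of distinct parameters at depth $n$ — naming two points of a small structure can cost game rounds (for instance, two incomparable points and three incomparable points are $1$-equivalent, but cease to be once both points of the two-element structure are named). What rescues the argument is that, the component structures carrying no parameters, the board is empty at the start of the big game, so Spoiler's opening move is forced to be a point move (a $TC$-move presupposes two distinct pebbles on the board). Duplicator answers it componentwise exactly as in the $\exists$-case of the Fusion Theorem's proof, and one iterates; hence the ``two distinct parameters'' hypothesis is only ever invoked after enough point moves have placed the pebbles a $TC$-move refers to, and by then the relevant component already carries those two pebbles, so cases~1 and~2 (with their sub-cases a)--c)) of that proof go through unchanged.

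The step I expect to be the main obstacle is a $TC$-move one of whose two new pebbles lands in a component $\frM_i$ (equivalently $\frN_i$) not yet touched, so that Duplicator has no pebble there on which to hang a component $TC$-move. My plan for this case is to have Duplicator answer Spoiler's admissible set $A=\bigcup_i A_i$ by taking, in an untouched component $\frN_i$, the set of those $b\in dom(\frN_i)$ on which Duplicator's still-unused component strategy responds inside $A_i$, so that her componentwise answers to the two points Spoiler next selects are automatically forced one into and one out of $A_i$; the delicate point to verify is that this set is again an admissible subset of $\frN_i$, which should follow from $\mathbb{A}_{\frN_i}$ being closed under \fotc-parametric definability. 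I would also need to treat the single-element components separately: for $n\geq 2$, $\frM_i\equiv_\fotc^n\frN_i$ forces $\frN_i$ to have a single element as well — Spoiler could otherwise pin two distinct points of $\frN_i$ against the unique point of $\frM_i$ in two rounds and win on injectivity — while $n=1$ is settled by inspecting the one-round game.
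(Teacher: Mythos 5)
Your diagnosis of why the parametrized Fusion Theorem cannot simply be quoted is exactly right, and your observation that the opening move of the big game is forced to be a point move is indeed the germ of the paper's argument. But your plan for discharging the parameter deficit \emph{lazily}, inside the first $TC$-move that touches a pebble-poor component, has a gap. First, the claim that by the time a $TC$-move is available ``the relevant component already carries those two pebbles'' is not right: the two anchor pebbles $(a_i,b_i)$, $(a_j,b_j)$ may lie in two \emph{different} components, each then holding a single pebble, and in any case the components in which Spoiler's two \emph{new} points land may hold no pebble at all. Second, and more seriously, consider the sub-case in which Spoiler places both new points $b\in B$, $b'\notin B$ in the same component $\frN_l$ carrying fewer than two pebbles. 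Your preimage construction forces each of Duplicator's two answers individually into the correct side of $A_l$, but each answer is bought with one round of the component point-move strategy, so the component equivalence drops by \emph{two} levels while the big game has advanced by only \emph{one} round; the invariant ``every component is $\equiv^{m}_\fotc$-equivalent with its pebbles, where $m$ is the number of remaining rounds'' is broken and the induction hypothesis no longer applies. This is not mere bookkeeping: depositing two pebbles in a pebble-free component at unit cost amounts to a depth-one pair quantifier, which exceeds the per-round discriminating power of $\equiv^{m}_\fotc$ from the empty position (where the first component move is necessarily a single point move), so no choice of $B_l$ can in general guarantee equivalence at level $m-1$ rather than $m-2$. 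A smaller, repairable point: the admissibility of your set $B_l$ should be argued by writing it as a finite union of $\equiv^m_\fotc$-classes, each definable by Lemma \ref{finitenesslemma}, rather than via the strategy function itself, which is not obviously definable.

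The paper's proof avoids all of this by exploiting that the first \emph{two} moves, not just the first, are forced to be point moves. It answers them by hand and, during those same two rounds, plants a matched pair of pebbles in \emph{every} component --- for the untouched components by imagining that Spoiler had played there --- so that after round two all components satisfy the two-parameter hypothesis at equivalence level $n-2$ with exactly $n-2$ rounds remaining, and the parametrized Fusion Theorem finishes the job (dropping the fictitious pebbles at the end, which only weakens Spoiler). In other words, the parameter deficit of every component is paid up front during the only rounds that are guaranteed to be point moves; deferring the payment to the round in which a $TC$-move actually arrives is exactly what cannot be afforded.
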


\begin{proof}
We know that Spoiler's first two moves in
the \fotc-game of length $n+1$ between $\bigoplus_{1\leq i\leq k}^f \frM_i$ and
$\bigoplus_{1\leq i\leq k}^f \frN_i$ must be quantifier
moves, because the $TC$ move can only be played once there are two
pebbles on the board. Let us look at the first move. Suppose
Spoiler plays a point $a\in dom(\bigoplus_{1\leq i\leq k}^f
\frM_i)$. So $a \in dom(\frM_i)$ for some $1 \leq i \leq k$. By
Duplicator's winning strategy in $EF_\fotc^n(\frM_i, \frN_i)$, he
has an answer $b \in dom(\frN_i)$ such that $(\frM_i,a)
\equiv_\fotc^n (\frN_i,b)$. Let us rename $a$ with $a_{i_1}$ and
$b$ with $b_{i_1}$. Similarly, for every $j\neq i$ such that $1
\leq j \leq k$, fix some random point $a_{j_1}$ coming from the
domain of $\frM_j$, Spoiler could have played this point and so
Duplicator would have had an adequate answer $b_{j_1}$ such that
$(\frM_j,a_{j_1}) \equiv_\fotc^n (\frN_j,b_{j_1})$. Now for the
second round in the game, some point $a'=a_{l_2}$ or $b'=b_{l_2}$
coming from the domain of respectively $\frM_l$ or $\frN_l$ will
be played by Spoiler and Duplicator will be able to answer so that
$(\frM_l,a_{l_1},a_{l_2}) \equiv_\fotc^{n-2}
(\frN_l,b_{l_1},b_{l_2})$. Similarly, for each $\frM_j$ such that
$j \neq l$, we can find points such that $(\frM_j,a_{j_1},a_{j_2})
\equiv_\fotc^{n-2} (\frN_i,b_{j_1},b_{j_2})$. Now as for all $1
\leq i \leq k$, Duplicator has a winning strategy in
$EF_\fotc^{n-2}((\frM_i,a_{i_1},a_{i_2}),(\frN_i,b_{i_1},b_{i_2}))$,
by the previous lemma, he has one in
$$EF_\fotc^{n-2}(\bigoplus_{1\leq i\leq k}^f \frM_i, a_{1_1},
a_{1_2}, \ldots, a_{k_1}, a_{k_2}),(\bigoplus_{1\leq i\leq
k}^f \frN_i, b_{1_1}, b_{1_2}, \ldots, b_{k_1}, b_{k_2})),$$ so
he also has one in $EF_\fotc^{n-2}(\bigoplus_{1\leq i\leq k}^f
\frM_i, a, a'),(\bigoplus_{1\leq i\leq k}^f \frN_i, b,
b'))$.
\end{proof}

\begin{cor}
Whenever $\frM_i\equiv_\fotc^n\frN_i$ for all $1 \leq i \leq k$,
then it also holds that $\biguplus_{ 1 \leq i \leq k} \frM_i\equiv_\fotc^n
\biguplus_{ 1 \leq i \leq k} \frN_i$.\label{fotcunion}
\end{cor}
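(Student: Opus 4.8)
The plan is to mimic the derivation of Corollary~\ref{msounion} from the \mso Fusion Theorem: I will recognize the disjoint union as the special case of $f$-fusion pointed out in the remark following Definition~\ref{fusion}, and then simply invoke Corollary~\ref{fotcfusion}. All of the combinatorial work is already contained in the proof of Corollary~\ref{fotcfusion}; the present corollary needs no new game argument.

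In detail: assume $\frM_i \equiv_\fotc^n \frN_i$ for all $1 \le i \le k$, where $\frM_i,\frN_i$ are $\sigma$-Henkin-structures. Let $\frM_i^+$ (resp.\ $\frN_i^+$) be the expansion of $\frM_i$ (resp.\ $\frN_i$) to $\sigma^*=\sigma\cup\{Q_1,\ldots,Q_k\}$ with $Q_i$ interpreted as the whole domain and $Q_j$ ($j\neq i$) as $\emptyset$, and with the same admissible subsets as before; this is again a \fotc-Henkin-structure, since over such an expansion every $\sigma^*$-formula defines the same set as some $\sigma$-formula (the new atoms $Q_j t$ being constantly true or false). The key --- and only nontrivial --- step is that $\frM_i^+ \equiv_\fotc^n \frN_i^+$. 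By the adequacy of the \fotc game (Theorem~\ref{thm:effotc} together with the corollary stated immediately after it), Duplicator has a winning strategy in $EF_\fotc^n(\frM_i,\frN_i)$; the very same strategy wins $EF_\fotc^n(\frM_i^+,\frN_i^+)$, because the predicates $Q_j$ take a constant value on each small structure --- all of the domain if $j=i$, none of it otherwise --- and that value agrees on the two sides, so every $\sigma$-partial-isomorphism between the played tuples is automatically a $\sigma^*$-partial-isomorphism, while the set of admissible subsets (hence the range of legal $TC$-moves) is unchanged. Applying adequacy in the other direction gives $\frM_i^+ \equiv_\fotc^n \frN_i^+$.

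Now take $f$ to be the identity function, sending each $m$-ary $P\in\sigma^*$ to $Px_1\cdots x_m$. Corollary~\ref{fotcfusion} yields $\bigoplus_{1\le i\le k}^f \frM_i^+ \equiv_\fotc^n \bigoplus_{1\le i\le k}^f \frN_i^+$. By the remark following Definition~\ref{fusion} we have $\bigoplus_{1\le i\le k}^f \frM_i^+ = \biguplus_{1\le i\le k}\frM_i$ and $\bigoplus_{1\le i\le k}^f \frN_i^+ = \biguplus_{1\le i\le k}\frN_i$, so $\biguplus_{1\le i\le k}\frM_i \equiv_\fotc^n \biguplus_{1\le i\le k}\frN_i$, which is the claim. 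As anticipated, the whole argument is a transcription of the proof of Corollary~\ref{msounion} with \mso replaced by \fotc and with no set parameters in play; the single point that merits a check --- namely that adding the structure-constant labels $Q_i$ gives Spoiler nothing new --- has been dealt with above, so there is no real obstacle here.
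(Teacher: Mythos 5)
Your proposal is correct and follows essentially the same route as the paper: the paper proves this corollary by declaring it ``analogous to the proof of Corollary~\ref{msounion}'', i.e., expand each $\frM_i,\frN_i$ with the constant-valued predicates $Q_1,\ldots,Q_k$, note that $n$-equivalence is preserved under this expansion, apply the parameter-free fusion corollary with $f$ the identity, and identify the resulting fusion with the disjoint union. You have simply written out that analogous argument in full (including the justification, left implicit in the paper, that Duplicator's strategy survives the addition of the structure-constant labels), so there is nothing to object to.
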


\begin{proof}
Analogous to the proof of Corollary \ref{msounion}.
\end{proof}

\begin{cor}
$\mathbb{A}_{\bigoplus_{1\leq i\leq k}^f \frM_i}$ is closed
under \fotc parametric definability and so the structure $\bigoplus_{1\leq i\leq
k}^f\frM_i$ is a \fotc-Henkin
structure.\label{fotchenkinfusion}
\end{cor}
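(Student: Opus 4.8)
The plan is to follow the proof of Corollary~\ref{msohenkinfusion} almost verbatim, using the Fusion Theorem for \fotc in place of the one for \mso; the only extra thing to track is that the \fotc fusion theorem requires each small structure to carry two distinct parameters unless it is a single point. First I would record the \fotc analogue of the definability characterization used there: a set $B\subseteq dom(\frM)$ is \fotc parametrically definable in $\frM$ if and only if, for some $n$, there is a finite sequence $\bar a,\bar A$ of first-order and set parameters such that any two elements of $dom(\frM)$ that are \fotc $n$-indistinguishable using $\bar a,\bar A$ agree on membership in $B$. The nontrivial direction is immediate from Lemma~\ref{finitenesslemma}: in a finite relational vocabulary there are, up to logical equivalence, only finitely many \fotc formulas of quantifier depth $\le n$ in the free variable $x$ (with the parameters held fixed), so $B$ is defined by the disjunction of those that are true of some element of $B$. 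As in Section~\ref{apb}, a set parameter is read as an additional monadic predicate; every set parameter arising below will be an admissible subset of the $\frM_i$ in question, and adding such a predicate keeps each $\frM_i$ a \fotc-Henkin structure, so this causes no trouble.

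Then I would argue by contradiction as in Corollary~\ref{msohenkinfusion}. Suppose $B\subseteq dom(\bigoplus_{1\leq i\leq k}^f\frM_i)$ is \fotc parametrically definable in the fusion using parameters $\bar{a'},\bar{A'}$, but $B\notin\mathbb{A}_{\bigoplus_{1\leq i\leq k}^f\frM_i}$. By the description of the admissible subsets of a fusion (Definitions~\ref{union} and~\ref{fusion}), this forces $A_i:=B\cap dom(\frM_i)\notin\mathbb{A}_{\frM_i}$ for some $1\leq i\leq k$, so $A_i$ is not \fotc parametrically definable in $\frM_i$; by the characterization above there are then elements $a\in A_i$ and $a'\notin A_i$ that are \fotc $n$-indistinguishable in $\frM_i$ using any parameter sequence, for every $n$. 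In particular $\frM_i$ has at least two elements (else $a=a'$), so for every $n$ and every finite parameter sequence $\bar c$ in $\frM_i$ we have $(\frM_i,\bar c,a)\equiv_\fotc^n(\frM_i,\bar c,a')$, while trivially $(\frM_j,\bar c_j)\equiv_\fotc^n(\frM_j,\bar c_j)$ in every position $j\neq i$.

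Finally I would invoke the Fusion Theorem for \fotc, choosing $\bar c$ to contain all first-order parameters occurring in $\bar{a'}$ together with one further element of $\frM_i$ distinct from $a$, and padding each $\bar c_j$ (for $j\neq i$) with two distinct elements of $\frM_j$ whenever $\frM_j$ is not a single point; then every small-structure parameter sequence meets the hypotheses of the theorem. It yields, for every $n$,
$$\bigoplus_{1\leq i\leq k}^f\frM_i,\ \bar p,a\ \equiv_\fotc^n\ \bigoplus_{1\leq i\leq k}^f\frM_i,\ \bar p,a',$$
where $\bar p$ is the concatenation of these parameter sequences and the set parameters $\bar{A'}$ are carried along as monadic predicates, as above. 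Since $a\in B$, $a'\notin B$, and $B$ is defined in the fusion by a \fotc formula of some quantifier depth $n$, that formula separates the two parametrized fusions, contradicting the displayed $n$-equivalence. Hence $\mathbb{A}_{\bigoplus_{1\leq i\leq k}^f\frM_i}$ is closed under \fotc parametric definability, and the remaining assertion is then just the definition of a \fotc-Henkin structure. The main obstacle is precisely this last piece of bookkeeping around the ``at least two distinct parameters'' clause of the Fusion Theorem for \fotc: one must check that padding each parameter sequence with extra elements of the corresponding small structure neither disturbs \fotc-equivalence nor is ever impossible — which is exactly why the single-point case is exempted in the theorem, as there no padding is needed.
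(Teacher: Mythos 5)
Your proof is correct and takes essentially the same route as the paper, which disposes of this corollary with the single line ``Analogous to the proof of Corollary~\ref{msohenkinfusion}'': you reproduce that argument (the indistinguishability characterization of parametric definability via Lemma~\ref{finitenesslemma}, the contradiction through the Fusion Theorem) and, usefully, make explicit the padding needed to meet the two-parameter hypothesis of the \fotc fusion theorem. The only blemish is the quantifier order in ``$n$-indistinguishable using any parameter sequence, for every $n$'' --- the pair $a,a'$ must be chosen \emph{after} fixing $n$ and the parameter sequence (otherwise $a$ itself, used as a parameter, separates them) --- but this mirrors the phrasing of the paper's own \mso argument and is repaired by exactly the order in which you in fact apply the characterization.
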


\begin{proof}
Analogous to the proof of Corollary \ref{msohenkinfusion}.
\end{proof}

\begin{cor}
$\mathbb{A}_{\biguplus_{ 1 \leq i \leq k} \frM_i}$ is closed
under \fotc parametric definability and so the structure $\biguplus_{ 1 \leq i \leq k} \frM_i$ is a \fotc-Henkin
structure.\label{fotchenkinunion}
\end{cor}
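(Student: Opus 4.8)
The plan is to obtain this exactly as Corollary~\ref{msohenkinunion} is obtained from Corollary~\ref{msohenkinfusion}, namely by reducing to the fusion case, Corollary~\ref{fotchenkinfusion}. Recall from the discussion following Definition~\ref{fusion} that disjoint union is a special case of fusion: writing $\sigma^{*}=\sigma\cup\{Q_{1},\ldots,Q_{k}\}$ and letting $\frM_{i}^{+}$ be the expansion of $\frM_{i}$ to $\sigma^{*}$ with $Q_{i}^{\frM_{i}^{+}}=dom(\frM_{i})$ and $Q_{j}^{\frM_{i}^{+}}=\emptyset$ for $j\neq i$, one has $\biguplus_{1\leq i\leq k}\frM_{i}=\bigoplus_{1\leq i\leq k}^{f}\frM_{i}^{+}$ for $f$ the identity, and moreover the two frames have literally the same domain and the same set of admissible subsets. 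So it is enough to check that each $\frM_{i}^{+}$ is a \fotc-Henkin structure and then apply Corollary~\ref{fotchenkinfusion} to $\frM_{1}^{+},\ldots,\frM_{k}^{+}$ together with this $f$.

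The verification that $\frM_{i}^{+}$ is a \fotc-Henkin structure is routine: passing from $\frM_{i}$ to $\frM_{i}^{+}$ changes neither the domain nor the set of admissible subsets, and in $\frM_{i}^{+}$ every new predicate $Q_{j}$ is interpreted as $\emptyset$ or as the whole domain, so each atom $Q_{j}(t)$ is $\frM_{i}^{+}$-equivalent either to $\neg\top$ or to $\top$. Substituting these equivalents shows that over $\frM_{i}^{+}$ every $\sigma^{*}$-\fotc-formula is equivalent to a $\sigma$-\fotc-formula, whence the sets parametrically \fotc-definable in $\frM_{i}^{+}$ are precisely those parametrically \fotc-definable in $\frM_{i}$, which by hypothesis are exactly the members of $\mathbb{A}_{\frM_{i}}=\mathbb{A}_{\frM_{i}^{+}}$. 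Corollary~\ref{fotchenkinfusion} then gives that $\mathbb{A}_{\bigoplus_{1\leq i\leq k}^{f}\frM_{i}^{+}}$ is closed under \fotc parametric definability and that this fusion is a \fotc-Henkin structure; since $\bigoplus_{1\leq i\leq k}^{f}\frM_{i}^{+}=\biguplus_{1\leq i\leq k}\frM_{i}$ with the same admissible subsets, this is exactly the desired conclusion.

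I do not expect a genuinely new obstacle at the level of this corollary; the only point to keep in mind is one that is already present inside the proof of Corollary~\ref{fotchenkinfusion}, and ultimately comes from the parametrized Fusion Theorem for \fotc stated just before Corollary~\ref{fotcfusion}: that theorem requires each non-singleton component to carry at least two distinct parameters. In the contradiction argument underlying Corollary~\ref{fotchenkinfusion} one fixes the quantifier depth $N$ of a formula (with parameters $\bar p$) presumed to define the offending set $B$ in the fusion, locates in some component $\frM_{i}^{+}$ two points $a,a'$ that are $N$-\fotc-equivalent relative to the $\frM_{i}^{+}$-part of $\bar p$ but lie on opposite sides of $B$ (possible because $B\cap dom(\frM_{i})$ is not admissible, hence not \fotc-definable in the \fotc-Henkin structure $\frM_{i}^{+}$), and then transports this $N$-equivalence to the fusion via the Fusion Theorem, contradicting the definability of $B$ by the depth-$N$ formula. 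At the transport step one must pad the component parameter sequences so that each non-singleton component carries two distinct elements; this is harmless, since $a,a'$ may be chosen relative to any finite parameter sequence (in particular a padded one) and singleton components fall under the theorem's dedicated single-point clause. Thus, beyond this bookkeeping and the trivial observation about the $Q_{j}$, the present corollary needs nothing that is not already available.
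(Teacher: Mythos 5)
Your proof is correct. The paper's own proof is the terse chain ``analogous to Corollary~\ref{msohenkinunion}'', which in turn is ``analogous to Corollary~\ref{msohenkinfusion}'': that is, one reruns the indistinguishability argument directly on the disjoint union --- if $B$ were \fotc-parametrically definable in $\biguplus_{1\leq i\leq k}\frM_i$ by a formula of quantifier depth $n$ but not admissible, some trace $B\cap dom(\frM_i)$ would fail to be definable in $\frM_i$, yielding two points $n$-indistinguishable over any parameter sequence yet separated by $B$, and the (parametrized) union preservation result transports their equivalence to the big structure, a contradiction. You instead reduce the disjoint union to the identity-fusion of the expansions $\frM_i^{+}$ and invoke Corollary~\ref{fotchenkinfusion} as a black box; this is precisely the expansion trick the paper itself uses to derive Corollary~\ref{fotcunion} from the fusion theorem, so it is entirely in the spirit of the development, and it is sound because the fusion and the union share the same domain and the same admissible sets. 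What your route buys is that the only new verification is the genuinely routine fact that each $\frM_i^{+}$ is still a \fotc-Henkin structure (the $Q_j$ being interpreted as $\emptyset$ or the whole domain, every $\sigma^{*}$-formula is equivalent over $\frM_i^{+}$ to a $\sigma$-formula, even under $TC$); what it costs is nothing. Your closing caveat about the two-distinct-parameters requirement in the parametrized \fotc fusion theorem is well placed, but note that this padding issue lives inside the proof of Corollary~\ref{fotchenkinfusion} (which the paper also leaves as ``analogous''), not in the reduction you perform here.
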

\begin{proof}
Analogous to the proof of Corollary \ref{msohenkinunion}.
\end{proof}

In the \folfp case, the situation parallels the \fotc case. As $LFP$ moves can only be played when there is already one pebble on the board, it is more convenient to show first a version of our \folfp fusion theorem in which each small structure comes with at least one \fo parameter. This allows us to define Duplicator's answer to a $LFP$ move played in the big structure, by means of his winning strategies in the small structures. We then derive as a corollary the fusion theorem for non-parametrized structures.

\begin{thm}[Fusion Theorem for \folfp]
Let $\bar{a_i}$, $\bar{b_i}$ be \emph{non empty} sequences of first-order
parameters of the form $a_{i_1},\ldots,a_{i_m}$, $b_{i_1},\ldots,b_{i_m}$, with
$m\in\mathbb{N}$ and $\bar{A_i}$, $\bar{B_i}$ sequences of set parameters of
the form $A_{i_1},\ldots,A_{i_{m'}}$, $B_{i_1},\ldots,B_{i_{m'}}$ with $m'\in\mathbb{N}$.
Whenever
$$(\frM_i,\bar{A_i},\bar{a_i})\equiv_\folfp^n(\frN_i,\bar{B_i},\bar{b_i})\text{ for all }1\leq i\leq k,$$
then also $$\bigoplus_{1\leq i\leq k}^f \frM_i,
(\bar{A_1},\ldots,\bar{A_k})^\cup,\bar{a_1},\ldots,\bar{a_k}\equiv_\folfp^n
\bigoplus_{1\leq i\leq k}^f
\frN_i,(\bar{B_1},\ldots,\bar{B_k})^\cup,\bar{b_1},\ldots,\bar{b_k}.$$
\end{thm}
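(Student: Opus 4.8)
The plan is to adapt the game-theoretic arguments already carried out for the \mso and \fotc fusion theorems, now using the \folfp Ehrenfeucht--Fra\"\i ss\'e game of Definition~\ref{effolfp}. By the \folfp Adequacy theorem (Theorem~\ref{thm:effolfp}) it suffices to construct, by induction on $n$, a winning strategy for Duplicator in the length-$n$ \folfp game between $\bigoplus_{1\leq i\leq k}^f\frM_i$ expanded with $(\bar{A_1},\ldots,\bar{A_k})^\cup$ and $\bar{a_1},\ldots,\bar{a_k}$ and $\bigoplus_{1\leq i\leq k}^f\frN_i$ expanded with $(\bar{B_1},\ldots,\bar{B_k})^\cup$ and $\bar{b_1},\ldots,\bar{b_k}$, built from Duplicator's winning strategies in the $k$ small games between $(\frM_i,\bar{A_i},\bar{a_i})$ and $(\frN_i,\bar{B_i},\bar{b_i})$. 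Two facts about fusions (Definitions~\ref{union} and~\ref{fusion}) are used throughout: every element of $\bigoplus_{1\leq i\leq k}^f\frM_i$ lies in exactly one $dom(\frM_i)$, and every admissible subset of $\bigoplus_{1\leq i\leq k}^f\frM_i$ has the form $A_1\cup\cdots\cup A_k$ with $A_i\in\mathbb{A}_{\frM_i}$.

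In the base case $n=0$ one argues, as in the \mso base case, that the distinguished tuples still form a partial isomorphism --- atomic formulas in the fusion are $f$-images of quantifier-free $\sigma^*$-formulas evaluated in the disjoint union, and the componentwise partial isomorphisms combine into one on the disjoint union --- and that the posimorphism condition is preserved by the union-closure of the set parameters: a pebble lying in $\bigcup_{l\in I}A_l$ already lies in a single $A_l$, so by the componentwise posimorphism its partner lies in the corresponding $B_l\subseteq\bigcup_{l\in I}B_l$.

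For the inductive step I describe Duplicator's reply to Spoiler's first move in the length-$(n+1)$ game and then invoke the induction hypothesis on the remaining $n$ rounds. A point move ($\exists$ or $\forall$) is routine: the element chosen lies in one $\frM_i$ (resp.\ $\frN_i$), Duplicator answers there with his length-$(n+1)$ strategy, the other small games are unchanged, all $k$ small games remain $n$-equivalent, and the induction hypothesis applies. The substantive case is a fixed-point move. In an $LFP$ move Spoiler picks $B_{r+1}=B_1\cup\cdots\cup B_k$ in the big $\frN$ together with a point-pebble $b$ outside it (so $b$ sits in some $\frN_{i_0}$ with $b\notin B_{i_0}$), Duplicator must reply with an admissible $A_{r+1}=A_1\cup\cdots\cup A_k\neq dom$ in the big $\frM$, and after Spoiler picks $a_{s+1}\notin A_{r+1}$ Duplicator must pick $b_{s+1}\notin B_{r+1}$. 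Duplicator produces each $A_i$ by feeding his small-game strategy the $LFP$ move ``Spoiler plays $B_i$, witnessed by a point-pebble of $\frN_i$ lying outside $B_i$''; when no such pebble exists --- i.e.\ $B_i$ (possibly all of $dom(\frN_i)$) contains every point-pebble of $\frN_i$ --- he instead feeds it the set $B_i\setminus\{b_j\}$ for a point-pebble $b_j$ of $\frN_i$, which is available precisely because $\bar{b_i}$ is non-empty, and adjusts the returned set by $\pm\{a_j\}$, paralleling cases a)--d) of the \fotc fusion proof. The point-part of the move is then routed to whichever $\frM_j$ contains $a_{s+1}$ and finished as a point move there; a refined argument in the spirit of case b) of the \fotc proof --- replaying the small-game strategy with a different point-pebble --- guarantees that the reply can be taken genuinely outside $B_j$, hence outside $B_{r+1}$. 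The $GFP$ move is handled dually (Spoiler picks a set in the big $\frM$ together with a pebble inside it, then a point in $\frN$), with the $\pm$-parameter adjustment mirrored. One then checks that all small games are again $n$-equivalent and that the new pair $A_{r+1}\mapsto B_{r+1}$ satisfies the posimorphism condition, so that the induction hypothesis yields the desired strategy for the last $n$ rounds.

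The main obstacle is exactly this coordination in the fixed-point case: the components $A_i$ must be chosen --- equivalently, the decision for each $i$ whether to simulate the $LFP$ move directly or on $B_i\setminus\{b_j\}$ must be made --- so that, whichever $\frM_j$ Spoiler later draws the point-part $a_{s+1}$ from, the $j$-th small game is in a state from which Duplicator's strategy yields a reply that is simultaneously legal (on the correct side of $B_j$) and compatible with the posimorphism bookkeeping for the union-closed set parameters. This is the \folfp counterpart of the ``slightly more refined argument'' flagged in the \fotc fusion proof, and it is where the hypothesis that each $\bar{a_i}$, $\bar{b_i}$ be non-empty is used; the remainder is a faithful, if lengthy, adaptation of the \mso and \fotc fusion arguments, now also carrying along the dual ($GFP$) fixed-point move.
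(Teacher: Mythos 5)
Your proposal is correct and follows essentially the same route as the paper: an induction on $n$ transferring Duplicator's winning strategies from the component games to the fusion via the \folfp Ehrenfeucht--Fra\"\i ss\'e game, with point moves handled componentwise and fixed-point moves handled by decomposing the chosen admissible set as $A_1\cup\cdots\cup A_k$ and using the non-empty parameter sequences to adjust each component by $\pm\{a_j\}$ when no pebble lies on the required side. The only cosmetic difference is that you spell out the $LFP$ move and treat $GFP$ as dual, whereas the paper details the $GFP$ move and declares $LFP$ symmetric.
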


\proof
We proceed by induction on $n$, defining a winning strategy for Duplicator in the game $EF_\folfp^n((\bigoplus_{1\leq i\leq k}^f \frM_i,
(\bar{A_1},\ldots,\bar{A_k})^\cup,\bar{a_1},\ldots,\bar{a_k}),(\bigoplus_{1\leq i \leq k}^f
\frN_i,(\bar{B_1},\ldots,\bar{B_k})^\cup,
\bar{b_1},\ldots,\bar{b_k}))$,\\
 out of her winning strategies in
the games
$EF_\folfp^n((\frM_i,\bar{A_i},\bar{a_i}),(\frN_i,\bar{B_i},\bar{b_i}))$.

Base step: $n=0$, doing nothing is a strategy for Duplicator (this can be justified by a similar argument as in the \mso case).

Inductive step: the inductive hypothesis says that whenever Duplicator has a winning strategy
in
$EF_\folfp^n((\frM_i,\bar{A_i},\bar{a_i}),(\frN_i,\bar{B_i},\bar{b_i}))$
for pairs of structures
$(\frM_i,\bar{A_i},\bar{a_i}),(\frN_i,\bar{B_i},\bar{b_i})$
satisfying the required conditions on parameters with $1 \leq i\leq
k$, he also has one in $EF_\folfp^n((\bigoplus_{1\leq i\leq
k}^f
\frM_i,(\bar{A_1},\ldots,\bar{A_k})^\cup,\bar{a_1},\ldots,\bar{a_k}),
(\bigoplus_{ 1 \leq i \leq k}^f \frN_i,
(\bar{B_1},\ldots,\bar{B_k})^\cup,\bar{b_1},\ldots,\bar{b_k}))$.

We want to show that this also holds when the length of the games
is $n+1$. Suppose Duplicator has a winning strategy in
$EF_\folfp^{n+1}((\frM_i,\bar{A_i},\bar{a_i}),(\frN_i,\bar{B_i},\bar{b_i}))$
for all $1 \leq i\leq k$. We describe Duplicator's answer to
Spoiler's first move in the \folfp-game of length $n+1$ in between $(\bigoplus_{1\leq i\leq
k}^f \frM_i,(\bar{A_1},\ldots,\bar{A_k})^\cup,
\bar{a_1},\ldots,\bar{a_k})$ and $(\bigoplus_{1\leq i\leq k}^f
\frN_i,(\bar{B_1},\ldots,\bar{B_k})^\cup,
\bar{b_1},\ldots,\bar{b_k})$. It then follows by induction
hypothesis, that he has a winning strategy in the remaining
$n$-length game.
\begin{iteMize}{$\bullet$}
\item   Spoiler's first move is an $\exists$ move.

Same argument as for \mso and \fotc.
\item   Spoiler's first move is a $\forall$ move.

Symmetric.
\item   Spoiler's first move is a GFP move.

Suppose Spoiler chooses a set $A$ in the set of admissible subsets
of $\bigoplus_{1 \leq i\leq k}^f \frM_i$ with some pebble
$a_{i_j} \in A$. Then $A$ is necessarily of the form $A_1 \cup
\ldots \cup A_k$, with $A_i$ an admissible subset of $\frM_i$. Her
response $B=B_1 \cup \ldots \cup B_k$ can now be defined locally
for each $B_i$ using her winning strategies in the small structures.
So let Spoiler choose $A=A_1 \cup \ldots \cup A_k$. Keeping in
mind that each small structure comes with at least one parameter,
there are four cases:
\begin{enumerate}[1)]
\item     in $dom(\frM_i)$, there is a distinguished object inside $A_i$ and $A_i \neq dom(\frM_i)$,
so Duplicator considers $A_i$ together with this parameter and
constructs $B_i$ by using his winning strategy in
$EF_\folfp^{n+1}((\frM_i,\bar{A_i},\bar{a_i}),(\frN_i,\bar{B_i},\bar{b_i}))$.
\item  in $dom(\frM_i)$, there are only distinguished objects
outside $A_i \neq \emptyset$, so Duplicator considers
any $a_j$ among those and looks at
$A_i\cup \{a_j\}$, so that he can use his winning strategy in
$EF_\folfp^{n+1}((\frM_i,\bar{A_i},\bar{a_i}),(\frN_i,\bar{B_i},\bar{b_i}))$
to construct an answer that we call $B_i'$. Now
$B_i=B_i'\backslash \{b_j\}$. This is a correct answer, because
the (posimorphism) condition to be maintained (see Definition \ref{effolfp}) is that for every
pebble $a_l$ on the board at the end of the game, $a_l \in A_i
\Rightarrow b_l \in B_i$. But by Duplicator's winning strategy in
$EF_\folfp^{n+1}((\frM_i,\bar{A_i},A_i\cup
\{a_j\},\bar{a_i}),(\frN_i,\bar{B_i},B_i',\bar{b_i}))$, we
know already that for every such pebble, $a_l \in A_i\cup \{a_j\}
\Rightarrow b_l \in B_i'$, so also $a_l \in A_i \Rightarrow b_l
\in B_i' \backslash \{b_j\}$, since the winning conditions will
  assure that $a_l = a_j$ if and only if $b_l = b_j$.
\item  $B_i=dom(\frM_i)$. So
$A_i=dom(\frN_i)$. As pebbles are only chosen using Duplicator's
winning strategies in the small structures, the posimorphism condition
will be maintained.
\item     $B_i=\emptyset$. So $A_i=\emptyset$. As no pebble can belong to this set, the
posimorphism condition will be maintained.
\end{enumerate}
Now that $B=B_1 \cup \ldots \cup B_k$ has been constructed,
Spoiler picks a new element $b \in B$ which belongs to the domain
of one particular small structure $\frN_i$ (so $b \in B_i$) and
$dom(\frM_i)$ is as previously described either in $1$, $2$ or
$3$ (but not $4)$, because $b$ cannot belong to the empty set)
and in each case Duplicator does the following:
\begin{enumerate}[1)] 
\item    Duplicator answers with $a$ according to his
winning strategy in
$$EF_\folfp^{n+1}((\frM_i,\bar{A_i},\bar{a_i}),(\frN_i,\bar{B_i},\bar{b_i}));$$
\item  Duplicator again considers $A_i\cup \{a_j\}$ and
answers according to his winning strategy in
$EF_\folfp^{n+1}((\frM_i,\bar{A_i},A_i\cup
\{a_j\},\bar{a_i}),(\frN_i,\bar{B_i},B_i',\bar{b_i}))$. This
is safe, because the pebble to be chosen may be assumed to be fresh, so it
won't be $a_j$;
\item   Duplicator picks
a random pebble $a_j\in dom(\frM_i)$ and considers
$dom(\frM_i)\backslash \{a_j\}$. His winning strategy provides him
with a correct answer.
\end{enumerate}
So in any of these cases (either $1$, $2$ or $3$), Duplicator has a
winning strategy in the game
$EF_\folfp^n((\frM_i,\bar{A_i},A_i,\bar{a_i},a),(\frN_i,\bar{B_i},B_i,\bar{b_i},b))$.
Now for all $j\neq i$, $1 \leq j \leq k$, he also has one in
$EF_\folfp^n((\frM_j,\bar{A_j},A_j,\bar{a_j}),(\frN_j,\bar{B_j},B_j,\bar{b_j}))$.
So by induction hypothesis, he has one in the remaining $n$-length \folfp
game between the following two structures:

$$(\bigoplus_{1\leq i\leq k}^f \frM_i,
(\bar{A_1},A_1,\ldots,\bar{A_k},A_k)^\cup,\bar{a_1},\ldots,\bar{a_k},a)$$
and
$$(\bigoplus_{1\leq i\leq k}^f
\frN_i,(\bar{B_1},B_1,\ldots,\bar{B_k},B_k)^\cup,
\bar{b_1},\ldots,\bar{b_k},b)$$
\item   Spoiler's first move is a $LFP$ move.

Symmetric.\qed
\end{iteMize}

\begin{cor}
Whenever $\frM_i\equiv_\folfp^n\frN_i$ for all $1 \leq i \leq k$,
then it also holds that $\bigoplus_{1\leq i\leq k}^f
\frM_i\equiv_\folfp^n \bigoplus_{1\leq i\leq k}^f
\frN_i$.\label{folfpfusion}
\end{cor}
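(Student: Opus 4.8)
The plan is to derive this exactly as Corollary~\ref{fotcfusion} was derived from the Fusion Theorem for \fotc, and in fact the argument is simpler. There, one had to account for Spoiler's first \emph{two} moves, because a $TC$-move needs two pebbles already on the board; here, an $LFP$-move requires some pebble $b_i$ with $b_i\notin B_{r+1}$ and a $GFP$-move requires some pebble $a_i$ with $a_i\in A_{r+1}$, so when there are no distinguished objects yet \emph{neither} fixed-point move is available. Hence, in the game $EF_\folfp^n(\bigoplus_{1\leq i\leq k}^f\frM_i,\bigoplus_{1\leq i\leq k}^f\frN_i)$, Spoiler's opening move can only be an $\exists$-move or a $\forall$-move. (The case $n=0$ is immediate, since the vocabulary has no individual constants and so the only atomic sentence is $\top$.)

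First I would treat the $\exists$-move case, the $\forall$-move case being symmetric (Spoiler opening in the right-hand structure). Spoiler picks $a\in dom(\bigoplus_{1\leq i\leq k}^f\frM_i)$, so $a\in dom(\frM_i)$ for a unique $i$. Since $\frM_i\equiv_\folfp^n\frN_i$, Duplicator (by \folfp adequacy, Theorem~\ref{thm:effolfp}) has a winning strategy in $EF_\folfp^n(\frM_i,\frN_i)$; let $b\in dom(\frN_i)$ be his response to $a$, so that $(\frM_i,a)\equiv_\folfp^{n-1}(\frN_i,b)$. For every $j\neq i$ I would fix an arbitrary point $a_j\in dom(\frM_j)$; since Spoiler could equally well have opened by playing $a_j$, Duplicator's winning strategy in $EF_\folfp^n(\frM_j,\frN_j)$ supplies a response $b_j$ with $(\frM_j,a_j)\equiv_\folfp^{n-1}(\frN_j,b_j)$. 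Setting $a_i:=a$ and $b_i:=b$, each small structure now carries the one-element (hence non-empty) \fo-parameter sequence $a_j$ matched by $b_j$, with empty set-parameter sequences.

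Then I would invoke the Fusion Theorem for \folfp at quantifier depth $n-1$: from $(\frM_j,a_j)\equiv_\folfp^{n-1}(\frN_j,b_j)$ for all $1\leq j\leq k$ it yields
$$\bigoplus_{1\leq j\leq k}^f\frM_j,\,a_1,\ldots,a_k \;\;\equiv_\folfp^{n-1}\;\; \bigoplus_{1\leq j\leq k}^f\frN_j,\,b_1,\ldots,b_k ,$$
so Duplicator has a winning strategy in the $(n-1)$-round game between these two parametrized structures, hence also in the remaining $(n-1)$-round game that follows his answer $b$ to Spoiler's opening move $a$ (forgetting the dummy parameters $a_j,b_j$ for $j\neq i$, which only makes Duplicator's task easier, as dropping matched pebbles weakens the partial-isomorphism/posimorphism obligation). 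Composing the opening response with this strategy gives Duplicator a winning strategy in $EF_\folfp^n(\bigoplus_{1\leq i\leq k}^f\frM_i,\bigoplus_{1\leq i\leq k}^f\frN_i)$, and by Theorem~\ref{thm:effolfp} this is precisely $\bigoplus_{1\leq i\leq k}^f\frM_i\equiv_\folfp^n\bigoplus_{1\leq i\leq k}^f\frN_i$.

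I do not expect any genuine obstacle here: the whole content is bookkeeping. The single point that needs care — and the only reason the ``dummy parameter'' step is present — is that the Fusion Theorem for \folfp is stated only when every small structure carries a non-empty \fo-parameter sequence; the observation that Spoiler could have opened with \emph{any} point of \emph{any} $\frM_j$ is exactly what lets us meet that hypothesis uniformly after a single round of play. Everything else is routine and parallels the proofs of Corollaries~\ref{fotcfusion} and~\ref{msounion}.
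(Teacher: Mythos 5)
Your proof is correct and follows essentially the same route as the paper's own argument: observe that Spoiler's opening move must be a point move, respond in the relevant small structure, pad all other small structures with matched dummy parameters so the parametrized Fusion Theorem for \folfp applies, and then drop the dummies. The only (harmless) differences are that you make the $\forall$-move case and the $n=0$ base explicit, and your quantifier-depth bookkeeping is slightly cleaner than the paper's.
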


\begin{proof}
We know that Spoiler's first move in
$EF_\folfp^{n+1}(\bigoplus_{1\leq i\leq k}^f \frM_i,
\bigoplus_{1\leq i\leq k}^f \frN_i)$ must be a \fo
quantifier move, because the $LFP$ move can only be played once
there is a pebble on the board. Let us look at the first move.
Suppose Spoiler plays a point $a\in dom(\bigoplus_{1\leq
i\leq k}^f \frM_i)$. So $a \in dom(\frM_i)$ for some $1 \leq i
\leq k$. By Duplicator's winning strategy in $EF_\folfp^n(\frM_i,
\frN_i)$, he has an answer $b \in dom(\frN_i)$ such that
$(\frM_i,a) \equiv_\folfp^n (\frN_i,b)$. Let us rename $a$ with
$a_i$ and $b$ with $b_i$. Similarly, for every $j\neq i$ such that
$1 \leq j \leq k$, fix some random point $a_j$ coming from the
domain of $\frM_j$, Spoiler could have played this point and so
Duplicator would have had an adequate answer $b_j$ such that
$(\frM_j,a_j) \equiv_\folfp^n (\frN_j,b_j)$. Now as for all $1 \leq
i \leq k$, Duplicator has a winning strategy in
$EF_\folfp^{n-1}((\frM_i,a_i),(\frN_i,b_i))$, by the previous lemma,
he has one in $EF_\folfp^{n-1}(\bigoplus_{1\leq i\leq k}^f
\frM_i, a_1, \ldots, a_k),(\bigoplus_{1\leq i\leq k}^f
\frN_i, b_1, \ldots, b_k))$, so he also has one in
$EF_\folfp^{n-1}(\bigoplus_{1\leq i\leq k}^f \frM_i,
a),(\bigoplus_{1\leq i\leq k}^f \frN_i, b))$.
\end{proof}

\begin{cor}
Whenever $\frM_i\equiv_\folfp^n\frN_i$ for all $1 \leq i \leq k$,
then it also holds that $\biguplus_{ 1 \leq i \leq k}\frM_i\equiv_\folfp^n
\biguplus_{ 1 \leq i \leq k}\frN_i$.\label{folfpunion}
\end{cor}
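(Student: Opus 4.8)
The plan is to derive Corollary~\ref{folfpunion} from the Fusion Theorem for \folfp (Corollary~\ref{folfpfusion}), exactly along the lines of the proof of Corollary~\ref{msounion}. The key point, already recorded in the remark following Definition~\ref{fusion}, is that a disjoint union is a particular case of an $f$-fusion. Concretely, expand each $\frM_i$ to the vocabulary $\sigma^*=\sigma\cup\{Q_1,\ldots,Q_k\}$ by interpreting $Q_i$ as the whole domain of $\frM_i$ and each $Q_j$ with $j\neq i$ as the empty set, and call the resulting $\sigma^*$-structure $\frM_i^+$; likewise form $\frN_i^+$. If $f$ is the identity function, $f(P)=Px_1\ldots x_n$ for every $n$-ary $P\in\sigma^*$, then $\biguplus_{1\leq i\leq k}\frM_i=\bigoplus_{1\leq i\leq k}^f\frM_i^+$ and $\biguplus_{1\leq i\leq k}\frN_i=\bigoplus_{1\leq i\leq k}^f\frN_i^+$. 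So it suffices to push the hypothesis through this expansion and then invoke Corollary~\ref{folfpfusion} over the vocabulary $\sigma^*$.

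First I would check that the expansion is harmless. Each $\frM_i^+$ is again a \folfp-Henkin structure, because every new predicate $Q_j$ is \folfp-definable over $\frM_i$ (by $x=x$ when $j=i$, and by $x\neq x$ otherwise), so no new admissible subsets are forced and one may keep $\mathbb{A}_{\frM_i^+}=\mathbb{A}_{\frM_i}$. Moreover $\frM_i^+\equiv_\folfp^n\frN_i^+$: replacing, over the expanded structures, each atomic subformula $Q_i x$ by $x=x$ and each $Q_j x$ with $j\neq i$ by $x\neq x$ is a quantifier-depth-preserving rewriting that turns any $\sigma^*$-sentence into an equivalent $\sigma$-sentence, so $\frM_i\equiv_\folfp^n\frN_i$ transfers to the expansions. (Equivalently, in terms of the game of Definition~\ref{effolfp}, a Duplicator winning strategy in $EF_\folfp^n(\frM_i,\frN_i)$ is still winning in $EF_\folfp^n(\frM_i^+,\frN_i^+)$, since the $Q_j$ get identical interpretations on both sides and the posimorphism condition on set pebbles is unaffected.)

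Then, applying Corollary~\ref{folfpfusion} over $\sigma^*$ to $\frM_1^+,\ldots,\frM_k^+$ and $\frN_1^+,\ldots,\frN_k^+$ with the identity $f$ yields $\bigoplus_{1\leq i\leq k}^f\frM_i^+\equiv_\folfp^n\bigoplus_{1\leq i\leq k}^f\frN_i^+$, which by the identification above is exactly $\biguplus_{1\leq i\leq k}\frM_i\equiv_\folfp^n\biguplus_{1\leq i\leq k}\frN_i$, as desired. I do not expect any genuine obstacle here: the entire combinatorial content sits in Corollary~\ref{folfpfusion}, and the only step requiring a little care is the (routine) verification that the $\sigma^*$-expansion preserves both the \folfp-Henkin property and $n$-\folfp-equivalence --- this is the point where one must remember that the auxiliary predicates $Q_j$ are trivially definable.
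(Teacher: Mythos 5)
Your proposal is correct and matches the paper's intended argument: the paper proves this corollary by declaring it ``analogous to the proof of Corollary~\ref{msounion}'', which is precisely the expansion-to-$\sigma^*$-plus-identity-fusion reduction you spell out, invoking Corollary~\ref{folfpfusion} in place of the \mso fusion theorem. Your extra care about the expansions preserving the Henkin property and $n$-\folfp-equivalence is exactly the (routine) verification the paper leaves implicit.
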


\begin{proof}
Analogous to the proof of Corollary \ref{msounion}.
\end{proof}

\begin{cor}
$\mathbb{A}_{\bigoplus_{1\leq i\leq k}^f}$ is closed
under \folfp parametric definability and so the structure $\bigoplus_{1\leq i\leq k}^f\frM_i$ is a \folfp-Henkin
structure.\label{folfphenkinfusion}
\end{cor}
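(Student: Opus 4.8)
The plan is to transcribe the argument of Corollary~\ref{msohenkinfusion} with $\equiv_\folfp$ in place of $\equiv_\mso$, invoking the Fusion Theorem for \folfp (and its non-parametrized shadow, Corollary~\ref{folfpfusion}) instead of Theorem~\ref{msofusion}. First I would record the same three equivalent characterizations of parametric definability: a set $B \subseteq dom(\frM)$ is \folfp parametrically definable in a \folfp-Henkin structure $\frM$ iff there are $n$ and a finite parameter tuple $\bar{a},\bar{A}$ over $\frM$ with $B$ defined by a \folfp-formula of quantifier depth $n$ using $\bar{a},\bar{A}$, iff there is an $n$ such that any two points of $dom(\frM)$ that are \folfp $n$-indistinguishable using $\bar{a},\bar{A}$ either both lie in $B$ or both lie outside $B$. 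The equivalence of the last two clauses rests on the Finiteness Lemma (Lemma~\ref{finitenesslemma}): only finitely many \folfp-formulas of quantifier depth $n$ in the relevant free variables exist up to equivalence, so the $n$-type (relative to $\bar{a},\bar{A}$) of a point is expressible by a single \folfp-formula, and $B$ is then a finite, hence definable, union of the definable classes of those $n$-types that are contained in $B$.

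Then I would argue by contradiction. Assume $B \subseteq dom(\bigoplus_{1\leq i\leq k}^f\frM_i)$ is \folfp parametrically definable using $\bar{a}',\bar{A}'$, but $B \notin \mathbb{A}_{\bigoplus_{1\leq i\leq k}^f\frM_i}$. By the description of the admissible sets of a fusion (Definitions~\ref{union} and~\ref{fusion}), $B \notin \mathbb{A}_{\bigoplus_{1\leq i\leq k}^f\frM_i}$ forces, for some $1\leq i\leq k$, the slice $A_i = B \cap dom(\frM_i)$ to fail to be \folfp parametrically definable in $\frM_i$; by the third characterization above there are then points $a \in A_i$ and $a' \in dom(\frM_i)\setminus A_i$ such that $(\frM_i,\bar{a},\bar{A},a)\equiv_\folfp^n(\frM_i,\bar{a},\bar{A},a')$ for every $n$ and every finite parameter tuple $\bar{a},\bar{A}$ over $\frM_i$.

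Now I would feed this into the Fusion Theorem for \folfp. For each $j$ form the parameter tuple consisting of the restriction of $\bar{a}',\bar{A}'$ to $\frM_j$ --- the first-order parameters of $\bar{a}'$ lying in $dom(\frM_j)$, and the $j$-th summands of the set parameters in $\bar{A}'$, which are admissible in $\frM_j$ by construction of the fusion --- together with the points $a,a'$ in the case $j=i$, and, should the resulting first-order tuple still be empty, an arbitrary point of $dom(\frM_j)$ so as to meet the non-emptiness hypothesis of the theorem. With these tuples $(\frM_j,\ldots)\equiv_\folfp^n(\frM_j,\ldots)$ holds for every $j$ --- trivially for $j \neq i$ (identical structures) and by the displayed indistinguishability for $j=i$ --- so the Fusion Theorem for \folfp yields, for every $n$,
$$\bigoplus_{1\leq i\leq k}^f\frM_i,\; \bar{a}',\bar{A}',a \;\equiv_\folfp^n\; \bigoplus_{1\leq i\leq k}^f\frM_i,\; \bar{a}',\bar{A}',a'$$
(using that the set-parameter closure under union produced by the theorem contains $\bar{A}'$, and that passing to a sub-tuple of parameters preserves $n$-equivalence). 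Since $a\in B$ and $a'\notin B$, this contradicts the third characterization of parametric definability applied to $B$ and $\bar{a}',\bar{A}'$ in $\bigoplus_{1\leq i\leq k}^f\frM_i$. Hence $\mathbb{A}_{\bigoplus_{1\leq i\leq k}^f\frM_i}$ is closed under \folfp parametric definability, i.e., the fusion is a \folfp-Henkin structure.

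The only delicate point --- the ``hard'' part, such as it is --- is verifying that the hypotheses of the Fusion Theorem for \folfp are genuinely met: that theorem insists on a non-empty first-order parameter sequence in each small structure, and one must check that the dummy points adjoined to satisfy this do no harm, which is immediate since they are added identically on the two sides and can be forgotten afterwards. Everything else is a faithful transcription of the \mso argument of Corollary~\ref{msohenkinfusion}.
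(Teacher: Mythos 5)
Your proposal is correct and is essentially the paper's own proof: the paper simply says the argument is ``analogous to the proof of Corollary~\ref{msohenkinfusion},'' and you have carried out exactly that transcription, replacing Theorem~\ref{msofusion} by the Fusion Theorem for \folfp. Your extra care about the non-empty first-order parameter hypothesis of that theorem (padding with dummy points added identically on both sides) is a detail the paper leaves implicit, and you handle it correctly.
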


\begin{proof}
Analogous to the proof of Corollary \ref{msohenkinfusion}.
\end{proof}

\begin{cor}
$\mathbb{A}_{\biguplus_{ 1 \leq i \leq k} \frM_i}$ is closed
under \folfp parametric definability and so the structure $\biguplus_{ 1 \leq i \leq k} \frM_i$ is a \folfp-Henkin
structure.
\label{folfphenkinunion}
\end{cor}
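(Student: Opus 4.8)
The plan is to reduce the statement to the fusion case already proved in Corollary~\ref{folfphenkinfusion}. Recall from Definition~\ref{fusion} and the remarks following it that the disjoint union $\biguplus_{1\leq i\leq k}\frM_i$ literally \emph{equals} the $f$-fusion $\bigoplus_{1\leq i\leq k}^f\frM_i^+$, where $f$ is the identity function and $\frM_i^+$ is the expansion of $\frM_i$ to the vocabulary $\sigma^*=\sigma\cup\{Q_1,\ldots,Q_k\}$ with $Q_i^{\frM_i^+}=dom(\frM_i)$ and $Q_j^{\frM_i^+}=\emptyset$ for $j\neq i$. In particular these two structures have exactly the same domain and the same admissible sets. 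The one thing to verify before applying Corollary~\ref{folfphenkinfusion} is that each $\frM_i^+$ is a \folfp-Henkin structure: this is immediate, since $Q_i$ and $Q_j$ ($j\neq i$) are already definable in $\frM_i^+$ by $x=x$ and $x\neq x$ respectively, so every \folfp-formula over $\sigma^*$ is equivalent over $\frM_i^+$ to a \folfp-formula over $\sigma$ and hence defines a set that is admissible because $\frM_i$ is \folfp-Henkin. Corollary~\ref{folfphenkinfusion}, applied to $\frM_1^+,\ldots,\frM_k^+$, then says exactly that $\mathbb{A}_{\bigoplus_{1\leq i\leq k}^f\frM_i^+}=\mathbb{A}_{\biguplus_{1\leq i\leq k}\frM_i}$ is closed under \folfp parametric definability, and, being a frame with this closure property, $\biguplus_{1\leq i\leq k}\frM_i$ is a \folfp-Henkin structure.

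For a direct argument --- the route indicated by the ``analogous to Corollary~\ref{msohenkinunion}'' remark --- I would replay the proof of Corollary~\ref{msohenkinfusion} with $\biguplus_{1\leq i\leq k}\frM_i$ in place of the fusion. First one records the characterization, a consequence of the Finiteness Lemma~\ref{finitenesslemma} just as in the \mso case: a set $B\subseteq dom(\frM)$ is \folfp parametrically definable in a \folfp-Henkin structure $\frM$ iff, for some $n$, any two points of $dom(\frM)$ that are \folfp $n$-indistinguishable using some fixed finite parameter list $\bar a,\bar A$ agree on membership in $B$. Then, arguing by contradiction, suppose $B\subseteq dom(\biguplus_{1\leq i\leq k}\frM_i)$ is \folfp parametrically definable from parameters $\bar{a'},\bar{A'}$ --- which we may take nonempty, padding with a dummy first-order parameter since every domain is nonempty --- yet $B\notin\mathbb{A}_{\biguplus_{1\leq i\leq k}\frM_i}$. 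By the definition of admissible sets in a disjoint union, for some $i$ the slice $B\cap dom(\frM_i)$ fails to be \folfp parametrically definable in $\frM_i$, so there are $a\in B\cap dom(\frM_i)$ and $a'\in dom(\frM_i)\setminus B$ that are \folfp $n$-indistinguishable in $\frM_i$ for every $n$, under every parameter list. Feeding $a,a'$ --- together with the relevant parameters of $\bar{a'},\bar{A'}$ lying in $\frM_i$ and an arbitrary nonempty first-order parameter tuple in each of the other factors --- into the parametrized Fusion Theorem for \folfp applied to the expansions $\frM_1^+,\ldots,\frM_k^+$ (whose $f$-fusion is $\biguplus_{1\leq i\leq k}\frM_i$) makes $a$ and $a'$ \folfp $n$-indistinguishable in $\biguplus_{1\leq i\leq k}\frM_i$ for every $n$, which contradicts the \folfp parametric definability of $B$ from $\bar{a'},\bar{A'}$ (recall $a\in B$, $a'\notin B$).

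I expect the only obstacle to be administrative: applying the \folfp Fusion Theorem, which in the form proved above requires each factor to carry at least one first-order parameter, so one pads the parameter lists in the factors that lack one --- harmless since domains are nonempty, and the very same device (flagged by the footnote) used in Corollary~\ref{msohenkinfusion}. One must also hand the indistinguishable pair $a,a'$ to the Fusion Theorem together with precisely the parameters from $\bar{a'},\bar{A'}$ that matter, so that the $n$-equivalence obtained is genuinely incompatible with $B$ being defined from $\bar{a'},\bar{A'}$. Neither point carries any real mathematical weight; the content is entirely in the \folfp Fusion Theorem and the Finiteness Lemma.
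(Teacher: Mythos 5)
Your proposal is correct and follows essentially the same route as the paper, whose proof is officially just ``analogous to Corollary~\ref{msohenkinunion}'', i.e.\ ultimately a replay of the Corollary~\ref{msohenkinfusion} argument (non-definable slice, indistinguishable pair, lift via the Fusion Theorem) using that $\mathbb{A}_{\biguplus_i \frM_i}=\mathbb{A}_{\bigoplus_i^f \frM_i}$; your first route merely packages this by invoking Corollary~\ref{folfphenkinfusion} on the expansions $\frM_i^+$ via the identity-fusion identification the paper itself records after Definition~\ref{fusion}. You also correctly flag the one point the paper leaves implicit, namely padding with dummy first-order parameters so that the \folfp Fusion Theorem's nonempty-parameter hypothesis is met.
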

\begin{proof}
Analogous to the proof of Corollary \ref{msohenkinunion}.
\end{proof}

\section{Putting it Together: Completeness on Finite Trees}
\label{sec5}

\subsection{Forests and Operations on Forests}

In Section \ref{sectproof}, we will prove that no $\Lambda$-sentence can
distinguish $\Lambda$-Henkin-models of $\vdash_\Lambda^{tree}$
from standard models of $\vdash_\Lambda^{tree}$. More precisely,
we will show that for each $n$, every definably
well-founded $\Lambda$-quasi-tree is $n$-$\Lambda$-equivalent to a finite
tree. In order to give an inductive proof, it will be more convenient
to consider a stronger version of this result concerning a class of finite and infinite Henkin
structures that we call \textit{quasi-forests}. In this section, we give the definition of quasi-forest and we show how they can be combined into bigger quasi-forests using the notion of fusion from Section \ref{sec4}. Whenever quasi forests are finite, we simply call them \emph{finite
forests}. As a simple example, consider a finite tree and remove the root node, then it is no longer a finite tree. Instead it is a finite sequence of trees, whose roots stand in a linear (sibling)
order.\footnote{Note that, as far as roots are concerned, two
nodes can be siblings without sharing any parent. This would not
happen in a quasi tree.} It does not have a unique root, but it does have a
unique \emph{left-most root}. For technical reasons it will be convenient in the definition of quasi forests to add an extra monadic predicate $R$ labeling the roots.

\begin{defi}[$\Lambda$-quasi-forest]
Let $T=(dom(T),<,\prec,P_1,\ldots,\ldots P_n, \mathbb{A}_T)$ be a
$\Lambda$-quasi-tree. Given a node $a$ in $T$, consider the $\Lambda$-substructure of
$T$ generated by the set $\{x ~|~\exists z (a \preceq z\wedge z \leq x)\}$, which is the set formed by $a$
together with all its siblings to the right and their descendants.
The $\Lambda$-quasi-forest $T_a$ is obtained by labeling each
root in this substructure with $R$ ($Rx \Leftrightarrow_{def} \neg
\exists y~ y<x$). Whenever $T$ is a tree, we simply call $T_a$ a
forest.
\end{defi}

We will show in our main proof of completeness that for each $n$ and for each node $a$ in a
definably well-founded $\Lambda$-quasi-tree, the $\Lambda$-quasi-forest $T_a$ is $n$-$\Lambda$-equivalent to a finite forest.
Our proof will use a notion of composition of $\Lambda$-quasi-forests which is a special case of fusion.
Given a single node forest $F_1$ and two $\Lambda$-quasi-forests $F_2$ and
$F_3$, we construct a new $\Lambda$-quasi-forest $\bigoplus^{f^\triangle}(F_1,F_2,F_3)$ by letting
the unique element in $F_1$ be the left-most root, the roots of $F_2$ become the children of this node and the roots of $F_3$ become its siblings to the right. We then derive a corollary of the $\Lambda$-fusion theorem for compositions of $\Lambda$-quasi-forests and use it in Section \ref{sectproof}.

\begin{defi}
Let $\sigma = \{<, \prec, R, P_1, \ldots, P_n\}$, be a relational vocabulary with only monadic predicates except $<$ and $\prec$. Given three
additional monadic predicates $Q_1, Q_2, Q_3$, we define a mapping $f^\triangle$ from $\sigma$ to quantifier-free formulas over $\sigma \cup \{Q_1,
Q_2, Q_3\}$ by letting
\begin{iteMize}{$\bullet$}
\item   $f^\triangle(P_i)=P_i(x_1)$
 \item  $f^\triangle(<) = x_1<x_2 \vee (Q_1(x_1) \wedge Q_2(x_2))$
 \item  $f^\triangle(\prec)= x_1 \prec x_2 \vee (Q_1(x_1) \wedge Q_3(x_2) \wedge R(x_2))$
\item   $f^\triangle(R)=(Q_3(x_1) \wedge R(x_1)) \vee Q_1(x_1)$
\end{iteMize}
\end{defi}
\begin{cor}
Let $F_1$ be a single node forest and $F_2$, $F_3$ $\Lambda$-quasi forests. If $F_2\equiv_\Lambda^n F'_2$ and $F_3\equiv_\Lambda^n F'_3$ then
$\bigoplus^{f^\triangle}(F_1,F_2,F_3)\equiv_\Lambda^n \bigoplus^{f^\triangle}(F_1,F'_2,F'_3)$.
\end{cor}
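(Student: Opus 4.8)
The corollary should follow directly from the Fusion Theorem for $\Lambda$ (the parametrized version, applied with empty parameter sequences, or its non-parametrized corollary such as Corollary~\ref{fotcfusion}, Corollary~\ref{folfpfusion}, and Theorem~\ref{msofusion}), once we observe that $\bigoplus^{f^\triangle}(F_1,F_2,F_3)$ is, by definition, precisely the $f^\triangle$-fusion $\bigoplus^{f^\triangle}_{1\le i\le 3}\frM_i$ of three $\Lambda$-Henkin structures, where $\frM_1=F_1$, $\frM_2=F_2$, $\frM_3=F_3$ (each expanded with the indicator predicates $Q_1,Q_2,Q_3$ as in Definition~\ref{fusion}). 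So the first step is just to unwind the definition of $\bigoplus^{f^\triangle}$ and check that it matches Definition~\ref{fusion} with $k=3$ and the specific map $f^\triangle$; in particular the domain and admissible sets are those of the disjoint union $\biguplus_{1\le i\le 3}\frM_i$, and the relations $<$, $\prec$, $R$, $P_i$ are reinterpreted via the quantifier-free formulas $f^\triangle(<)$, $f^\triangle(\prec)$, $f^\triangle(R)$, $f^\triangle(P_i)$ over $\sigma\cup\{Q_1,Q_2,Q_3\}$.

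**Key steps.** First I would note that $F_1\equiv_\Lambda^n F_1$ trivially (it is literally the same structure), while $F_2\equiv_\Lambda^n F'_2$ and $F_3\equiv_\Lambda^n F'_3$ are the hypotheses; hence the three-tuples of structures $(F_1,F_2,F_3)$ and $(F_1,F'_2,F'_3)$ are componentwise $n$-$\Lambda$-equivalent. Second, I would invoke the appropriate $\Lambda$-Fusion Theorem with $k=3$ and the map $f^\triangle$. For $\Lambda=\mso$ this is the non-parametrized instance obtained by taking all parameter sequences empty in Theorem~\ref{msofusion}; for $\Lambda=\fotc$ it is Corollary~\ref{fotcfusion}; for $\Lambda=\folfp$ it is Corollary~\ref{folfpfusion}. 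In each case the theorem yields
$$\bigoplus^{f^\triangle}_{1\le i\le 3}\frM_i \;\equiv_\Lambda^n\; \bigoplus^{f^\triangle}_{1\le i\le 3}\frN_i,$$
where $(\frM_1,\frM_2,\frM_3)=(F_1,F_2,F_3)$ and $(\frN_1,\frN_2,\frN_3)=(F_1,F'_2,F'_3)$, which is exactly the claim $\bigoplus^{f^\triangle}(F_1,F_2,F_3)\equiv_\Lambda^n\bigoplus^{f^\triangle}(F_1,F'_2,F'_3)$. One should also remark that all the structures involved are $\Lambda$-Henkin structures (so that the Fusion Theorem applies): $F_1$, $F_2$, $F_3$, $F'_2$, $F'_3$ are $\Lambda$-quasi-forests, hence $\Lambda$-Henkin structures by definition, and the fusion of $\Lambda$-Henkin structures is again a $\Lambda$-Henkin structure by Corollaries~\ref{msohenkinfusion}, \ref{fotchenkinfusion}, \ref{folfphenkinfusion}.

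**Main obstacle.** There is essentially no obstacle: the content has all been done in Section~\ref{sec4}, and the map $f^\triangle$ sends each predicate of $\sigma$ to a quantifier-free formula over $\sigma\cup\{Q_1,Q_2,Q_3\}$, so it is a legitimate fusion map in the sense of Definition~\ref{fusion}. The only point requiring a line of care is the bookkeeping that $\bigoplus^{f^\triangle}(F_1,F_2,F_3)$ as described informally (unique element of $F_1$ becomes the left-most root, roots of $F_2$ its children, roots of $F_3$ its right siblings, and the new root-predicate $R$ relabelled accordingly) really does coincide with the fusion under $f^\triangle$ — this is a direct check against the four clauses defining $f^\triangle$ — and the harmless observation that $\bigoplus^{f^\triangle}(F_1,F_2,F_3)$ is again a $\Lambda$-quasi-forest, which however is not needed for the statement of the corollary itself and can be deferred to where it is used in Section~\ref{sectproof}.
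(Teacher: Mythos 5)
Your proposal is correct and coincides with the paper's intended derivation: the paper states this corollary without an explicit proof, treating it as an immediate instance of the $\Lambda$-fusion theorems with $k=3$ and the map $f^\triangle$, which is exactly what you do (including the correct choice of the non-parametrized versions, Corollaries~\ref{fotcfusion} and \ref{folfpfusion}, for \fotc and \folfp, and the reflexivity observation $F_1\equiv_\Lambda^n F_1$). Nothing further is needed.
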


\begin{figure}[!h]
\begin{center}
\begin{tikzpicture}[->,>=stealth',scale=0.8, thin, level/.style={sibling distance=30mm/#1}]
   \node (1) at (-3.5,3) {$F_1$};
   \node (1) at (-6.5,0.5) {$F_2$};
   \node (1) at (9.5,2) {$F_3$};

 \draw(-2,2.5) ellipse (1cm and 0.5cm) ;
 
 \draw(-1.5,0.3) ellipse (4cm and 1.7cm) ;
 
 \draw(6.1,0.7) ellipse (3.5cm and 2.3cm) ;

 \draw[*->] (-2,2.5) -- (-2,1);
 \draw[->] (-2,2.5) -- (1,1);
 \draw[->] (-2,2.5) -- (-5,1);

 \draw[*->] (-2,1) -- (-2,-0.5);
 \draw[*->] (1,1) -- (1,-0.5);
 \draw[->] (1,1) -- (2,-0.5);
 \draw[->] (1,1) -- (0,-0.5);

 \draw[*->] (-5,1) -- (-5,-0.5);

\draw[*->] (4,1) -- (4,-0.5);
\draw[*->] (4,2.5) -- (4,1);

 \draw[*->] (6,2.5) -- (6,1);

 \draw[*->] (8,2.5) -- (8,1);
 \draw[->] (8,1) -- (7,-0.5);
 \draw[*->] (8,1) -- (8,-0.5);
 \draw[->] (8,1) -- (9,-0.5);
   
   \end{tikzpicture}
\caption{A composition of forests using the mapping $f^\triangle$}\label{compforest} \label{jolidessin}
  \end{center}
\end{figure}

Figure \ref{jolidessin} represents a composition of three forests $F_1$, $F_2$, $F_3$ which uses the mapping $f^\triangle$. Only new $<_{ch}$-arrows are represented, linking the unique node in $F_1$ to the root nodes in $F_2$. But new $\preceq$-links have also been added and the roots in $F_3$ have became the siblings to the right of the root in $F_1$. This is implicitly indicated by the left to right organization of the picture.

\subsection{Main Proof of Completeness}
\label{sectproof}

\begin{lem}
  For all $n\in\mathbb{N}$, every definably well-founded $\Lambda$-quasi-tree of
  finite signature is $n$-$\Lambda$-equivalent to a finite tree. In particular,
  a $\Lambda$-sentence is valid on definably well-founded $\Lambda$-quasi-trees iff it is
  valid on finite trees.
\label{lemma3}
\end{lem}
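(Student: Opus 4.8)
The plan is to prove the stronger statement announced just before the lemma, from which the lemma follows easily: \emph{for every $n$ and every node $a$ of a definably well-founded $\Lambda$-quasi-tree $T$ of finite signature, the $\Lambda$-quasi-forest $T_a$ is $n$-$\Lambda$-equivalent to a finite forest}. Granting this, let $r$ be the root of $T$. Since $r$ has no siblings, $T_r$ is just $T$ itself expanded with the predicate $R$ interpreted as $\{r\}$. Applying the strengthened claim with $m=\max(n,2)$ yields a finite forest $F$ with $F\equiv^m_\Lambda T_r$; as $m\geq 2$, $F$ satisfies the quantifier-depth-$2$ sentence ``$R$ holds of exactly one element'', hence has a unique root, hence is the $R$-expansion of a finite tree, and forgetting $R$ gives a finite tree $n$-$\Lambda$-equivalent to $T$. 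The ``in particular'' clause is then immediate: every finite tree is itself a (finite, hence standard) definably well-founded $\Lambda$-quasi-tree, so validity on all such quasi-trees implies validity on finite trees, and conversely a $\Lambda$-sentence $\chi$ of quantifier depth $n$ valid on all finite trees holds in any definably well-founded $\Lambda$-quasi-tree because the latter is $n$-$\Lambda$-equivalent to one.

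To prove the strengthened claim I would argue by the induction scheme Ind, which is available precisely because $T$ is definably well-founded. The conceptually central first step is to observe that, for each fixed $n$, the property ``$T_x$ is $n$-$\Lambda$-equivalent to some finite forest'' is expressible by a single $\Lambda$-formula $\varphi(x)$ over the signature $\sigma$ of $T$. Indeed, the domain of $T_a$ is the parametrically definable set $A=\{y\mid\exists z(a\preceq z\wedge z\leq y)\}$, and inside $T_a=T\upharpoonright A$ the predicate $R$ is itself definable (its $R$-nodes are exactly the $<$-minimal elements of $A$); so, eliminating $R$ and then applying the Relativization Lemma~\ref{rel} to push truth in $T\upharpoonright A$ back to truth in $T$, every $\Lambda$-sentence $\chi$ over $\sigma\cup\{R\}$ of quantifier depth $\leq n$ translates uniformly into a $\Lambda$-formula $\chi^{\ast}(x)$ with $T,g[a/x]\models\chi^{\ast}$ iff $T_a\models\chi$. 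By the Finiteness Lemma~\ref{finitenesslemma} there are, up to equivalence, only finitely many such $\chi$, hence only finitely many $n$-$\Lambda$-types; for each type $\tau$ put $\psi_\tau(x)=\bigwedge_{\chi\in\tau}\chi^{\ast}(x)\wedge\bigwedge_{\chi\notin\tau}\neg\chi^{\ast}(x)$, and set $\varphi(x)=\bigvee\{\psi_\tau(x)\mid\tau\text{ is the }n\text{-}\Lambda\text{-type of some finite forest}\}$, a legitimate $\Lambda$-formula.

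The inductive step is a structural decomposition. Fix $a$ and assume $\varphi(y)$ for every $y$ with $a<y$ or $a\prec y$; I want $\varphi(a)$. Let $a_1$ be the $\prec$-least child of $a$ (it exists by T9 whenever $a$ has a child) and $a'$ the immediate next sibling of $a$ (it exists by T8 whenever $a$ has a right sibling). Unwinding the definition of quasi-forest, $T_a$ is isomorphic to $\bigoplus^{f^\triangle}(\{a\},T_{a_1},T_{a'})$: the mapping $f^\triangle$ makes the single node $a$ the left-most root, hangs the roots of $T_{a_1}$ — which are exactly the children of $a$ — below it, and turns the roots of $T_{a'}$ — the right siblings of $a$ — into its right siblings, which is precisely how $T_a$ is assembled from $a$, its children-forest and its sibling-forest. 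Since $a<a_1$ and $a\prec a'$, the induction hypothesis supplies finite forests $F_2\equiv^n_\Lambda T_{a_1}$ and $F_3\equiv^n_\Lambda T_{a'}$, and the preceding Corollary (the $\Lambda$-fusion theorem for $f^\triangle$-compositions) gives $T_a\equiv^n_\Lambda\bigoplus^{f^\triangle}(\{a\},F_2,F_3)$, which is a finite forest; hence $\varphi(a)$. When $a$ has no child, resp.\ no right sibling, the corresponding component is simply omitted and the same argument applies; in the extreme case where $a$ is a leaf and a right-most sibling the hypothesis of Ind is vacuous and $T_a=\{a\}$ is already finite, which serves as the base case. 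Applying Ind to $\varphi$ yields $\forall x\,\varphi(x)$.

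The main obstacle I anticipate is the first step of the inductive argument — showing that ``$T_x$ is $n$-$\Lambda$-equivalent to a finite forest'' is itself a $\Lambda$-formula — since this is exactly what licenses the use of Ind, which ranges only over $\Lambda$-instances of the scheme. Carrying it out requires the careful bookkeeping of eliminating $R$ inside $T_a$, composing the relativization mapping with the parametric definition of $A$ (so that the translated formulas keep $a$ as their only free variable), and invoking the finiteness of the type space so that the disjunction defining $\varphi$ is finite. By comparison the decomposition step is routine once one has verified the isomorphism $T_a\cong\bigoplus^{f^\triangle}(\{a\},T_{a_1},T_{a'})$ and has the $f^\triangle$-fusion corollary in hand, the only minor nuisance being the uniform treatment of the degenerate cases where a component is empty.
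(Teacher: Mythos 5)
Your proposal is correct and follows essentially the same route as the paper: the paper's proof likewise defines the set $X_n$ of nodes $a$ for which $T_a$ is $n$-$\Lambda$-equivalent to a finite forest, shows this set is $\Lambda$-definable via the Relativization and Finiteness Lemmas, decomposes $T_a$ as $\bigoplus^{f^\triangle}(T\upharpoonright\{a\},T_b,T_c)$ with $b$ the first child and $c$ the immediate next sibling, and concludes with the induction scheme Ind. The one step you flag but do not carry out --- the isomorphism $T_a\cong\bigoplus^{f^\triangle}(T\upharpoonright\{a\},T_{a_1},T_{a'})$ --- is exactly where the paper does a little extra work: it checks that the admissible sets of the two Henkin structures coincide, i.e.\ $\mathbb{A}_{T_a}=\mathbb{A}_{\bigoplus^{f^\triangle}(T\upharpoonright\{a\},T_b,T_c)}$, which is the only point at which the claim is not purely about the underlying relational structures.
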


\begin{proof}
  Let $T$ be a $\Lambda$-quasi-tree, without loss of generality assume that a monadic predicate $R$ labels its root (and only that node in the tree). During this proof, it will be convenient to work with
  $\Lambda$-quasi-forests. Note that finite $\Lambda$-quasi-forests
  are simply finite forests and finite $\Lambda$-quasi-trees are
  simply finite trees (cf.~Proposition~\ref{prop:define-finite} for
  the case of quasi-trees, from which the case for quasi-forests
  follows immediately). Let $X_n$ be the set of all nodes $a$ of $T$ for
  which it holds that $T_a$ is $n$-$\Lambda$-equivalent to a finite
  forest. We first show that `belonging to $X_n$'' is a property definable in $T$ (Claim $1$).
  We then use the induction scheme to show that every node of a definably well-founded $\Lambda$-quasi-tree (so in particular, the root) has this property (Claim $2$).
  \medskip\par\noindent\emph{Claim 1:}
    $X_n$ is invariant for $n+1$-$\Lambda$-equivalence (i.e.,
    $(T,a)\equiv_{n+1}^\Lambda (T,b)$ implies that $a\in X_n$ iff $b\in X_n$), and
    hence is defined by a $\Lambda$-formula of
    quantifier depth $n+1$.

  \begin{proofofclaim}
    Suppose that $(T,a)\equiv_{n+1}^\Lambda(T,b)$.
    We will show that $T_a\equiv_{n}^\Lambda T_b$, and hence, by the
    definition of $X_n$, $a\in X_n$ iff $b\in X_n$.
     By the definition of $\Lambda$-quasi-forests, $dom(T_a)=\{x ~|~\exists z (a \preceq z\wedge z \leq x)\}$. Let $\varphi$ be any $\Lambda$-sentence of quantifier depth $n$. We can assume without loss of generality that $\varphi$ does not contain the variables $z$ and $x$ (otherwise we can rename in $\varphi$ these two variables). By lemma \ref{rel}, $(T,a)\models REL(\varphi,\exists z (a \preceq z\wedge z \leq x),x)$ iff $T_a \models \varphi$.
    Notice that $REL(\varphi,\exists z (a \preceq z\wedge z \leq x),x)$ expresses precisely that $\varphi$
    holds in $(T,a)$ within the subforest $T_a$. Moreover, the quantifier depth of
    $REL(\varphi,\exists z (a \preceq z\wedge z \leq x)$ is at most $n+1$. It follows that $(T,a)\models REL(\varphi,\exists z (a \preceq z\wedge z \leq x),x)$
    iff $(T,b)\models REL(\varphi,\exists z (b \preceq z\wedge z \leq x),x)$, and hence $T_a\models\varphi$ iff
    $T_b\models\varphi$.

    For the second part of the claim, note that by Lemma \ref{finitenesslemma}, up to logical equivalence, there are only
    finitely many $\Lambda$-formulas of any given quantifier depth,
    as the vocabulary is finite.
  \end{proofofclaim}

  \smallskip\par\noindent\emph{Claim 2:} If all descendants and siblings
  to the right of $a$ belong to $X_n$, then $a$ itself
  belongs to $X_n$.

  \begin{proofofclaim}
    Let us consider the case where $a$ has both a
    descendant and a following sibling (all other cases are
    simpler). Then, by axioms T3, T5, T8, T9 and T10, $a$ has a first child $b$, and
    an immediate next sibling $c$. Moreover, we know that both $b$ and
    $c$ are in $X_n$. In other words, $T_b$ and $T_c$ are
    $n$-$\Lambda$-equivalent to finite forests $T'_b$ and $T'_c$. Now, we
    construct a finite $\Lambda$-quasi-forest $T'_a$ by taking a $f^\triangle$-fusion of $T'_b$, $T'_c$ and of the $\Lambda$-substructure of $T$ generated by $\{a\}$, whose unique element becomes a
    common parent of all roots of $T'_b$ and a left sibling of all
    roots of $T'_c$. So we get $T'_a = \bigoplus^{f^\triangle}(T \upharpoonright \{a\},T'_b,T'_c)$).
    It is not hard to see that $T'_a$ is again a finite
    forest. Moreover, by the fusion theorem,
    $\bigoplus^{f^\triangle}(T \upharpoonright \{a\},T_b,T_c)) \equiv_n^\Lambda
    T'_a$.
    Now to show that $\bigoplus^{f^\triangle}(T \upharpoonright
    \{a\},T_b,T_c))$ is isomorphic to $T_a$ (which entails $T_a \equiv_n^\Lambda T'_a$ i.e. $T_a$ is $n$-$\Lambda$-equivalent to a finite forest),
    it is enough to show $\mathbb{A}_{T_a}=\mathbb{A}_{\bigoplus^{f^\triangle}(T \upharpoonright \{a\},
    T_b,T_c)}$. It holds that $\mathbb{A}_{\bigoplus^{f^\triangle}(T \upharpoonright \{a\},
    T_b,T_c)} \subseteq \mathbb{A}_{T_a}$ because we can define in $T_a$ each such union of sets by means
    of a disjunction. Now to show $\mathbb{A}_{T_a} \subseteq \mathbb{A}_{\bigoplus^{f^\triangle}(T\upharpoonright \{a\},
    T_b,T_c)}$, take $A \in \mathbb{A}_{T_a}$, so $A=X \cap T_a$ for some $X \subseteq \mathbb{A}_T$. As $dom(T\upharpoonright \{a\})$,
    $dom(T_b)$ and $dom(T_c)$ are all definable in $T$, the intersection of each of these sets with $A$ is also definable in $T$ and hence 
    $A \cap dom(T\upharpoonright \{a\})$ is definable in $dom(T\upharpoonright \{a\})$, $A \cap dom(T_b)$ is definable in $T_b$ and 
    $A \cap dom(T_c)$ is definable in $T_c$. But then $A$, which can be formed as the union of these three sets, is also definable in $\mathbb{A}_{\bigoplus^{f^\triangle}(T\upharpoonright \{a\},
    T_b,T_c)}$.
    
    \end{proofofclaim}

  It follows from these two claims, by the induction scheme for
  definable properties, that $X_n$ contains all nodes of the $\Lambda$-quasi-tree,
  including the root, and hence $T$ is $n$-$\Lambda$-equivalent to a finite
  tree (to a finite forest actually, but the root of the $\Lambda$-quasi-tree being labeled by $R$, it can also be viewed as a  $\Lambda$-quasi-forest).
  For the second statement of the lemma, it suffices to
  note that every $\Lambda$-sentence has a finite vocabulary and a finite
  quantifier depth.
\end{proof}

\begin{thm}\label{thm:L-completeness}
  Let $\Lambda \in \{\mso,\fotc,\folfp\}$. The $\Lambda$-theory of finite trees is completely axiomatized by $\vdash_\Lambda^{tree}$.
\end{thm}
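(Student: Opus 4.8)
The plan is to establish both halves of the claim ``$\vdash_\Lambda^{tree}\varphi$ iff $\varphi$ is valid on all finite trees''. The soundness half --- that every $\vdash_\Lambda^{tree}$-provable sentence holds on each finite tree --- is routine: one verifies that every logical axiom of Figures~\ref{fig:FO-axioms}--\ref{fig:folfp-axioms} is valid on arbitrary structures, that the structural axioms T1--T10 hold on every finite tree (one direction of Proposition~\ref{prop:define-finite}), that every $\Lambda$-instance of the induction scheme Ind holds on a finite tree (because the ``ancestor-or-earlier-sibling'' relation is well founded there), and that Modus Ponens and the generalization rules preserve validity.

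For the completeness half I would argue contrapositively. Suppose $\varphi$ is a $\Lambda$-sentence with $\not\vdash_\Lambda^{tree}\varphi$. Then $\neg\varphi$ is $\vdash_\Lambda^{tree}$-consistent, so by Corollary~\ref{henktreecomp} there is a definably well-founded $\Lambda$-quasi-tree $T$ with $T\models\neg\varphi$. Let $n$ be the quantifier depth of $\varphi$. By Lemma~\ref{lemma3}, $T$ is $n$-$\Lambda$-equivalent to some finite tree $\frM$; since $\frM$ is finite it is a standard structure (Remark~\ref{rem}), so Henkin and standard satisfaction coincide on it. As $\neg\varphi$ has quantifier depth $\leq n$ and is true in $T$, it is also true in $\frM$. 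Hence $\varphi$ fails on the finite tree $\frM$, so $\varphi$ is not valid on the class of finite trees. Contraposing yields that every $\Lambda$-sentence valid on all finite trees is $\vdash_\Lambda^{tree}$-provable.

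It is worth emphasising where the real content of this argument sits: the theorem itself is essentially a bookkeeping assembly, since all the heavy lifting has been done in the preceding sections. The Henkin completeness theorem (Theorem~\ref{hencomp}) and its corollary for the tree axioms (Corollary~\ref{henktreecomp}) supply a quasi-tree model of any consistent theory, while Lemma~\ref{lemma3} --- itself built on the relativization lemma (Lemma~\ref{rel}), the $\Lambda$-fusion theorems of Section~\ref{apc}, and an application of the induction scheme through its two internal claims --- closes the gap between (possibly infinite) definably well-founded quasi-trees and genuine finite trees. The only subtlety in the present proof is to be sure to invoke Lemma~\ref{lemma3} at a value of $n$ at least as large as the quantifier depth of the sentence being refuted, so that the finite tree obtained really does witness the failure of $\varphi$.
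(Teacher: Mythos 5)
Your proposal is correct and follows exactly the paper's route: the paper's own proof is the one-line observation that the theorem ``follows directly from Lemma~\ref{lemma3} and Corollary~\ref{henktreecomp}'', which is precisely the contrapositive argument you spell out (consistency of $\neg\varphi$ yields a definably well-founded quasi-tree model, which Lemma~\ref{lemma3} replaces by an $n$-$\Lambda$-equivalent finite tree for $n$ the quantifier depth of $\varphi$). Your additional remarks on soundness and on choosing $n$ large enough are sensible elaborations of details the paper leaves implicit.
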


\begin{proof}
Theorem~\ref{thm:L-completeness} follows directly from Lemma \ref{lemma3} and Corollary \ref{henktreecomp}.
\end{proof}

\subsection{Definability of the Class of Finite Trees}
\label{sec53}

Proposition \ref{tcdeftree} below shows together with Theorem~\ref{thm:L-completeness} that on standard structures, the set of $\vdash_\Lambda^{tree}$ consequences actually defines the (not \fo-definable) class of finite trees. That is, $\vdash_\Lambda^{tree}$ has \emph{no infinite standard model} at all.

\begin{prop}[\cite{2006}]
Let $\Lambda \in \{\fotc,\folfp,\mso\}$. On standard structures, there is a $\Lambda$-formula which defines the class of finite trees.
\label{tcdeftree}
\end{prop}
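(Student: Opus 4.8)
The plan is to reduce to the case $\Lambda=\fotc$: since the translations of Section~\ref{exppow} turn any $\fotc$-sentence into $\folfp$- and $\mso$-sentences that are equivalent to it on standard structures, it is enough to exhibit a single $\fotc$-sentence $\phi_{\mathit{fin}}$ that holds in a standard structure precisely when that structure is a finite tree. Following \cite{2006}, $\phi_{\mathit{fin}}$ will be the conjunction of the axioms T1--T10 of Figure~\ref{fig:tree-axioms} with an extra assertion of finiteness phrased in terms of a document-order (depth-first, left-to-right) traversal. The key observation is that on any model of T1--T10 (i.e.\ any $\Lambda$-quasi-tree) one can write an \fo-formula $\mathrm{succ}(x,y)$ saying ``$y$ is the immediate successor of $x$ in document order'': $\mathrm{succ}(x,y)$ holds iff either $y$ is the $\prec$-least child of $x$, or $x$ is a leaf and $y$ is the immediate next sibling of the deepest node $a\leq x$ that has a next sibling and is such that every node strictly $<$-between $a$ and $x$ (including $x$) is a last child. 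On any quasi-tree this $\mathrm{succ}$ is a partial injective function whose unique source is the root, since the root has no parent and is the next sibling of no node.

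Concretely I would take, writing $\mathrm{root}(x)$ for $\forall y\,(x\leq y)$,
\[
\phi_{\mathit{fin}}\ :=\ \mathrm{T1}\wedge\cdots\wedge\mathrm{T10}\ \wedge\ \exists z\,\neg\exists y\,\mathrm{succ}(z,y)\ \wedge\ \forall r\,\forall x\,\big(\mathrm{root}(r)\to (\mathrm{root}(x)\vee[TC_{uv}\mathrm{succ}(u,v)](r,x))\big).
\]
Here we rely on the \emph{standard} semantics of $TC$ (Proposition~\ref{tcaltsem}): the last conjunct says that every node is reached from the root by a \emph{finite} $\mathrm{succ}$-path, while the third conjunct asserts the existence of a node with no $\mathrm{succ}$-successor.

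For correctness, the forward direction is routine: if $\frM$ is a finite tree, its document order enumerates all of its elements, starting at the root and ending at a node with no successor, so $\frM\models\phi_{\mathit{fin}}$. For the converse, suppose $\frM\models\phi_{\mathit{fin}}$; by Proposition~\ref{prop:define-finite} it suffices to prove $\frM$ finite. Since $\mathrm{succ}$ is a partial injective function on $\frM$ whose unique source is the root $r$, and $r$ lies in no $\mathrm{succ}$-cycle, the set $\{x:[TC_{uv}\mathrm{succ}(u,v)](r,x)\}\cup\{r\}$ of nodes finitely $\mathrm{succ}$-reachable from $r$ is arranged by $\mathrm{succ}$ into a single chain $r=a_0,a_1,a_2,\dots$, which is either finite or order-isomorphic to $\mathbb{N}$; the last conjunct forces this chain to exhaust $\frM$, and the existence of a node with no $\mathrm{succ}$-successor (necessarily a maximal element of the chain) rules out the $\mathbb{N}$ case. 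Hence $\frM$ is finite, and therefore a finite tree.

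The main obstacle is not the logic but the careful bookkeeping in defining $\mathrm{succ}$ so that it is \emph{provably} a partial injective function on \emph{arbitrary} quasi-trees, including those with infinitely-branching nodes where no ``last child'' need exist (so a node may simply lack a document predecessor). One has to check that this causes no damage: it only removes some nodes from the $\mathrm{succ}$-chain, and the argument uses only that whatever is reachable from $r$ forms one chain. It is worth emphasizing that, despite appearances, the proof does not invoke K\"onig's Lemma: once all nodes lie on a single $\mathrm{succ}$-chain issuing from the root, finiteness of that chain is equivalent to the first-order-expressible fact that it has a maximal element.
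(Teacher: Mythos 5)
Your proposal is correct and follows essentially the same route as the paper's (sketched) proof: reduce to \fotc via the translations of Section~\ref{exppow}, define the depth-first document-order successor by a first-order formula over models of T1--T10, and use the standard semantics of $TC$ (Proposition~\ref{tcaltsem}) to assert finiteness of the resulting traversal. The only real difference is the final conjunct --- the paper writes $\exists u\exists z(Rz \wedge [TC_{xy}\varphi](z,u)\wedge\neg\exists u'(u\neq u'\wedge[TC_{xy}\varphi](u,u')))$, asserting the existence of a \emph{last} node reachable from the root, whereas you assert exhaustiveness of the traversal together with the existence of a successor-less node, which makes your converse direction a bit easier to verify since you need not argue that termination of the traversal already forces it to visit every node.
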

\proof[Sketch of the proof]
It is enough to show it for $\Lambda=\fotc$. It follows by Section \ref{exppow} that it also holds for \mso and \folfp.
We merely give a sketch of the proof. For additional details we
refer the reader to \cite{2006}. 

Recall from Proposition~\ref{prop:define-finite} that, on finite standard
structures, the finite conjunction of the axioms T1--T10 in
Figure~\ref{fig:tree-axioms} defines the class of
finite trees, i.e., any finite structure satisfying this
conjunction is a finite tree. Now we will explain how to construct
another sentence, which together with this one, actually defines
on arbitrary standard structures the class of finite trees. Let
$L$ be a shorthand for the formula labeling the leaves in the
tree ($Lx \Leftrightarrow_{def} \neg \exists y~ x<y$) and $R$ a
shorthand for the formula labeling the root ($Rx
\Leftrightarrow_{def} \neg \exists y~  y<x$). Consider the
depth-first left-to-right ordering of nodes in a tree and the
\fotc formula $\varphi(x,y)$ saying ``the node that comes after
$x$ in this ordering is $y$'':
\begin{center}
$\varphi(x,y):\approx (\neg Lx \wedge x <_{ch}y \wedge \neg
\exists z z \prec y) \vee(Lx \wedge x \prec_{ns}y) \vee(Lx \wedge
\neg \exists z x \prec z \wedge \exists z(z<x \wedge z \prec_{ns}y
\wedge \neg \exists w w<x \wedge z<w \wedge \exists u w \prec_{ns}
u))$
\end{center}
There is also a \fotc formula which says that ``x is the very last
node in this ordering''. $\varphi(x,y)$ can be combined with this
formula into an $\fotc$ formula $\chi$ expressing that the tree is
finite by saying that (we rely here for the interpretation of
$\chi$ on the alternative semantics for the $TC$ operator given in
Proposition \ref{tcaltsem}) ``there is a finite sequence of nodes
$x_1 \ldots x_n$ such that $x_1$ is the root, $x_{i+1}$ the node
that comes after $x_i$ in the above ordering, for all $i$, and
$x_n$ is the very last node of the tree in the above ordering''.
\[
\chi:\approx \exists u \exists z (Rz \wedge
[TC_{xy}\varphi](z,u)\wedge \neg \exists u' (u \neq u' \wedge
[TC_{xy} \varphi](u,u')))\eqno{\qEd}
\]

\begin{thm}\label{thm:define}
For $\Lambda\in\{\mso,\fotc,\folfp\}$, the set of axioms T1--T10 together
with all $\Lambda$-instances of the induction scheme Ind defines the
class of finite trees.
\end{thm}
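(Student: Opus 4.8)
The plan is to prove the two inclusions separately. For one direction, I would observe that every finite tree is a standard structure, hence in particular a $\Lambda$-Henkin structure, and that on finite structures the induction scheme Ind is automatically valid (since on a finite strict order with the subtree/sibling relation well-founded, a minimal counterexample argument works). Combined with Proposition~\ref{prop:define-finite}, which already tells us that a finite structure satisfies T1--T10 iff it is a finite tree, this gives that every finite tree satisfies T1--T10 together with all $\Lambda$-instances of Ind. So the soundness direction is essentially immediate from Proposition~\ref{prop:define-finite} and the elementary observation about induction on finite structures.

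For the converse direction --- the interesting one --- I would argue that any \emph{standard} structure $\frM$ satisfying T1--T10 and all $\Lambda$-instances of Ind is a finite tree. First, by definition such an $\frM$, viewed as a Henkin structure with $\mathbb{A}_\frM=\wp(dom(\frM))$, is a $\Lambda$-quasi-tree (it satisfies T1--T10), and since it satisfies all $\Lambda$-instances of Ind it is in fact a \emph{definably well-founded} $\Lambda$-quasi-tree. Now I would invoke Lemma~\ref{lemma3}: for every $n$, a definably well-founded $\Lambda$-quasi-tree is $n$-$\Lambda$-equivalent to a finite tree. The key point is that ``being infinite'' can be refuted: by Proposition~\ref{tcdeftree} there is a single $\Lambda$-sentence $\chi'$ (the conjunction of T1--T10 with the sentence $\chi$ of the sketch) that holds in a standard structure iff it is a finite tree. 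If $\frM$ were infinite, then $\frM\not\models\chi'$ (being standard and satisfying T1--T10 but not finite, it fails $\chi$), so $\neg\chi'$ is a $\Lambda$-sentence of some fixed quantifier depth $n_0$ true in $\frM$; but by Lemma~\ref{lemma3} applied with $n=n_0$, $\frM\equiv^{n_0}_\Lambda T$ for some finite tree $T$, and $T\models\chi'$, contradiction. Hence $\frM$ is finite, and then Proposition~\ref{prop:define-finite} shows it is a finite tree.

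I expect the main subtlety to be bookkeeping about semantics rather than a deep obstacle: one must be careful that Lemma~\ref{lemma3} and Proposition~\ref{tcdeftree} are being applied on the \emph{same} class of structures (standard structures, equivalently standard Henkin structures) and that the quantifier-depth of $\neg\chi'$ is genuinely fixed and finite so that the single application of Lemma~\ref{lemma3} suffices. A minor point to check is that an infinite standard structure satisfying T1--T10 really does fail $\chi$ (i.e., the depth-first left-to-right enumeration argument of Proposition~\ref{tcdeftree} genuinely shows $\chi$ fails on every infinite such structure, not merely that $\chi$ holds on finite ones); but this is exactly what the construction in the sketch of Proposition~\ref{tcdeftree} delivers. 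With these points in place the argument is short, essentially reducing Theorem~\ref{thm:define} to the combination of Lemma~\ref{lemma3}, Proposition~\ref{tcdeftree}, and Proposition~\ref{prop:define-finite}.
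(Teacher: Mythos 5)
Your proposal is correct; both directions go through, but your route to the hard direction is genuinely different from the paper's. The paper's proof is a two-line corollary of the completeness theorem: since Proposition~\ref{tcdeftree} provides a single $\Lambda$-sentence $\chi$ defining the class of finite trees among standard structures, $\chi$ is valid on finite trees, hence provable in $\vdash_\Lambda^{tree}$ by Theorem~\ref{thm:L-completeness}, hence (by soundness) true in every standard model of T1--T10 together with Ind, which forces every such model to be a finite tree. You instead bypass the proof system entirely: you view a standard model of the axioms as a definably well-founded $\Lambda$-quasi-tree (with the full powerset as admissible sets) and apply Lemma~\ref{lemma3} directly, with $n_0$ the quantifier depth of $\chi$, to get $n_0$-$\Lambda$-equivalence to a finite tree, and then transfer $\chi$ back. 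Since Theorem~\ref{thm:L-completeness} is itself deduced from Lemma~\ref{lemma3}, the two arguments share the same mathematical core; yours is the purely semantic unwinding of the paper's syntactic shortcut, and has the merit of not needing the Henkin completeness machinery (Corollary~\ref{henktreecomp}) at all, while the paper's is shorter once Theorem~\ref{thm:L-completeness} is in hand. Two small remarks: the ``minor point'' you flag about $\chi$ failing on infinite structures needs no separate check, since Proposition~\ref{tcdeftree} asserts that the sentence \emph{defines} the class of finite trees among all standard structures, so it is automatically false in any infinite one; and your soundness direction (finite trees validate Ind because the union of the relations $<$ and $\prec$ is acyclic on a finite tree, so a maximal counterexample would exist) is correct and makes explicit a step the paper leaves implicit.
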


\begin{proof}
By Proposition \ref{tcdeftree} we can express in $\Lambda$ by means of some formula $\chi$ that a structure is a finite tree. So $\chi$
is provable in $\vdash_\Lambda^{tree}$ (as it is a $\Lambda$-formula
valid on the class of finite trees). In other words, if $\Gamma$ is
the set of axioms T1--T10 together
with all $\Lambda$-instances of the induction scheme Ind, then we have 
that $\Gamma\vdash_\Lambda \chi$.
\end{proof}

\section{Finite Linear Orders}
\label{secisftseclo}

Let us note that a simplified version of this method can be used in order to show the completeness of \mso, \fotc and \folfp on finite node-labeled linear orders (i.e., finite node-labeled trees in which every node has at most one child). The relevant simpler axioms are the ones listed in Figures \ref{fig:FO-axioms}, \ref{fig:lo-axioms} and respectively, Figures \ref{fig:MSO-axioms}, \ref{fig:fotc-axioms} and \ref{fig:folfp-axioms}.

\begin{figure}[!h]
\hrule\smallskip
\begin{tabular}{@{}ll@{\hspace{14mm}}l@{}}
L1. & $\forall xyz(x<y \land y<z \to x<z)$ & $<$ is transitive \\
L2. & $\neg \exists x (x< x)$ & $<$ is irreflexive \\
L3. &$\forall xy(x<y \to \exists z (x<_{ch}z \land z\leq y))$ & immediate children\\
L4. &$\exists x \forall y \neg(y < x)$ & there is a root \\
L5. &$\forall xy(x=y \lor x<y \lor y<x)$ & $<$ is total \\
Ind. &  $\forall x(\forall y((x <y \to \varphi(y)) \to \varphi(x)) \to \forall x\varphi(x)$\\
 & \\
where &  \\
 & $\varphi(x)$ ranges over $\Lambda$-formulas in one free variable $x$\\
and & \\
  & \multicolumn{2}{l}{$x<_{ch}y$ is shorthand for $x<y \land \neg \exists z(z<y\wedge x<z)$} \\
 \end{tabular}
\smallskip\hrule
\caption{Specific axioms on finite linear orders} 
\label{fig:lo-axioms}
\end{figure}

\section{Conclusion}
\label{conclmso}

In this paper, taking inspiration from Kees Doets \cite{1987} we developed a uniform method for obtaining complete axiomatizations of fragments of \mso on finite trees. For that purpose, we had to adapt classical tools and notions from finite model theory to the specificities of Henkin semantics. The presence of admissible subsets called for some refinements in model theoretic constructions such as formation of substructure or disjoint union. Also, we noticed that not every Ehrenfeucht-Fra\"{\i}ss\'e game that has been used for \fotc was suitable to use on Henkin-structures. We focused on a game which does not seem to have been used previously in the literature. We also established analogues of the \fo Feferman-Vaught theorem for \mso, \fotc and \folfp on Henkin-structures (let us recall that related work for the case of standard structures can be found in \cite{2004}). We considered fusions, a particular case of the Feferman-Vaught notion of generalized product and obtained results for Henkin-structures which might be interesting to generalize and use in other contexts.

We applied our method to \mso, \fotc and \folfp, but it would be worth also examining other fragments of \mso or logics such as monadic deterministic transitive closure logic (\fodtc, which was advocated in \cite{1219706} as particularly relevant in the context of applications to model-theoretic syntax) or monadic alternating transitive closure logic (\foatc), see also \cite{1992}.

An important feature of our main completeness argument (the idea of which was borrowed from Kees Doets) is the way we used the inductive scheme of Figure \ref{fig:tree-axioms}. Hence, extending our approach to another class of finite structures would involve finding a comparable scheme. We also know that we should focus on a logic which is decidable on this class, as on finite structures recursive enumerability is equivalent to decidability (as long as the model-checking is decidable). This suggests that other natural candidates would be fragments of \mso on classes of finite structures with bounded treewidth.

Finally, let us notice that \mso is also known to be decidable over infinite trees and over linear orders of order type $\omega$. It would be interesting to look for a model-theoretic argument which would work on a Henkin model and produce an intended model of one of these theories in a way comparable to what we did here or to what Keisler did in \cite{keisler}.
Note that related complete axiomatizations of monadic theories of classes of infinite structures can be found in \cite{siefkesbuchi},  \cite{siefkes} and \cite{zaiontz}, but that instead of relying on Henkin-semantics, the completeness proofs there are based on automata-theoretic techniques.

\end{document}